\documentclass[11pt]{article}
\usepackage{amsmath, amssymb, amscd, amsthm, amsfonts}
\usepackage{graphicx}
\usepackage{hyperref}
\usepackage{here}
\usepackage[affil-it]{authblk}
\usepackage{caption}
\usepackage[utf8]{inputenc}
\usepackage{booktabs}
\usepackage{subcaption}
\usepackage[round]{natbib}
\usepackage{bm}

\usepackage{authblk}   
\usepackage{footmisc}  

\title{Tree-Embedded Bayesian Factor Models for Multidimensional Categorical Distributions}

\author[1]{Naoki Awaya$^*$}
\author[2]{Keisuke Sasaki$^*$}
\author[3]{Shonosuke Sugasawa}
\author[4]{Genya Kobayashi}

\affil[1]{School of Political Science and Economics, Waseda University}
\affil[2]{Graduate School of Economics, The University of Tokyo}
\affil[3]{Department of Economics, Keio University}
\affil[4]{School of Commerce, Meiji University}

\date{}

\usepackage{xcolor}

\newtheorem{thm}{Theorem}[section]
\newtheorem{lem}{Lemma}[section]
\newtheorem{cor}{Corollary}[section]
\newtheorem{prop}{Proposition}[section]

\usepackage{amsmath}               
  {
      \theoremstyle{plain}
      \newtheorem{assumption}{Assumption}
  }

\newcommand{\Polya}{P\'{o}lya\ }

\begin{document}
\maketitle

\begin{center}
\vspace{-2em}
\footnotesize{
$^*$These authors contributed equally to this work.
}
\end{center}
\begin{abstract}
    Analyzing data collected from multiple observational units to estimate common and heterogeneous structures through a hierarchical model is a central task in Bayesian inference, and to this end, Bayesian factor models are one of the most widely used tools for this purpose.
    In this paper, we propose a novel Bayesian latent factor model for categorical distributions from grouped data, providing a parsimonious model for describing many observed distributions through lower-dimensional structures.
    Grouped data arise in a wide range of applications in social science, for example, distributions of age composition and income observed across locations.
    In these contexts, standard mixture models can be inefficient because the distributions do not necessarily exhibit clear clustering structures, and the distributions can be more accurately approximated as a combination of lower-dimensional characteristics.
    To analyze distribution-valued data with the Bayesian factor analysis, we adopt a tree-based transformation that embeds distributions into a Euclidean space and construct a Bayesian latent factor model in the transformed space. 
    We develop the hierarchical model by incorporating the infinite factor model, which can adaptively estimate the number of effective factors. 
    In addition, we propose its generalization by incorporating a spatial dependence by introducing a prior based on a simultaneous autoregressive (SAR) model.
    The proposed model provides smooth estimates of multivariate distributional structures, because once a tree-based transformation is applied, both univariate and multivariate distributions are essentially treated as the same Euclidean vectors.
    Through numerical experiments using real population data, we demonstrate that the proposed model outperforms existing parametric and Bayesian nonparametric models in various scenarios involving smooth spatial variations, especially under small sample sizes.
\end{abstract}

\section{Introduction}
Data analysis in a wide range of applications encounters the type of datasets in which each observation itself is a probability distribution rather than a single scalar or vector. 
A motivating example of our research arises in population studies collected in a social science context, where each location is characterized by an empirical distribution describing the demographic composition of stay-population across age and gender categories. 
Similar distribution-valued observations appear in other applications in social science, such as income distributions over predefined income classes across regions or grouped responses in large-scale surveys.
In these contexts, the primary research interest is not in estimating each distribution separately, but in understanding how distributions vary across units and whether such variation can be summarized through a lower-dimensional structure. 
In the population data example, it is natural to consider that the distributions across locations are not independent and to assume that the variability in the distributions can be explained by (possibly a few) characteristics,  such as the ratios of population in young, working, and elderly categories.
Developing a statistical tool that extracts such lower-dimensional structures under the intrinsic constraints of probability distributions is therefore of central importance.

To combine information across multiple observational units while maintaining interpretability, a natural approach is to employ Bayesian factor models \citep{lopes2004bayesian, prado2010time}. 
In this framework, a large number of observed distributions can be explained through a substantially smaller number of latent factors, provided that an appropriate Bayesian factor model for distributions could be introduced. 
Such a model would be well-suited to many situations, including the population data analysis, because it would automatically identify the factors that account for major differences among observations. 
However, applying a factor model directly to distributional data is not straightforward, since standard formulations of factor models assume that observations lie in Euclidean space rather than on a probability simplex.

We bridge the gap between distributional data and standard Bayesian factor models, by transforming distributions through a binary tree partition structure introduced in \citep{wang2026tree}.
As formulated in \cite{wang2026tree} through the ``logistic-tree'' construction, we can establish a bijection between a distribution and a set of conditional probabilities defined on internal nodes of a tree.
Hence, with the logistic transformation to the node probabilities \citep{jara2011class}, we can obtain a Euclidean vector representation of the distribution while preserving the one-to-one relationship with the original distribution. 
We note that the model of \cite{wang2026tree} was designed to estimate mixed effects in compositional data analysis to detect differences in contrasting environments.
In this research, we generalize the idea of the logistic tree transformation by establishing a new direction, namely, proposing a novel general framework of distributional factor models. 
Additionally, as we shall detail, we demonstrate the flexibility of the proposed model by incorporating the spatial information.

An important advantage of adopting a tree-based representation is that it naturally preserves the ordering of categories, allowing nearby values to belong to the same node or to share common ancestors in the tree. 
This property is particularly useful for grouped or continuous data, where adjacent categories are expected to share similar distributional structures. 
Furthermore, the proposed framework avoids parametric assumptions regarding the latent distributional process. 
Introducing parametric assumptions is effective for estimating distributional structures in specific inferential tasks, such as estimating income distributions with uni-modal parametric distributions
\citep{kobayashi2022bayesian, sugasawa2020estimation}, but such a parametric approach does not work under multi-modality or discontinuity of unknown distributions.
In contrast, the tree-based approach allows substantially more flexible estimation, as shown by the studies on the P\'olya tree process models \citep{lavine1992some, hanson2006inference, wong2010optional}.
In the proposed framework, as in the P\'olya tree, we specify flexible generative models through a similar tree-based construction to capture the characteristics of observed distributions without parametric assumptions.

Another advantage of employing a tree-based decomposition is that the framework extends naturally to multidimensional settings, as also demonstrated in analyses based on the P\'olya tree process \citep{wong2010optional, awaya2024hidden}. 
Given a tree-partition structure, multidimensional distributions can be represented through essentially the same recursive construction used for univariate distributions (an example of such a tree is illustrated in Figure~\ref{fig: christmas tree}). 
This property allows us to describe both univariate and multivariate models in a unified framework.
Most importantly, these models can be estimated with the same MCMC algorithm. 
As our proposed model consists of conditionally independent node parameters with the likelihood described in the logistic form, the resulting posterior admits an efficient Gibbs sampler using the \Polya Gamma augmentation \citep{polson2013bayesian}.
This advantage makes the model applicable in a variety of inferential tasks including social science applications, where datasets often involve multiple variables, such as population data with joint age--gender compositions or survey data consisting of multiple questionnaire items.

We note that the existence of an efficient Gibbs sampler is a main advantage of the logistic tree transformation, as opposed to other types of transformation to map distributions into Euclidean space.
For example, several types of transformations have been proposed in the context of compositional data analysis \citep[see, e.g.,][]{aitchison1982statistical, egozcue2003isometric, silverman2017phylogenetic}.
However, as discussed in \cite{wang2026tree}, constructing simple and efficient MCMC samplers under such transformations is less straightforward.

One may argue that mixture-based approaches such as Dirichlet mixture models and Dirichlet process mixtures \citep{muller2015bayesian, ghosal2017fundamentals} can be more advantageous than the factor model. 
However, in many social science applications, distributional data do not exhibit clear clustering structures; rather, distributions often change gradually across locations or other characteristics as shown in the real data analysis provided in Section~\ref{sec: application to the real data}. 
In such cases, mixture models tend to produce an excessively large number of clusters, and it is not possible to obtain a lower-dimensional summary of the data.
In contrast, even when clustering structure is weak or absent, a factor model provides a more parsimonious representation, describing diverse distributions with combinations of a small number of latent factors. 
As an illustration, Figure~\ref{fig: realizations under factor pt} visualizes simulated distributions with four categories generated from the proposed factor model (details are provided in Section~\ref{sec: Tree-Embedded Factor Models for Categorical Distributions}).
In this illustration, we combine only three distributions shown in the first column, which we call ``factor distributions'', and the proposed model can express a variety of distributional structures.

Once each distribution is represented as a Euclidean vector through the logistic transformation, the linear Gaussian likelihood is obtained under the Polya-gamma mixture, so extending the Bayesian hierarchical model becomes straightforward.
The notable features of the proposed framework demonstrating its flexibility are as follows. 
Firstly, we incorporate the multiplicative gamma process of the infinite factor model \citep{bhattacharya2011sparse, durante2017note} into the model, so that we can adaptively estimate lower-dimensional characteristics in the correlation and variation of the observed distributions without fixing the number of factors.
Secondly, in order to estimate the population data using the information of locations effectively, we incorporate a simultaneous autoregressive (SAR) prior on factor loadings to capture spatial dependence across locations. 
We note that these components represent only one possible realization within the proposed framework; alternative Bayesian factor models or additional covariate structures could be incorporated in a similar manner.

Along with the development in the modeling aspect, we provide a theoretical guarantee of the proposed factor model, focusing on the posterior consistency.
One notable finding in our theoretical investigation is that the posterior consistency holds even when the true distributional structure has some cells with exactly zero probability assigned.
The robustness of the logistic tree model to such zero probabilities or zero counts are already discussed in \cite{wang2026tree}, but we first provide a theoretical result to show the robustness. 
We also note that we can establish the posterior consistency even when the model incorporates the spatial correlation, namely, the SAR model as a component in the hierarchical model.

We focus on estimation of grouped data, namely, distributional structures with discrete and ordered categories in this research, as the invention of the novel factor model is mainly motivated by the population data with discrete age and gender categories.
However, we emphasize that continuous distributions are also expected to be analyzed with our proposed model.
For example, in density estimation with the \Polya tree process, it is common to fit tree partitions with finite depth to capture the essential distributional structures \citep[see e.g., ][]{hanson2006inference, soriano2017probabilistic}.
Once a finite tree is fixed, both discrete and continuous distributions can be analyzed with the same model because the estimation in both cases boils down to the posterior inference for the node parameters on the tree.

The remainder of the paper is organized as follows.  
Section~\ref{sec: Tree-Embedded Factor Models for Categorical Distributions} introduces the proposed factor modeling framework for categorical data based on a novel tree-based decomposition of distributions.
We also incorporate spatial correlation to illustrate the flexibility of the framework.
Section~\ref{sec: Application to The Spatiotemporal Population Data} presents an application to stay-population data collected in an urban area of Japan, together with an empirical comparison against alternative models for categorical data.
Section~\ref{sec: discussion} concludes the article.

\section{Tree-Embedded Factor Models for Categorical Distributions}
\label{sec: Tree-Embedded Factor Models for Categorical Distributions}
In this section, we describe the proposed factor model, beginning with a discussion on a tree-based decomposition of grouped data, as it serves as the foundation for the proposed factor model and its spatial extension. 
We then explain how the factor model is constructed on the tree decomposition and how the information of the spatial correlation is incorporated via the SAR model.

\subsection{Logistic-tree decomposition of grouped data}

Let $\Omega$ denote the collection of all finite categories under consideration. 
The elements of $\Omega$ may be indexed by multiple features. 
For example, in population data $\Omega$ is multi-dimensional and indexed by discrete characteristics, such as gender and age. 
We assume that we have a recursive tree-partition structure $T$ that is constructed over the categories in $\Omega$.
In our analysis, we focus particularly on a dyadic tree, namely, a recursive partition in which each node is split into two child nodes, following the common specifications in the Bayesian tree-based models \citep{chipman1998bayesian, chipman2010bart}.
An illustrative example is provided in Figure~\ref{fig: christmas tree}, where the tree is defined for two-dimensional categories. 
In some situations, a natural candidate for a tree structure is available (e.g., phylogenetic trees for microbiome data \citep{wang2026tree}), whereas in many applications an appropriate algorithm is required to construct a tree to allow analysis.
We defer a discussion of the details of tree construction until Section~\ref{sec: tree construction} and in the following discussion assume that the tree $T$ is available and fixed.

Given the tree $T$, we can define a collection of leaf (external) nodes and non-leaf (internal) nodes, denoted by $\mathcal{L}(T)$ and $\mathcal{N}(T)$, respectively. 
To denote the nodes of the tree, we adopt the recursive expression following the literature on the \Polya tree process \citep{soriano2017probabilistic, awaya2024hidden}. 
A generic node in the tree is denoted by $A$, and if $A \in \mathcal{N}(T)$, it is partitioned into two child nodes, denoted by $A_l$ and $A_r$. 
We let the number of internal nodes be denoted by $N$, resulting in $N+1$ categories in total.

We can define a discrete measure (distribution) on the categories in $\Omega$, and this distribution is denoted by $P$.
Throughout the discussion, we focus on the collection of discrete distributions with positive masses, namely, 
\[
\mathcal{P}_{N+1}
=
\left\{
    (p_1,\dots,p_{N+1}):
    p_1 \in (0,1),\ \dots\ ,p_{N+1} \in (0,1),\ 
    \sum^{N+1}_{l=1}p_l = 1
\right\}.
\]
A key property of distributions defined along a tree partition is that 
the distribution is uniquely characterized by conditional probabilities defined on the internal nodes:
\[
\theta(A) := P(A_l \mid A), \qquad A \in \mathcal{N}(T),
\]
 as shown in the literature on the P\'olya tree process (e.g., \cite{lavine1992some, hanson2006inference}).
Hence, posterior inference for the distribution reduces to estimating the parameters $\theta(A)$. 
Figure~\ref{fig: visualization of tree decomp} illustrates a tree with two layers and three internal nodes ($A$, $A_l$, and $A_r$), where a discrete distribution with four masses ($p_1,p_2,p_3,p_4$) is parameterized by $\theta(A)$, $\theta(A_l)$, and $\theta(A_r)$.
This parameterization is more advantageous than directly estimating the masses $p_1,\dots,p_{N+1}$ because it introduces independence (or, conditional independence in a hierarchical model) of the node parameters, which adds flexibility to the hierarchical modeling and makes posterior sampling efficient. 

\begin{figure}[htb]
    \centering
    \begin{tabular}{cc}
        \includegraphics[width = 0.5\textwidth]{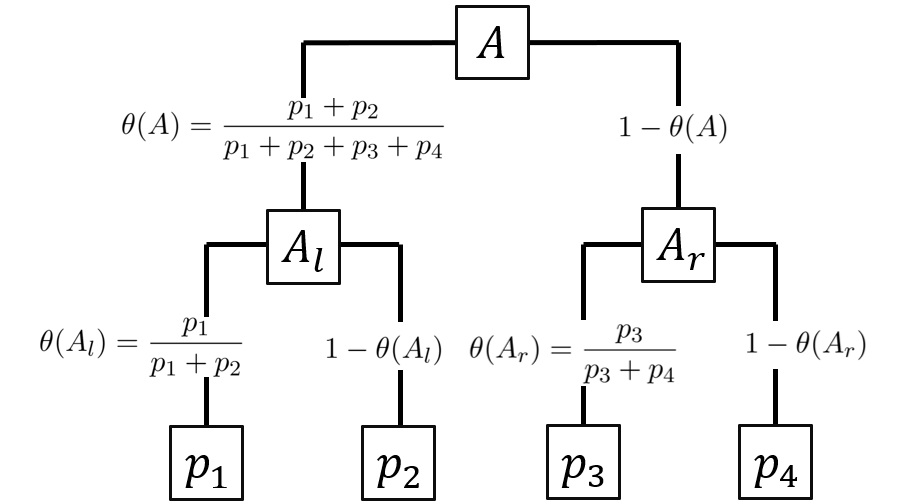}
    \end{tabular}
    \caption{Illustration of a two-layer tree with three internal-node parameters.}
    \label{fig: visualization of tree decomp}
\end{figure}

The support of each node parameter $\theta(A)$ is limited to the finite interval $(0,1)$, so, for example, many P\'olya tree models adopt beta priors for conjugacy \citep{ma2011coupling, soriano2017probabilistic, awaya2024hidden}.  
In this work, however, in order to facilitate the introduction of a factor model, we follow the logistic-tree normal formulation \citep{jara2011class, wang2026tree}, commonly used for real-valued vectors, and introduce logit-transformed parameters
\[
\psi(A) := \log\!\left(
\frac{\theta(A)}{1-\theta(A)}
\right).
\]
This transformation defines a bijection between the original parameters $\{\theta(A)\}_{A \in \mathcal{N}(T)} \subset [0,1]^N$ and the transformed parameters $\{\psi(A)\}_{A \in \mathcal{N}(T)} \subset \mathbb{R}^N$.
Equivalently, this transformation defines a tree-based embedding $\mathcal{E}_T : \mathcal{P}_{N+1} \to \mathbb{R}^N$ by
\[
    \mathcal{E}_T(P)
    =
    \{\psi(A)\}_{A \in \mathcal{N}(T)},
\]
which is bijective and connects a multinomial distribution and a real-valued vector in the unconstrained space $\mathbb{R}^N$.
The transformed representation is particularly convenient for incorporating Gaussian-based modeling strategies, because it allows us to build a hierarchical model in which
$\mathcal{E}_T(P)$ corresponding to an observed real vector in the original model specification.

Under this embedding, \cite{jara2011class} formulated a hierarchical model with a Gaussian process prior and \cite{wang2026tree} demonstrated richer latent structures through graphical-lasso-type priors and mixture models. 
As our goal is to summarize the information of many observed distributions as in the population data analysis, we establish a new approach of utilizing the tree embedding by introducing a novel factor model for distributional observations.
\\

\noindent
{\it Remark:}
The tree-based transformation should be differentiated from an approach of directly transforming the observed proportions into a Euclidean vector, which is common in compositional data analyses 
\citep{aitchison1982statistical, egozcue2003isometric}.
In the proposed model, we apply the transformation to the unknown multinomial distribution $P_i$, from which observable proportions are generated, hence the generative model is estimated. 
This modeling approach is particularly beneficial in Bayesian estimation, because we can successfully quantify uncertainty in  estimation, which is caused by the difference in total counts.
For example, we can consider two cases in which we observe exactly the same proportions and different total counts.
We can make a distinction between these two cases by specifying the generative models because the difference in the total counts results in differences in several aspects, such as posterior variances and credible intervals.

\subsection{Factor models for discrete distributions}
We introduce a novel factor model that represents a collection of distributions through a smaller number of latent factor distributions. 
Suppose that we observe $M$ distributions defined on the category set $\Omega$. 
For $i = 1,\dots,M$, let $n_i(B)$ denote the number of observations falling in a subset $B \subset \Omega$. 
We assume that the counts $n_i$ are generated from an unknown distribution $P_i$. 
Using the tree-based embedding $\mathcal{E}_T$, we obtain a real vector representation
\[
\mathcal{E}_T(P_i)
=
\{\psi_i(A)\}_{A \in \mathcal{N}(T)}.
\]
For notational convenience, we denote the parameter vector $\{\psi_i(A)\}_{A \in \mathcal{N}(T)}$ by $\bm{\psi}_i$.
The central idea of factor analysis is to explain high-dimensional observations, here $M$ distributions, through a lower-dimensional structure characterized by $K$ ($K < M$) latent factors. 
Since the unknown distributions are now represented by a Euclidean vector, we can adopt the standard formulation of Bayesian factor models \citep{lopes2004bayesian, prado2010time}. 
To clearly describe the model architecture, we first consider the case where the number of factors $K$ is fixed. 

We model each embedded vector $\bm{\psi}_i$ as
\begin{align}
\bm{\psi}_i 
=
\lambda_{i1}\bm{\eta}_1 + \cdots + \lambda_{iK}\bm{\eta}_K,
\label{eq: factor model expression 1 without mu}
\end{align}
where $\lambda_{i1},\dots,\lambda_{iK} \in \mathbb{R}$ are scalar factor loadings and $\bm{\eta}_1,\dots,\bm{\eta}_K$ are $N$-dimensional factors.
As in the standard factor models, the information of $M$ latent distributions $P_1,\dots,P_M$ is summarized in $\bm{\eta}_1,\dots,\bm{\eta}_K$.
Because the tree-logit transformation is bijective, each $\bm{\eta}_k$ implicitly corresponds to a set of conditional probabilities on the tree $T$, and hence to a discrete distribution on $\Omega$. 
We call this distribution representing the information of a factor ``a factor distribution.''
Under this formulation, the model combines the $K$ factor distributions
\[
\mathcal{E}_T^{-1}(\bm{\eta}_1),\dots,
\mathcal{E}_T^{-1}(\bm{\eta}_K)
\]
within the Euclidean embedding space, instead of defining their mixture distribution. 
As a generalized model specification, we can introduce a mean vector 
$\bm{\mu} = \{\mu(A)\}_{A \in \mathcal{N}(T)}$ and write the model as
\begin{align}
\bm{\psi}_i 
=
\bm{\mu}
+
\lambda_{i1}\bm{\eta}_1 + \cdots + \lambda_{iK}\bm{\eta}_K.
\label{eq: factor model expression 1 with mu}
\end{align}
The mean $\bm{\mu}$ can be fixed to reflect prior structural information, for example, based on the number of categories associated with each node, or treated as an unknown parameter. 
Our preliminary experiments suggest that estimating $\bm{\mu}$ jointly with the factor component introduces weak identifiability between $\bm{\mu}$ and the latent factors and makes the posterior sampling unstable.
In practice, therefore, we fix the mean $\mu(A)$ by computing the average of the logit-transformed empirical conditional probabilities.

An immediate but important property of the proposed model is that the intrinsic dimension of the induced space of distributions equals the number of factors $K$, allowing a parsimonious representation of many distributions.
To illustrate this property, consider a simplex space with four categories equipped with the symmetric partition tree $T$ and three factor distributions
\[
d_1 = 
(1/6,1/6,1/3,1/3),\  
d_2 = 
(1/2,1/6,1/6,1/6),\ 
d_3 
= (1/6,1/6,1/6,1/2).
\]
Figure~\ref{fig: realizations under factor pt} illustrates the three factor distributions and a sample of distributions that can be generated from the model
\[
\mathcal{E}^{-1}_T
(
\lambda_{i1}\mathcal{E}_T(d_1)
+
\lambda_{i2}\mathcal{E}_T(d_2)
+
\lambda_{i3}\mathcal{E}_T(d_3)
),
\]
under several values of the loadings $(\lambda_1, \lambda_2, \lambda_3)$. 
It is seen that the logistic-tree factor model can represent a rich family of distributions on the simplex.
In theory, any distribution on $3$-simplex can be approximated through a set of suitable factor coefficients with the three linearly independent basis vectors. 
This expressive power is obtained because the loadings $(\lambda_1, \lambda_2, \lambda_3)$ are unconstrained real values rather than restricted to $(0,1)$. 
For instance, for the component $d_1$, the coefficient $\lambda_{1}$ may be negative, allowing the model to assign both high and low probabilities to the first and second categories.
Additionally, by increasing $\lambda_1$, we can even emphasize the characteristics of $d_1$.

\begin{figure}[htb]
    \centering
        \includegraphics[width = 0.8\textwidth]{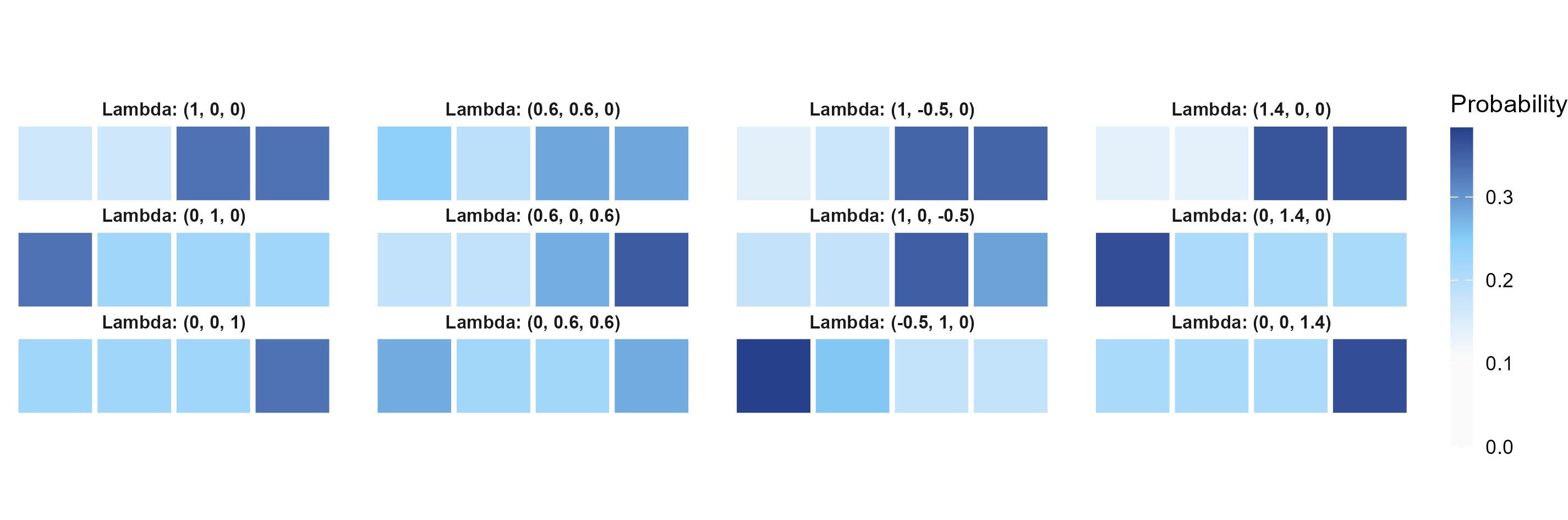}
    \caption{An illustration of realized distributions that can be generated in the proposed factor model under 12 different combinations of factor loadings $(\lambda_1, \lambda_2, \lambda_3)$.
    The first column corresponds to the three factor distributions.
    }
    \label{fig: realizations under factor pt}
\end{figure}

\subsection{Infinite factor model and its generalization with spatial correlations}
To describe the prior specification, it is convenient to express the factor model in a nodewise form for each internal node $A \in \mathcal{N}(T)$. 
An equivalent representation of Eq~\eqref{eq: factor model expression 1 with mu} is given by
\begin{align}
\boldsymbol{\psi}(A)
=
\mu(A)\,\iota_M
+
\boldsymbol{\Lambda}\,\boldsymbol{\eta}(A),
\label{eq: factor model expression (nodewise)}
\end{align}
where $\iota_M$ denotes the $M\times 1$ vector of ones, and
\[
\boldsymbol{\psi}(A)
=
\begin{bmatrix}
\psi_1(A)\\
\vdots\\
\psi_M(A)
\end{bmatrix},
\qquad
\boldsymbol{\eta}(A)
=
\begin{bmatrix}
\eta_1(A)\\
\vdots\\
\eta_K(A)
\end{bmatrix},
\]
and $\boldsymbol{\Lambda}$ is the $M\times K$ loading matrix
\[
\boldsymbol{\Lambda}
=
\begin{bmatrix}
\lambda_{11} & \lambda_{12} & \cdots & \lambda_{1K}\\
\lambda_{21} & \lambda_{22} & \cdots & \lambda_{2K}\\
\vdots & \vdots & \ddots & \vdots\\
\lambda_{M1} & \lambda_{M2} & \cdots & \lambda_{MK}
\end{bmatrix}.
\]
The factors $\eta_k(A)$ ($k=1,\dots,K$ and $A \in \mathcal{N}(T)$) independently follow the standard Gaussian distribution.

Careful prior specification for $\boldsymbol{\Lambda}$ is essential, since factors and loadings are not identifiable under rotations and switching signs. 
A common strategy is to impose structural constraints on the loading matrix, for example by setting upper-triangular entries $\lambda_{ik}$ ($i<k$) to zero \citep{geweke1996measuring, aguilar2000bayesian}. 
Such approaches, however, require a pre-specified ordering of observations and our fixing a number of factors, both of which may substantially affect posterior inference. 
Instead, we adopt the infinite factor model of \cite{bhattacharya2011sparse} to estimate the lower-dimensional structure adaptively without such pre-specifications.

Moreover, in many applications, additional structural information is available and should be incorporated through the prior. 
For example, in population data, the spatial locations of observations are known, and it is desirable to borrow strength across nearby regions. 
To achieve this, we introduce a simultaneous autoregressive (SAR) prior for the loading vectors based on an adjacency matrix $W$, whose rows are normalized to sum to one. 
Although we focus on the SAR specification, the proposed framework is compatible with alternative spatial priors, such as the CAR model considered in \cite{wakayama2024spatiotemporal}.

Within the infinite factor framework \citep{bhattacharya2011sparse}, the prior on loadings is designed so that their scales tend to decrease with the factor index, and the essential structure of the variability in the observations is summarized by the first few factors.
While the model has a countably infinite number of factors in theory, we choose an initial truncation level $K$ moderately large (e.g., $K=10$) in order to provide a good approximation to the correlation structure. 
Based on this framework, for the $k$th column of the loading matrix $\bm{\Lambda}_{\cdot k}$, we introduce a combination of the sparsity inducing prior from \cite{bhattacharya2011sparse} and the SAR model. 
Let $\rho_k \in (-1,1)$ be the correlation parameter and $W$ is the row-normalized adjacency matrix.
Then, we introduce the prior that is induced by the following model:
\[
(I_M - \rho_k W)\boldsymbol{\Lambda}_{\cdot k}
\sim
N\!\left(\mathbf{0}, (\tau_k \Phi_k)^{-1}\right),
\]
where $\Phi_k = \mathrm{diag}(\phi_{1,k},\dots,\phi_{M,k})$ introduces local shrinkage with
\[
\phi_{i,k} \sim \mathrm{Gamma}\!\left(\frac{\nu}{2},\frac{\nu}{2}\right), 
\qquad i=1,\dots,M,
\]
and $\tau_k$ is the global precision parameter following the multiplicative gamma process \citep{bhattacharya2011sparse}:
\[
\tau_k = \prod_{l=1}^{k}\delta_l,\qquad
\delta_1\sim \mathrm{Gamma}(a_1,1),\qquad
\delta_l\sim \mathrm{Gamma}(a_2,1)\ (l\ge2).
\]
In this prior, the difference $\bm{\Lambda}_{\cdot k} - \rho_k W\bm{\Lambda}_{\cdot k}$ represents residual variation that is not explained by neighboring locations, to which sparsity-inducing shrinkage is applied. 
The precision matrix of the loading vector $\bm{\Lambda}_{\cdot k}$ under the Gaussian distribution is
\[
    \tau_k
    (I_M - \rho_k W^T)
    \Phi_k
    (I_M - \rho_k W). 
\]
It implies that variance decreases as the factor index increases due to the multiplicative gamma prior on $\tau_k$. 
For the correlation parameter $\rho_k$, we adopt a truncated normal prior $\mathrm{TN}_{(-1+c,1-c)}(m_\rho,s_\rho^2)$, where $c$ (e.g., $0.001$) is a small constant introduced to avoid the unstable behavior of the MCMC.

Following \cite{durante2017note} and \cite{de2021bayesian}, we set $(a_1,a_2)=(2.1,3.1)$ and $\nu=3$. 
For $\rho_k$, we specify a weakly informative prior by setting $(m_\rho,s_\rho^2)$ to $(0, 10)$ accordingly.

\subsection{Posterior sampling}
\label{sec: Posterior sampling}
Deriving the posterior sampling algorithm is straightforward because, as discussed in \cite{wang2026tree}, the logistic-tree formulation allows the introduction of P\'olya--Gamma (PG) augmentation \citep{polson2013bayesian}. 
We briefly outline the PG augmentation for the proposed model here, while detailed sampling steps are provided in Appendix~\ref{appendix: MCMC algorithm}.

For the location $i$ ($i=1,\dots,M$) and internal node $A \in \mathcal{N}(T)$, the likelihood of observing $n_i(A_l)$ counts in $A_l$ out of $n_i(A)$ total observations follows the binomial distribution and can be written as
\begin{align*}
p(n_i(A_l)\mid n_i(A),\psi_i(A))
&=
\left(
\frac{\exp(\psi_i(A))}
{1+\exp(\psi_i(A))}
\right)^{n_i(A_l)}
\left(
\frac{1}
{1+\exp(\psi_i(A))}
\right)^{n_i(A_r)}\\
&=
\frac{\exp\!\left(\psi_i(A)\right)^{n_i(A_l)}}
{\left\{1+\exp\!\left(\psi_i(A)\right)\right\}^{n_i(A)}}.
\end{align*}
Following \cite{polson2013bayesian}, this likelihood admits a P\'olya--Gamma mixture representation:
\begin{align*}
&p(n_i(A_l)\mid n_i(A),\psi_i(A))\\
&\propto 
\exp\!\big(\kappa_i(A)\psi_i(A)\big)
\int_0^\infty
\exp\!\left\{-\frac{\omega_i(A)\psi_i(A)^2}{2}\right\}
p(\omega_i(A)\mid n_i(A),0)\,
d\omega_i(A),
\end{align*}
where $\kappa_i(A)=n_i(A_l)-n_i(A)/2$ and $\omega_i(A)\sim \mathrm{PG}(n_i(A),0)$.

Based on the augmentation, the model admits a pseudo-linear Gaussian representation. 
Specifically we define the $M\times1$ vector $\bm{\kappa}(A)$ and the diagonal matrix $\bm{\Omega}(A)$ by
\[
    \bm{\kappa}(A)
    =
    \left[
        \begin{array}{c}
            \kappa_1(A)\\
            \kappa_2(A) \\
            \vdots \\
            \kappa_M(A)
        \end{array}
    \right],\ 
    \bm{\Omega}(A)
    =
    \left[
        \begin{array}{cccc}
            \omega_1(A) & 0 & \cdots & 0\\
            0 & \omega_2(A) &  & \vdots  \\
            \vdots & & \ddots & 0 \\
            0 & \cdots & 0 & \omega_M(A).
        \end{array}
    \right].
\]
Then, given $\bm{\Omega}(A)$, the likelihood for the parameters related to the internal node $A$ can be written as the pseudo-regression model
\begin{align}
\bm{\Omega}(A)^{-1}\bm{\kappa}(A)
=
\mu(A)\iota_M
+
\boldsymbol{\Lambda}\boldsymbol{\eta}(A)
+
\boldsymbol{\epsilon}(A),
\label{eq: linear factor model representation}
\end{align}
where $\iota_M=(1,\dots,1)^T$ denotes the $M \times 1$ vector of $1$, and $\boldsymbol{\epsilon}(A)\sim N(\mathbf{0},\bm{\Omega}(A)^{-1})$ is an auxiliary Gaussian noise term. 
Hence, conditional on the latent variables $\bm{\Omega}(A)$, the likelihood reduces to a Gaussian linear model with pseudo-observation $\bm{\Omega}(A)^{-1}\bm{\kappa}(A)$. 
The resulting MCMC sampling algorithm follows from those available for standard Gaussian models and is detailed in Appendix~\ref{appendix: MCMC algorithm}.

\subsection{Post-processing for the factor loadings}
\label{subsec: Post-processing for the factor loadings}
The infinite factor model employed in this work has strong expressive power for capturing the correlation structure among the $M$ observed distributions.  
We can show the posterior consistency under the model, as is to be discussed in Section~\ref{sec: theoretical properties}. 
For interpretability, however, it is effective to determine the number of factors that adequately explain dependence structure for better interpretability of the estimation result. 
In addition, the prior on the loading matrix does not impose any constraints that resolve the non-identifiability problem arising from sign switching and rotational invariance \citep[see, e.g.,][]{papastamoulis2022identifiability}, so the summarizing the MCMC sample without modification would be misleading. 
To give a solution to these problems, we follow the post-processing strategy of \cite{de2021bayesian}.

\subsubsection{Selecting the optimal number of factors}
As implied by the model formulations in Eq~\eqref{eq: factor model expression (nodewise)} and \eqref{eq: linear factor model representation}, and noting that the factors $\bm{\eta}(A)$ follow standard Gaussian priors, the dominant component governing the correlation among the $M$ distributions is $\boldsymbol{\Lambda}\boldsymbol{\Lambda}^\top$, a symmetric positive semi-definite matrix of rank $K$. 
When $\boldsymbol{\Lambda}$ is given, the effective dimensionality of the factor representation can be assessed through the ordered eigenvalues, analogously to the analysis based on scree plots. 
Following \cite{de2021bayesian}, we determine the effective number of factors $K^*$ as follows. 
First, at each MCMC iteration we compute the ordered eigenvalues of $\boldsymbol{\Lambda}\boldsymbol{\Lambda}^\top$, based on which their posterior means are estimated. 
Second, based on the cumulative proportion of explained variation, we select $K^*$ as the number of eigenvalues whose contribution exceeds a pre-specified threshold (e.g., 5\%).

\subsubsection{Addressing the identifiability problem}
After determining the effective number of factors $K^*$ ($K^* < K$), we again run the MCMC to obtain the MCMC draws denoted by $\{\boldsymbol{\Lambda}^{(r)}\}_{r=1}^R$ and post-process the sample to address the identification problem with the rotation-sign-permutation algorithm in \cite{papastamoulis2022identifiability}.
Their algorithm first applies a varimax rotation to each MCMC draw to obtain a simple loading structure, and then aligns the rotated loading matrices across iterations by solving a signed-permutation problem relative to an iteratively updated reference loading matrix.
This algorithm outputs a sequences of orthonormal matrices 
$\{\boldsymbol{Q}^{(r)}\}^R_{r=1}$ 
, so we can compute the identification-corrected loading matrix $\boldsymbol{\Lambda}^{(r)} \boldsymbol{Q}^{(r)}$ and the corresponding factor $\left(\boldsymbol{Q}^{(r)}\right)^\top \boldsymbol{\eta}(A)$. 

\noindent
{\it Remark: }
Our preliminary experiment using simulated data shows that this post-processing step tends to make the errors from the true distributions larger, and sometimes their difference is noticeable.
This phenomenon, however, is expected since the post-processing step makes the factor substantially more restrictive by dropping $K - K^*$ factors.
Hence, the post-processing should be understood as a procedure to summarize the estimation result with a small number of factors and thus to make it more interpretable.

\subsection{On the tree construction algorithms}
\label{sec: tree construction}
In this section, we discuss the tree building algorithm, which is crucial to enable the proposed logistic-tree method,
In the literature on tree-based models for estimating distributions, such as Bayesian algorithms to sample from the posterior of trees  \citep[e.g.,][]{wong2010optional, awaya2024hidden}, a prior distribution is introduced for the tree structures, and the computation methods explore the resulting posteriors.
Such computation algorithms are feasible when we can evaluate the marginal likelihood for the trees, integrating out the variables defined on the trees
However, such a framework is infeasible for the proposed model since the model is equipped with the infinite factor model, which has a complex dependence structure in the latent variables. 
Hence, we adopt an alternative strategy and fix the tree structure prior to implementing the factor analysis.

Since our primary goal in the factor analysis is to estimate the structure in the variability among the observed distributions, one plausible strategy is to seek a tree where the difference among the distributions is most highlighted.
To this end, we adopt the principal balances algorithm \citep{pawlowsky2011principal}, which constructs tree structures based on isometric log-ratio (ilr) coordinates \citep[See][for details]{egozcue2003isometric}. 
Within the ilr framework, an orthonormal basis is defined through a binary tree partition, and each distribution is characterized the corresponding coordinates known as a ``balance.''
For example, when each internal node $A$ split into $A_l$ and $A_r$, a given distribution yields a balance defined as,
\[
\sqrt{\frac{|A_l||A_r|}{|A_l|+|A_r|}}
\log \frac{g(A_l)}{g(A_r)},
\]
where $|A_l|$ and $|A_r|$ denote the numbers of categories in the respective nodes, and $g(A_l)$ and $g(A_r)$ represent the geometric means of probabilities within each subset. 
The variability of balances across observed distributions quantifies how strongly a given partition captures differences among distributions, conceptually analogous to variance explained by rotated coordinates in the principal component analysis.

We employ the maximum explained variance hierarchical balances (MV) algorithm of \cite{pawlowsky2011principal}, which follows a coarse-to-fine greedy strategy. 
When splitting a node $A$, candidate partitions are evaluated by computing the variance of the resulting balances, and the split maximizing this variance is selected. 
While the original MV algorithm evaluates only a subset of candidate splits, the relatively small number of categories in our applications allows us to exhaustively evaluate all possible partitions.

As an illustration, Figure~\ref{fig: christmas tree} displays the tree constructed from the population data observed at 19:00 on 12/24/2019. 
The two-dimensional categories are first divided into two age groups, indicating that variation among locations is primarily driven by the ratio of elderly people. 
The same tendency is observed for the subsequent splits. 
According to the computed balance, a large variation tends to be explained by the difference in the age compositions.

\begin{figure}[htb]
    \centering
    \includegraphics[width=0.8\linewidth]{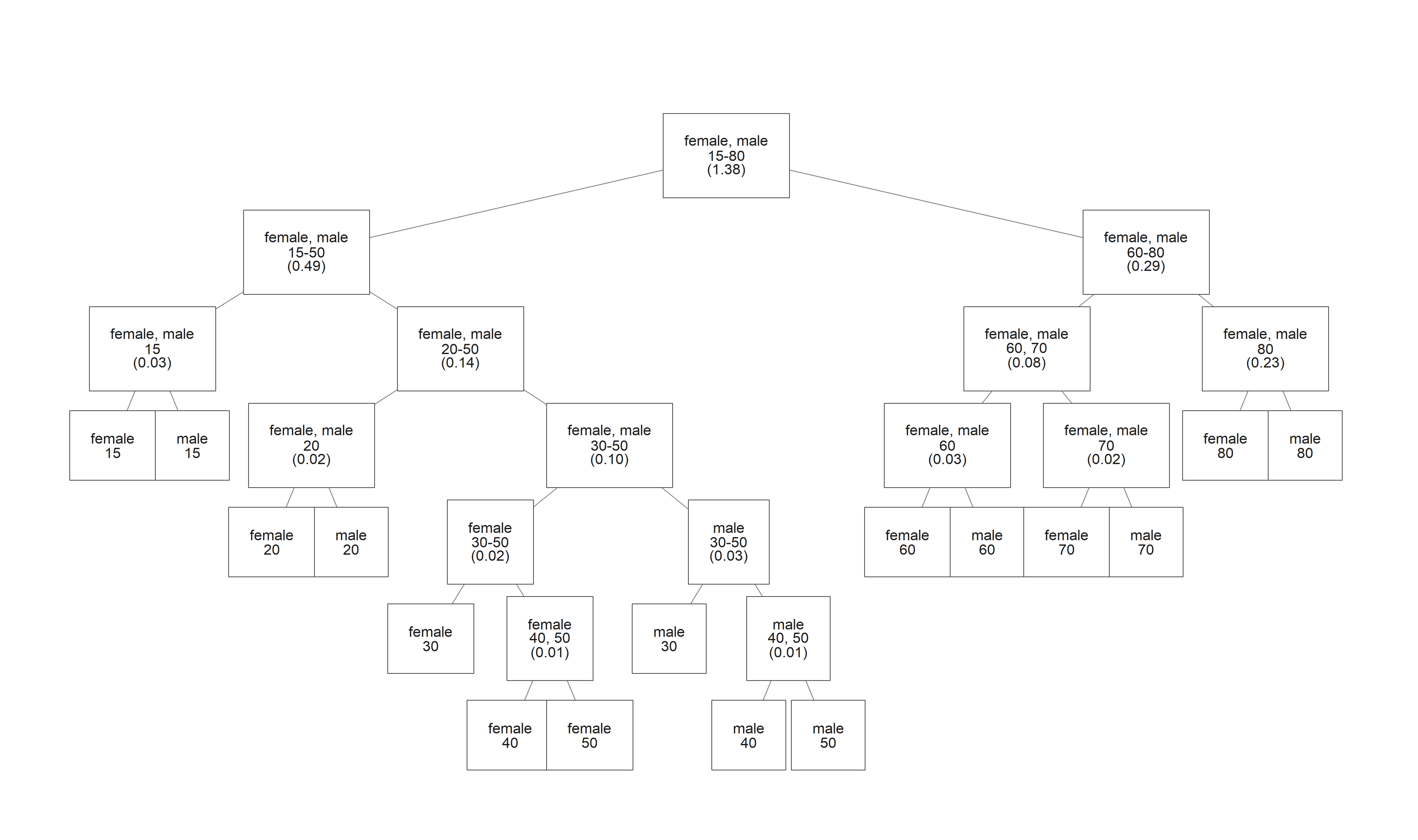}
    \caption{Visualization of the tree partition obtained using the MV algorithm for the population data observed at 19:00 on 12/24/2019.
    For the internal nodes, the variances of the balances calculated for all locations are also shown in the parentheses.}
    \label{fig: christmas tree}
\end{figure}

It should be noted that the MV algorithm is discussed just as the default choice, and other types of data-driven tree algorithms, such as the median-based $k$d tree \citep{walther2023beta}, can also be applicable. 
We can also construct a tree without referring to data by, for example, building a symmetric tree that assigns the same number of categories into every pair of child nodes, so one might be concerned with the possible sensitivity to the selection of trees.
To address this possible concern, we carry out the sensitivity analysis evaluating five tree constructing algorithms, including those discussed above, in Appendix~\ref{appendix: additional numerical results}.
The analysis shows that all tree algorithms considered exhibit similar performance, implying little sensitivity to the choice of trees, and this finding is consistent with the sensitivity analysis carried out in \cite{wang2026tree}.

\section{Theoretical properties of the proposed factor model}
\label{sec: theoretical properties}

In this section, we establish the asymptotic properties of the proposed methods, particularly focusing on its consistency.
In particular, we generalize the results on the posterior consistency for the infinite factor models for Euclidean vector observations \citep{bhattacharya2011sparse}, to the proposed hierarchical model with spatially-correlated latent factors for high dimensional categorical distributions. 
Although the results are presented formally in the subsequent sections, it is worth summarizing the main findings.
First, while \cite{bhattacharya2011sparse} established posterior consistency for infinite factor models under conditionally independent factor loadings, we extend their result to the setting with spatially correlated loadings induced by the SAR-type prior.
Second, the consistency result is established for categorical distributions. 
In particular, our theory covers boundary cases where some true category probabilities are exactly zero, thereby providing formal support for the ability of logistic-tree models to address the zero-cell issue discussed by \cite{wang2026tree} without an explicit zero-inflation component.

\subsection{Assumptions on the true data-generating process}
In order to discuss the asymptotic property of the proposed model, we need to introduce the true categorical distribution for each location $i$ ($i=1,\dots,M$), denoted by $P^{(0)}_i = \{P^{(0)}_{i,c}\}^{N+1}_{c=1}$.
Unlike the discussion on the model specification (Section~\ref{sec: Tree-Embedded Factor Models for Categorical Distributions}), the true probability mass is allowed to be exactly zero.
We assume that $P^{(0)}_i$ is an element of the following collection of simplexes;
\[
\mathcal{P}'_{N+1}
=\left\{
    (p_1,\dots,p_{N+1}):
    p_1 \in [0,1],\ \dots\ ,p_{N+1} \in [0,1],\ 
    \sum^{N+1}_{l=1}p_l = 1
\right\}.
\]
It should be noted that, the elements of this simplex space are allowed to have cells with exactly zero probability.
Also, for the sake of a simplified discussion, we assume that we have $n$ i.i.d observations generated from the product measure
\begin{align}
    P^{(0)} = \bigotimes_{i=1}^M P^{(0)}_i.
    \label{eq: true dist}
\end{align}
In other words, we consider a setting in which we observe the same total counts for the $M$ locations.
Under this assumption, the KL divergence of the true distribution $P^{(0)}$ from that generated from the proposed Bayesian model admits the simple expression 
\[
\sum_{i=1}^{M} D_{KL}\!\left(P_i^{(0)}\,\|\,P_i\right),
\]
where $P_i$ is the categorical distribution for the $i$th distribution generated in the model.
This simplification is effective to establish the posterior consistency because we use the Schwartz's theorem \citep{schwartztheorem, ghosal2017fundamentals} and show that the proposed model assigns a positive probability to any KL-neighborhood of the true distribution.
See the proof given in Appendix~\ref{appendix:proofs}. 
We note that this setting is only for theoretical discussion. 
The proposed model also works with the simple MCMC even for distributions with different total counts, as already discussed in Section~\ref{sec: Posterior sampling}.

We introduce the necessary conditions to derive the theoretical results.

\begin{assumption}
The $n$ observations are independently and identically distributed under the distribution defined in Eq~\eqref{eq: true dist}.
Also, the hyperparameters in the proposed factor model satisfy $\nu > 2$ and $a_2 > 2$.
\end{assumption}
The assumptions for the hyperparameters, which are also introduced in \cite{bhattacharya2011sparse}, are crucial to ensure that the scales of the factor loadings degenerate toward zero with $k$ with an appropriate rate.

\subsection{Properties of the distributional factor model and posterior consistency}
\label{subsec: Properties of the distributional factor model and posterior consistency}
Recall that, in our proposed model, the $i$th distribution ($i=1,\dots,M$) is represented by a Euclidean vector $\{\psi_i(A)\}_{A \in \mathcal{N}(T)}$, and each element $\psi_i(A)$ is, in theory, defined as a combination of countably infinite factors.
Hence the proposed model involves infinite summations, but whether they are finite or infinite with positive probability is not clear.
The following property is analogous to that shown in \cite{bhattacharya2011sparse} but more specific to our distributional factor model with the spatial correlation structure, showing that the infinite sums appearing in the model are finite almost surely.

\begin{prop}
For the proposed factor model, the following holds:
    \begin{enumerate}
    \item[(i)] $P\!\left(\max\limits_{1 \le i \le M}\sum_{k=1}^\infty \lambda_{ik}^2 < \infty\right) = 1$.
    \item[(ii)] $  P\!\left(\max\limits_{1 \le i \le M,\ A \in \mathcal{N}(T)}
            \left|
                \psi_{i}(A)
            \right|
            < \infty
        \right)
        = 1.$
  \end{enumerate}
\end{prop}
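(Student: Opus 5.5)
Part (i) will follow from finiteness of the expectation: if $E[\sum_k \lambda_{ik}^2] < \infty$ for each $i$, then $\sum_k \lambda_{ik}^2<\infty$ almost surely for each $i$. Since $M$ is finite, taking a finite union of null sets yields the claim for the maximum. By Tonelli it suffices to show $\sum_k E[\lambda_{ik}^2] < \infty$.

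Conditionally on $(\rho_k,\Phi_k,\tau_k)$, the column $\bm{\Lambda}_{\cdot k}$ is Gaussian with covariance
\[
\tau_k^{-1}(I_M-\rho_k W)^{-1}\Phi_k^{-1}(I_M-\rho_k W^T)^{-1}.
\]
Hence
\[
E[\lambda_{ik}^2\mid\cdot]\le \tau_k^{-1}\,\|(I_M-\rho_k W)^{-1}\|_2^2\,\max_j \phi_{j,k}^{-1}.
\]

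The key technical step, and the main obstacle, is bounding $\|(I_M-\rho_k W)^{-1}\|$ uniformly in $\rho_k$. The difficulty is that $W$ is row-stochastic, so $I_M - \rho W$ becomes singular as $|\rho|\to1$. Here the truncation $\rho_k\in(-1+c,1-c)$ is essential. Since $\|W\|_\infty=1$, the Neumann series gives
\[
\|(I_M-\rho_k W)^{-1}\|_\infty\le 1/(1-|\rho_k|)\le 1/c.
\]
Using the equivalence of norms on $\mathbb{R}^{M\times M}$, this yields a deterministic constant $C_{M,c}$ bounding the operator norm. If one wanted the untruncated prior, one would instead need $E[(1-|\rho_k|)^{-2}]$ finite, which fails for priors with positive density near $\pm1$. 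I would therefore state explicitly that the truncation is used.

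With this bound, independence of $\rho_k$, $\Phi_k$ and $\tau_k$ gives
\[
E[\lambda_{ik}^2]\le C_{M,c}\,E[\max_j\phi_{j,k}^{-1}]\,E[\tau_k^{-1}].
\]
For the middle factor, $E[\phi^{-1}]=\nu/(\nu-2)<\infty$ because $\nu>2$. Then $E[\max_j \phi_{j,k}^{-1}]\le \sum_j E[\phi_{j,k}^{-1}] = M\nu/(\nu-2)$, which does not depend on $k$. For the last factor, $E[\tau_k^{-1}]=E[\delta_1^{-1}]\,(E[\delta^{-1}])^{k-1}$ with $E[\delta_l^{-1}]=1/(a_2-1)$. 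This is where $a_2>2$ gives a geometric ratio below one, so the series converges. One also needs $E[\delta_1^{-1}]=1/(a_1-1)<\infty$, which requires $a_1>1$; this holds for the default $a_1=2.1$, and otherwise I would add it as a mild condition. This completes (i).

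For (ii), write $\psi_i(A)=\mu(A)+\sum_k\lambda_{ik}\eta_k(A)$, where the $\eta_k(A)$ are i.i.d.\ $N(0,1)$ and independent of $\bm\Lambda$. Conditionally on $\bm\Lambda$, the series is a sum of independent centered Gaussians with variance $\sum_k\lambda_{ik}^2$, which is finite almost surely by (i). By Kolmogorov's two-series theorem (or $L^2$ martingale convergence), the series converges almost surely to an $N(0,\sum_k\lambda_{ik}^2)$ variable, which is finite. Integrating over $\bm\Lambda$ and taking the finite union over $i\le M$ and $A\in\mathcal N(T)$ gives (ii), since $\mu(A)$ is fixed and finite.
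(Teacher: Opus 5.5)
Your argument is correct for the model as posed in the main text, where $\rho_k\sim\mathrm{TN}_{(-1+c,1-c)}$, provided you add the condition $a_1>1$ that you flag. It takes a different route from the paper's.

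The paper never takes first moments. It sets $X_{ik}=\lambda_{ik}^2\tau_k$ and bounds a fractional moment $E[X_{ik}^\gamma]$ with $\gamma\in(0,1/2)$. Markov's inequality and Borel--Cantelli then give $X_{ik}\le e^{\alpha k}$ eventually. The strong law for $\log\delta_l$ gives $k^{-1}\log\tau_k\to E[\log\delta_2]>0$, so $\lambda_{ik}^2$ decays geometrically almost surely. Part (ii) is then deduced from the truncation result $\psi_{i,K}(A)\to\psi_i(A)$ in probability.

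Your Tonelli computation is shorter and gives the stronger conclusion $E[\sum_k\lambda_{ik}^2]<\infty$. It also uses $\nu>2$ and $a_2>2$ exactly for what they deliver: $E[\phi^{-1}]<\infty$ and $E[\delta_l^{-1}]<1$. Your proof of (ii) via the two-series theorem is more direct than the paper's. It shows outright that the series defining $\psi_i(A)$ converges, which the paper's conditional Gaussian tail bound for $\sum_{k>K}\lambda_{ik}\eta_k(A)$ takes for granted.

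What the fractional moment buys the paper is robustness. The appendix proves the SAR lemma for $\rho_k\sim\mathrm{TN}_{(-1,1)}$ without the margin $c$. There $E[(1-|\rho_k|)^{-2\gamma}]\le 2B_\rho/(1-2\gamma)<\infty$, even though $E[(1-|\rho_k|)^{-2}]=\infty$. It also needs no condition on $a_1$, since $\delta_1$ only shifts $\log\tau_k$ by a finite amount. So the margin $c$ is essential to your method, not to the statement.

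You can remove both restrictions within your framework by conditioning on $\delta_1$ and $(\rho_k)_{k\ge1}$. Your bound then becomes $E[\sum_k\lambda_{ik}^2\mid\delta_1,(\rho_k)]\le C\,\delta_1^{-1}\sum_k(1-|\rho_k|)^{-2}(a_2-1)^{-(k-1)}$. This is finite almost surely for two reasons. First, $\delta_1>0$. Second, Borel--Cantelli applied to fractional moments of the i.i.d.\ variables $(1-|\rho_k|)^{-2}$ gives $(1-|\rho_k|)^{-2}\le e^{\alpha k}$ eventually, for any fixed $\alpha\in(0,\log(a_2-1))$.
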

The proofs of this property and the following results are all provided in Appendix~\ref{appendix:proofs}.

In practice, we truncate the infinite structure up to, say, $K$ factors, and as a result the factor vector $\bm{\psi}(A)$ is a vector with $K$ elements and the loading matrix $\bm{\Lambda}$ is a $M \times K$ matrix.
The corresponding finite approximation to $\psi_i(A)$ is denoted by $\psi_{i,K}(A)$.
The next proposition implies that we can surely improve the accuracy of the finite approximation by making $K$ large and the difference can be arbitrarily small.

\begin{prop}
\label{prop: approximation}
For any $\xi > 0$, $\bm{\psi}_K(A)$ converges to $\bm{\psi}(A)$ in probability as $K \to \infty$ in the sense that
\[
    \lim_{K \to \infty} P\!\left(\max\limits_{1 \le i \le M,\ A \in \mathcal{N}(T)}|\psi_{i}(A) - \psi_{i,K}(A)| > \xi\right) = 0.
\]
\end{prop}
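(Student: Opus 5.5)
The plan is to control the tail $\psi_i(A)-\psi_{i,K}(A)=\sum_{k>K}\lambda_{ik}\eta_k(A)$ in mean square and then apply Chebyshev's inequality together with a union bound over the finitely many indices. Since $M$ and $|\mathcal{N}(T)|=N$ are fixed,
\[
P\Bigl(\max_{i,A}|\psi_i(A)-\psi_{i,K}(A)|>\xi\Bigr)\le \sum_{i=1}^M\sum_{A\in\mathcal{N}(T)} P\bigl(|\psi_i(A)-\psi_{i,K}(A)|>\xi\bigr).
\]
It therefore suffices to show that each summand tends to zero. Conditionally on $\bm{\Lambda}$, the factors $\eta_k(A)$ are i.i.d.\ $N(0,1)$. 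Hence, on the almost-sure event of part (i) of the preceding proposition, the tail is conditionally Gaussian with mean zero and variance $\sum_{k>K}\lambda_{ik}^2$. This gives
\[
P\bigl(|\psi_i(A)-\psi_{i,K}(A)|>\xi\mid\bm{\Lambda}\bigr)\le \min\Bigl\{1,\ \xi^{-2}\sum_{k>K}\lambda_{ik}^2\Bigr\}.
\]

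By part (i), $\sum_{k>K}\lambda_{ik}^2\to0$ almost surely as $K\to\infty$. Taking expectations and applying dominated convergence, with the bounded integrand $\min\{1,\cdot\}$, shows that each term vanishes. This route uses only the almost-sure finiteness already established and avoids any moment conditions.

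If one prefers an explicit rate, the alternative is to bound $E\sum_{k>K}\lambda_{ik}^2$ directly. Conditionally on $\tau_k,\Phi_k,\rho_k$, we have $\mathrm{Var}(\bm{\Lambda}_{\cdot k})=\tau_k^{-1}(I-\rho_kW)^{-1}\Phi_k^{-1}(I-\rho_kW^\top)^{-1}$. The key step, and the main obstacle for the SAR extension, is a uniform bound on $\|(I-\rho_kW)^{-1}\|$ over the support of $\rho_k$. Because $W$ is row-normalized, its spectral radius is at most $1$ and $|\rho_k|\le 1-c$. Using the $\infty$-norm Neumann series gives $\|(I-\rho_kW)^{-1}\|_\infty\le 1/c$. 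Since $M$ is fixed, all matrix norms are equivalent, so the diagonal entries are bounded by $C_M c^{-2}\max_j\phi_{jk}^{-1}\tau_k^{-1}$. The independence structure then gives
\[
E\lambda_{ik}^2\le C\,E[\phi^{-1}]\,E[\delta_1^{-1}]\,(E[\delta^{-1}])^{k-1}.
\]
Here $E[\phi^{-1}]<\infty$ because $\nu>2$, and $E[\delta_l^{-1}]=1/(a_2-1)<1$ because $a_2>2$. The series is therefore geometric, so $E\sum_{k>K}\lambda_{ik}^2\to0$, and the second-moment bound finishes the argument.

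Either route concludes the proof. I would present the second, since it also gives the geometric rate, while checking that the truncation $|\rho_k|\le1-c$ is exactly what makes the constant uniform in $k$. The first route serves as a fallback that works whenever part (i) holds.
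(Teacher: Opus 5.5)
Your first route is essentially the paper's own proof. The paper also conditions on $\bm{\Lambda}$ and uses the almost-sure summability from part (i); in the SAR case that summability comes from a separate fractional-moment lemma. It bounds the conditional tail by $2\exp\{-\xi^2/(2\sum_{k>K}\lambda_{ik}^2)\}$ where you use Chebyshev, and finishes with dominated convergence and a union bound over $i$ and $A$.

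Your preferred second route is genuinely different. You bound $E\sum_{k>K}\lambda_{ik}^2$ by a geometric tail and apply Chebyshev unconditionally. This bypasses part (i) entirely (it even implies it) and avoids the Borel--Cantelli and strong-law argument for $\log\tau_k$. It also gives an explicit rate $O((a_2-1)^{-K})$.

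It costs two hypotheses that the paper's route does not need.
\begin{itemize}
\item $E[\delta_1^{-1}]=1/(a_1-1)$ is finite only for $a_1>1$. This is not in the paper's assumption, which requires only $\nu>2$ and $a_2>2$. The default $a_1=2.1$ satisfies it, and you can drop it anyway by conditioning on $\delta_1$ and applying dominated convergence to $\min\{1,C\xi^{-2}\delta_1^{-1}\sum_{k>K}(a_2-1)^{-(k-1)}\}$.
\item Your uniform bound $\|(I_M-\rho_kW)^{-1}\|_\infty\le 1/c$ relies on the main-text truncation $|\rho_k|\le 1-c$. The appendix deliberately allows $\rho_k\sim\mathrm{TN}_{(-1,1)}$, and there your route fails. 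Since $W\iota_M=\iota_M$, we have $(I_M-\rho_kW)^{-1}\iota_M=(1-\rho_k)^{-1}\iota_M$. So the trace of the conditional covariance of $\bm{\Lambda}_{\cdot k}$ is at least of order $(1-\rho_k)^{-2}$. That is not integrable against a density positive at $\rho=1$, hence $E\sum_i\lambda_{ik}^2=\infty$.
\end{itemize}

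This is exactly why the paper works with fractional moments $E[(\lambda_{ik}^2\tau_k)^\gamma]$ for $\gamma<1/2$, where $E[(1-|\rho_k|)^{-2\gamma}]<\infty$. It then obtains summability via Borel--Cantelli and the strong law for $\log\tau_k$. That argument gives no rate, but it covers the untruncated prior and imposes nothing on $a_1$.

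For the model as specified in the main text, with $c>0$, both of your routes are valid.
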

This property on the approximation result is also effective in showing the posterior consistency.
In the proof, we use the Schwartz's theorem \citep{schwartztheorem, ghosal2017fundamentals}, and to this end, we show that the prior assigns positive probability to an arbitrary small KL neighborhood of the true distribution $P^{(0)}$.
Thus, intuitively speaking, the totality of $\psi_{i}(A)$ needs to become arbitrary close to the ``true $\psi$'' with positive probability, and in this proof, Proposition~\ref{prop: approximation} is effective because it allows us to instead consider the closeness between the true structure and the finite version $\psi_{i,K}(A)$, which is substantially more convenient to analyze.

By utilizing the properties of the proposed factor model introduced above, we can establish the posterior consistency.
\begin{thm}
  \label{thm: consistency}
  Under the infinite factor prior, the KL support condition holds: for any $\epsilon > 0$,
  \[
    P\!\left(\sum_{i=1}^{M} D_{KL}\!\left(P_i^{(0)}\,\|\,P_i\right) < \epsilon\right) > 0.
  \]
  Consequently, by Schwartz's theorem, the posterior distribution is weakly consistent.
\end{thm}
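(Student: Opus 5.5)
The plan is to prove the KL support condition in three stages. First, approximate the possibly-boundary truth by an interior target. Second, reduce to a finite truncation via Proposition~\ref{prop: approximation}. Third, show the truncated Gaussian factor prior puts positive mass near the target.

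\emph{Step 1 (interior target).} Since $P_i^{(0)}$ may have zero cells, I will not try to match a ``true $\psi$''. Instead I fix the mixture $\tilde P_i = (1-\gamma)P_i^{(0)} + \gamma U$, where $U$ is uniform on the $N+1$ categories and $\gamma\in(0,1)$. Then $\tilde P_i \in \mathcal{P}_{N+1}$ and $\tilde P_{i,c}\ge \gamma/(N+1)$. The key bound is
\[
D_{KL}(P_i^{(0)}\|\tilde P_i)=\sum_{c:P^{(0)}_{i,c}>0}P^{(0)}_{i,c}\log\frac{P^{(0)}_{i,c}}{(1-\gamma)P^{(0)}_{i,c}+\gamma/(N+1)}\le -\log(1-\gamma).
\]
This is small for small $\gamma$, and zero cells contribute nothing because of the convention $0\log 0=0$. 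I also set $\tilde{\bm\psi}_i=\mathcal{E}_T(\tilde P_i)$, which is a finite vector with $\max_{i,A}|\tilde\psi_i(A)|\le C_\gamma$.

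Next I decompose. For a model distribution $P_i$,
\[
D_{KL}(P_i^{(0)}\|P_i)=D_{KL}(P_i^{(0)}\|\tilde P_i)+\sum_c P^{(0)}_{i,c}\log(\tilde P_{i,c}/P_{i,c}).
\]
Each leaf mass is a product of at most $N$ factors of the form $\theta$ or $1-\theta$. Since $|\log\sigma(x)-\log\sigma(y)|\le|x-y|$ for the logistic function $\sigma$, it suffices that $\max_{i,A}|\psi_i(A)-\tilde\psi_i(A)|<\delta$. That gives $\log(\tilde P_{i,c}/P_{i,c})\le N\delta$ uniformly. This Lipschitz property of the tree-logit map on all of $\mathbb{R}^N$ is what makes the zero-cell case harmless. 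Choosing $\gamma$ and $\delta$ so that $M(-\log(1-\gamma)+N\delta)<\epsilon$, it remains to show
\[
P\Big(\max_{i,A}|\psi_i(A)-\tilde\psi_i(A)|<\delta\Big)>0.
\]

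\emph{Step 2 (truncation).} By Proposition~\ref{prop: approximation}, I choose $K\ge\max(M,N)$ with $P(\max_{i,A}|\psi_i(A)-\psi_{i,K}(A)|>\delta/2)<\alpha$ for a small $\alpha$. I then want to show that the event
\[
E_K=\Big\{\max_{i,A}|\psi_{i,K}(A)-\tilde\psi_i(A)|<\delta/2\Big\}
\]
has probability exceeding $\alpha$. The union bound then gives positive probability for the full event.

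\emph{Step 3 (positive mass for the finite model), the main obstacle.} One subtlety is that $P(E_K)$ must not shrink faster than the tail probability $\alpha_K$ as $K$ grows. To handle this, I fix $K_0=\min(M,N)$ and realize the target through the first $K_0$ columns. Since $\tilde\Psi-\mu\iota^\top$ is an $M\times N$ matrix of rank at most $K_0$, I can write it as $\Lambda^*\eta^*$ with finite $\Lambda^*\in\mathbb{R}^{M\times K_0}$ and $\eta^*$.

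Conditional on $(\tau_k,\Phi_k,\rho_k)$, the columns $\bm\Lambda_{\cdot k}$ are Gaussian with the positive-definite precision $\tau_k(I-\rho_kW^\top)\Phi_k(I-\rho_kW)$. Positive definiteness holds because $|\rho_k|<1$ and $W$ is row-stochastic, so $I-\rho_kW$ is invertible. The hyperparameters have positive densities on open sets and $\eta_k(A)$ is standard Gaussian. Hence the first $K_0$ columns together with $\eta_{1:K_0}$ lie within $\delta/4$ of $(\Lambda^*,\eta^*)$ with some probability $q>0$ that does not depend on $K$.

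The remaining terms $\sum_{k>K_0}\lambda_{ik}\eta_k(A)$ are independent of this event given the $\delta_l$'s, up to the common multiplicative-gamma factors. I bound them in probability using the same tail argument as in Proposition~\ref{prop: approximation}, which relies on $a_2>2$ and $\nu>2$. Doing this conditionally on $\delta_1,\dots,\delta_{K_0}$ lying in a compact set shows that they are below $\delta/4$ with conditional probability at least $1/2$ for all large $K$. This yields
\[
P(E_K)\ge q/2 \quad\text{and}\quad \alpha_K\to0,
\]
so the union bound closes the argument.

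Weak consistency then follows from Schwartz's theorem in its standard form.
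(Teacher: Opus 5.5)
Your Step~1 is correct, and it handles zero cells differently from the paper, which truncates the logits at $\pm L$. Mixing with the uniform distribution gives $D_{KL}(P_i^{(0)}\|\tilde P_i)\le-\log(1-\gamma)$ directly. Bounding $\sum_c P^{(0)}_{i,c}\log(\tilde P_{i,c}/P_{i,c})$ by $\max_c\log(\tilde P_{i,c}/P_{i,c})\le N\delta$ is also valid, and you track the factor $M$ correctly.

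The gap is in Step~3, exactly where you locate the main obstacle. You assert that, conditionally on $\delta_1,\dots,\delta_{K_0}$ lying in a compact set, $\max_{i,A}|\sum_{K_0<k\le K}\lambda_{ik}\eta_k(A)|<\delta/4$ holds with probability at least $1/2$ for all large $K$, citing the tail argument behind Proposition~\ref{prop: approximation}. That argument controls $\sum_{k>K}\lambda_{ik}\eta_k(A)$ as the \emph{starting} index $K\to\infty$; your block starts at the fixed index $K_0$. As $K\to\infty$ the block converges to $\sum_{k>K_0}\lambda_{ik}\eta_k(A)$, a nondegenerate variable whose law does not depend on the tolerance $\delta$. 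The probability that it lies in $(-\delta/4,\delta/4)$ therefore tends to $0$ as $\delta\to0$, so it cannot stay above $1/2$ for the small $\delta$ that Step~1 requires. Confining $\delta_1,\dots,\delta_{K_0}$ to a compact set does nothing to shrink the loadings with $k>K_0$. Consequently $P(E_K)\ge q/2$ is unproved, and your subtractive bound $P(E_K)-\alpha_K>0$ still needs a lower bound on $P(E_K)$ that is uniform in $K$.

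The repair is to combine head and remainder multiplicatively rather than by a union bound, which is what the paper does. It conditions on $\bm\delta$ and uses that the blocks $k\le K$ and $k>K$ are conditionally independent given $\bm\delta$. The head event then has positive conditional probability by the everywhere-positive densities, and the tail event has conditional probability tending to one. So a single large $K$ suffices, and no uniformity in $K$ is needed.

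In your setup you can split directly at $K_0$ and drop the truncation altogether:
\begin{itemize}
\item Conditionally on $\bm\Lambda$, the head $\sum_{k\le K_0}\lambda_{ik}\eta_k(A)$ and the remainder $R_i(A)=\sum_{k>K_0}\lambda_{ik}\eta_k(A)$ are independent.
\item For each $A$, the vector $(R_1(A),\dots,R_M(A))^\top$ is centered Gaussian with covariance $\sum_{k>K_0}\bm\Lambda_{\cdot k}\bm\Lambda_{\cdot k}^\top$. This is almost surely finite by the summability of $\sum_k\lambda_{ik}^2$, so $P(\max_{i,A}|R_i(A)|<\delta/4\mid\bm\Lambda)>0$ almost surely.
\item The head event has positive conditional probability on the positive-probability set where $\bm\Lambda_{\cdot 1:K_0}$ is close to $\Lambda^*$.
\item Taking the expectation of the product of these two conditional probabilities gives a positive lower bound.
\end{itemize}
If you prefer to keep the union-bound structure, Anderson's inequality gives a $K$-free positive lower bound. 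It applies because the block covariance $\sum_{K_0<k\le K}\bm\Lambda_{\cdot k}\bm\Lambda_{\cdot k}^\top$ is dominated by the full one. The prior mechanism that actually makes the entire remainder small with high probability is a large $\delta_{K_0+1}$, which rescales every $\lambda_{ik}^2$ with $k>K_0$ by $\delta_{K_0+1}^{-1}$. The claimed $1/2$ via compactness of $\delta_{1:K_0}$, however, cannot be kept.
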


Although Theorem~\ref{thm: consistency} is stated under the SAR prior, the same posterior consistency conclusion is expected to extend to other spatial dependence structures, such as CAR priors or Gaussian process priors, whenever the induced prior on the loading process is sufficiently well behaved to control the infinite-factor representation and sufficiently flexible to approximate the true distribution. In that case, the prior continues to assign positive mass to Kullback--Leibler neighborhoods of the truth, and the consistency argument goes through in the same way. Concrete sufficient conditions and the corresponding proof strategy are discussed in Appendix~\ref{appendix:proof_spatial}.

\section{Numerical Experiments}
\label{sec: Numerical Experiments}

In this section, we evaluate the proposed distributional factor model through simulation experiments and an empirical analysis of population data.
The simulation experiments are designed to examine how the proposed model performs under various types of distributional structures, such as smooth heterogeneity and multidimensional categorical structures.

We evaluate the proposed factor model, using the MV tree builder described in Section~\ref{sec: tree construction} and a simple splitting algorithm (``Midpoint'') that always splits the node at the middle point and generates child nodes with equal number of categories after splitting.
(The details of the tree building algorithms, including those used in the sensitivity analysis, are described in Appendix~\ref{sec: Sensitivity Analysis on tree building algorithms}.)
We show a comparison with two benchmark methods: a Dirichlet process mixture model (DPM) and an independent Dirichlet model (ID).
The ID benchmark fits a separate Dirichlet-multinomial model at each location and therefore does not borrow information across locations. 
This section provides a comparison between the proposed factor model with the SAR-type correlation and the benchmark models.
One may be concerned with the fairness of the experiment since the benchmark models (especially, DPM) do not incorporate the spatial information.
Thus, we conducted another type of comparison using the proposed model with no spatial correlation with $\rho_k = 0$. 
Appendix~\ref{appendix: additional numerical results} provides the result, which is essentially the same as the spatial case. 

For the MCMC sampling, we use 5000 draws simulated after the burn-in period of size 1000. 
The maximum number of factors $K$ is set to $10$.
Additional results for fixed spatial correlation and alternative tree builders are reported in the Appendix~\ref{sec: Sensitivity Analysis on tree building algorithms}.

In both experiments, to evaluate the performance of the competing methods, we simulate 50 different data sets for each scenario and calculate the Kullback--Leibler (KL) divergence and the Hellinger distance.

In all simulation scenarios considered in the experiments, we simulate $M=200$ locations, by independently sampling each location from the uniform distribution defined on the two-dimensional set $(-1,1]^2$, and a subset of the other locations within the radius $0.15$ is considered as adjacent neighbors to define the adjacency matrix $W$.
The coordinates for the $i$-th location are denoted by $(x_i, y_i)$.
The number of observations in each location is determined by the tuning parameter $n$ and uniformly distributed between $0.5 n$ and $2n$.
The setting of $n$ changes according to scenarios.

Given the locations, we define various types of spatial correlation structures defined on one-dimensional and three-dimensional spaces.
We emphasize that all scenarios are introduced to check if the proposed model works in general situations, and in  {\it none} of the scenarios we assume that the data-generating process is correctly specified by the factor model.
In the following description, the true distribution defined for the $i$-th location ($i=1,\dots,M$) is denoted by $P^{(0)}_i$, as in Section~\ref{sec: theoretical properties}.
The detailed descriptions of the generating models are provided in Appendix~\ref{appendix: Details of the numerical experiments}.

\subsection{One-dimensional scenarios}
\label{subsec: One-dimensional scenarios}
The one-dimensional scenarios are defined on the sample space consisting of ordered eight categories obtained by discretizing the finite continuous space $(0, \infty)$ at the knots $20, 30, \dots, 80$ to simulate artificial age distributions.
Each scenario specifies latent continuous distributions and discretizes them to define the true location distribution $P^{(0)}_i$.
We consider the following two scenarios for the latent continuous distributions.

\begin{description}

\item[1. ``Two-parameters'']
This scenario is motivated by \cite{sugasawa2020estimation}, and the latent distribution at the $i$-th location is specified by the log-normal distribution
\[
\mathrm{LogNormal}(\mu_i,\sigma_i^2),
\]
where
\[
\mu_i = 0.1 + x_i^2 + y_i^2,
\qquad
\log \sigma_i^2 = 0.2x_i + 0.2y_i.
\]

\item[2: ``Mixture'']
The latent continuous distribution at the \(i\)-th location is a mixture of three log-normal distributions
\[
\sum_{k=1}^3
w_{ik}
f_{\mathrm{LN}}(a;\mu_k,\sigma_k^2),
\]
where the mixture weight $w_{ik}$ depends on the coordinates $(x_i, y_i)$.
The details such as the functional form of $w_{ik}$ and the parameters $\mu_k$ and $\sigma^2_k$ are provided in Appendix~\ref{appendix: Details of the numerical experiments}.
\end{description}

As an illustration, Figure~\ref{fig: example of the mixture scenario} visualizes one simulated data set for the ``mixture'' scenario by showing the realized locations and the true distribution $P^{(0)}_i$ at the selected locations.
From the figure, we can see that nearby locations have similar distributions, whereas distant groups show different distributional shapes.

\begin{figure}[htb]
    \centering
        \includegraphics[width = 0.85\textwidth]{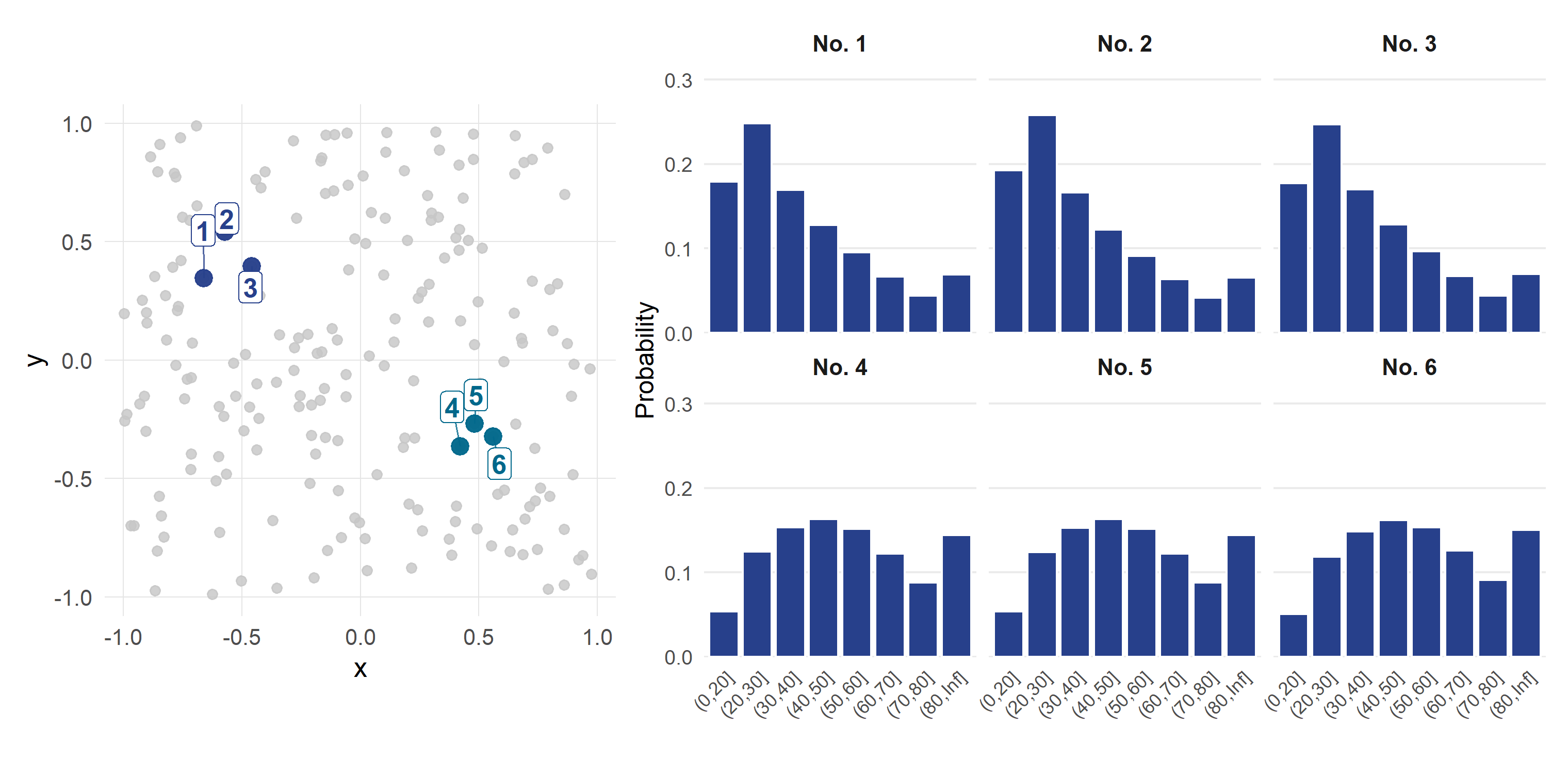}
    \caption{An illustration of the one-dimensional ``mixture'' simulation scenario.
The left panel shows 200 generated locations and six selected locations, grouped as nearby triples.
The right panels display the true probability vectors $P^{(0)}_i$ at the six selected locations.
    }
    \label{fig: example of the mixture scenario}
\end{figure}

We carried out the experiment under two sample size settings ($n=200$ and $n=1000$), and the results are summarized in Table~\ref{tab:simulation-1d-default-small}.
The results show that the proposed factor model outperforms both benchmark models, DPM and ID, in both one-dimensional scenarios and under both sample size settings.
The MV and Midpoint tree constructions give almost identical performance, suggesting that the estimation accuracy is not highly sensitive to this particular choice of tree construction in these one-dimensional settings.

The comparison with DPM clarifies the difference between mixture-based and factor-based representations.
In these one-dimensional scenarios, the true distributions vary smoothly over the location space rather than forming clearly separated clusters.
In such settings, representing the variation by a small number of latent factors is more natural than assigning each distribution to a latent cluster.
This explains why the proposed factor model can estimate the smoothly varying distributions more accurately than DPM.

Moreover, the comparison with ID highlights the importance of borrowing information across locations.
Although ID is flexible as a model for each categorical distribution, it estimates each location-specific distribution independently.
Consequently, it cannot make use of a common low-dimensional structure or spatial smoothness shared across locations.
The deterioration of ID under the reduced sample size indicates that such information sharing is especially important when the local count is limited.

\begin{table}[htbp]
\centering
\scriptsize
\caption{Simulation results for the one-dimensional scenarios under the original and reduced sample sizes. Entries show mean (SD) over 50 replicates. Reported values are $10^2 \times$ KL divergence and $10^2 \times$ Hellinger distance; lower values are better for both metrics.}
\resizebox{\textwidth}{!}{%
\begin{tabular}{llllcc}
\toprule
 & & & & \multicolumn{2}{c}{Metric} \\
\cmidrule(lr){5-6}
Scenario & Size & Model & Method & $10^2 \times$ KL div. & $10^2 \times$ Hellinger \\
\midrule
Two-param. & $n=1000$ & Factor (proposed) & MV & \textbf{0.061(0.006)} & \textbf{1.075(0.055)} \\
 &  &  & Midpoint & \textbf{0.061(0.006)} & 1.076(0.055) \\
 &  & Benchmark & DPM & 0.163(0.015) & 1.810(0.081) \\
 &  &  & ID & 0.280(0.015) & 2.477(0.065) \\
 & $n=200$ & Factor (proposed) & MV & 0.217(0.025) & 2.013(0.109) \\
 &  &  & Midpoint & \textbf{0.216(0.025)} & \textbf{2.007(0.107)} \\
 &  & Benchmark & DPM & 0.513(0.050) & 3.186(0.153) \\
 &  &  & ID & 1.353(0.071) & 5.425(0.131) \\
\midrule
Mixture & $n=1000$ & Factor (proposed) & MV & \textbf{0.044(0.005)} & 0.931(0.056) \\
 &  &  & Midpoint & \textbf{0.044(0.005)} & \textbf{0.924(0.059)} \\
 &  & Benchmark & DPM & 0.081(0.011) & 1.182(0.080) \\
 &  &  & ID & 0.327(0.018) & 2.697(0.073) \\
 & $n=200$ & Factor (proposed) & MV & 0.163(0.023) & 1.736(0.103) \\
 &  &  & Midpoint & \textbf{0.161(0.024)} & \textbf{1.731(0.104)} \\
 &  & Benchmark & DPM & 0.258(0.035) & 2.093(0.139) \\
 &  &  & ID & 1.614(0.088) & 5.955(0.155) \\
\bottomrule
\end{tabular}%
}
\label{tab:simulation-1d-default-small}
\end{table}

\subsection{Three-dimensional scenarios}
\label{subsec: Three-dimensional scenarios}

The three-dimensional scenarios are defined on the product sample space $\mathcal X
=
\mathcal A \times \mathcal B \times \mathcal C$, where each set consists of ordered categories as
\[
\mathcal A=\{1,\ldots,8\},\qquad
\mathcal B=\{1,2\},\qquad
\mathcal C=\{1,2,3\}.
\]
For the \(i\)-th location, the true joint probability $P^{(0)}_i$ is denoted by
\[
P^{(0)}_i(a,b,c),
\qquad
(a,b,c)\in \mathcal A\times\mathcal B\times\mathcal C.
\]
The first dimension \(\mathcal A\) is interpreted as the same ordered eight-category partition used in the one-dimensional scenarios, while the other two dimensions are treated as additional categorical coordinates.
We consider the following three scenarios.

\begin{description}

\item[1. ``3D axis-separated'']
In this scenario, the joint distribution factorizes as
\[
P^{(0)}_i(a,b,c)
=
p_i^{(1)}(a)p_i^{(2)}(b)p_i^{(3)}(c).
\]
The first margin \(p_i^{(1)}\) is generated from a nearly common ordered distribution, while the second and third margins vary smoothly along different spatial coordinates.
Specifically, \(p_i^{(2)}\) changes mainly with \(x_i\), and \(p_i^{(3)}\) changes mainly with \(y_i\).
This scenario is introduced to represent a relatively simple three-dimensional structure in which the main spatial variation is separated across the categorical dimensions.

\item[2. ``3D latent'']
In this scenario, the joint distribution again takes the product form
\[
P^{(0)}_i(a,b,c)
=
p_i^{(1)}(a)p_i^{(2)}(b)p_i^{(3)}(c),
\]
but the three marginal distributions are driven by a common latent spatial variable.
Let \(U_i\) denote a latent variable generated from a smooth Gaussian process over the two-dimensional location space.
Conditional on \(U_i\), each margin is generated through a logistic or softmax model.
Thus, this scenario creates dependence among the three categorical dimensions through a shared latent spatial factor, even though the margins are conditionally independent given \(U_i\).

\item[3. ``3D cluster'']
In this scenario, the spatial domain is divided into nine rectangular regions, and each region is assigned a cluster-specific joint distribution on
\(\mathcal A\times\mathcal B\times\mathcal C\).
For location \(i\), let \(c(i)\) denote its spatial cluster membership.
Then the true distribution is given by
\[
P^{(0)}_i(a,b,c)
=
\pi_{c(m)}(a,b,c),
\]
where each \(\pi_r\) ($r=1,\dots,9$) is a cluster-specific probability vector.
This scenario is intentionally favorable to mixture-based methods, since the true data-generating mechanism has a spatially clustered structure.

\end{description}

We carried out the experiment under the two settings for sample size ($n=1000$ and $n=5000$), and the results are summarized in Table~\ref{tab:simulation-3d-default-small}.
The overall pattern in the ``3D axis-separated'' and ``3D latent'' scenarios is consistent with the one-dimensional results.
Although these scenarios are more challenging because the probability vector is defined on a product space over 48 categories, the proposed factor model performs well when the true distributions vary smoothly over the spatial domain.
In particular, it clearly outperforms DPM and ID under the reduced sample size, and it gives the best or nearly best performance under the larger sample size.
The MV and Midpoint tree constructions again give very similar results.

These results demonstrate that the advantage of the factor representation is not limited to one-dimensional ordered categories.
Even in multidimensional categorical spaces, the proposed model can exploit common low-dimensional variation across locations once the distributions are embedded through the tree transformation.

The result for the ``3D cluster'' scenario should be interpreted separately.
In this scenario, the data-generating process is explicitly cluster-based, and therefore DPM is expected to have an advantage.
Indeed, DPM gives the best performance in this scenario.
This result does not contradict the usefulness of the proposed factor model; rather, it clarifies the type of structure for which each model is well suited.
When the true distributions form clear clusters, a mixture model is naturally appropriate.
When distributions vary continuously over locations without clear cluster boundaries, the factor model provides a more suitable representation.

Further, the fact that the proposed factor model outperforms the mixture model is not the most essential result.
As discussed above, a main advantage of using a factor model is that we can summarize the information of the data using lower-dimensional characteristics so that the result becomes interpretable.
We demonstrate this interpretability using the real population data in Section~\ref{sec: Application to The Spatiotemporal Population Data}.

\begin{table}[htbp]
\centering
\scriptsize
\caption{Simulation results for the three-dimensional scenarios under the original and reduced sample sizes. Entries show mean (SD) over 50 replicates. Reported values are $10^2 \times$ KL divergence and $10^2 \times$ Hellinger distance; lower values are better for both metrics.}
\resizebox{\textwidth}{!}{%
\begin{tabular}{llllcc}
\toprule
 & & & & \multicolumn{2}{c}{Metric} \\
\cmidrule(lr){5-6}
Scenario & Size & Model & Method & $10^2 \times$ KL div. & $10^2 \times$ Hellinger \\
\midrule
3D axis-sep. & $n=5000$ & Factor (proposed) & MV & \textbf{0.038(0.002)} & 0.937(0.030) \\
 &  &  & Midpoint & \textbf{0.038(0.002)} & \textbf{0.936(0.028)} \\
 &  & Benchmark & DPM & 0.462(0.035) & 3.201(0.118) \\
 &  &  & ID & 0.437(0.012) & 3.216(0.042) \\
 & $n=1000$ & Factor (proposed) & MV & 0.155(0.010) & 1.882(0.057) \\
 &  &  & Midpoint & \textbf{0.153(0.009)} & \textbf{1.875(0.054)} \\
 &  & Benchmark & DPM & 1.071(0.118) & 4.857(0.261) \\
 &  &  & ID & 2.154(0.074) & 7.072(0.110) \\
\midrule
3D latent & $n=5000$ & Factor (proposed) & MV & 0.114(0.009) & 1.642(0.071) \\
 &  &  & Midpoint & \textbf{0.113(0.010)} & \textbf{1.640(0.078)} \\
 &  & Benchmark & DPM & 0.554(0.028) & 3.621(0.090) \\
 &  &  & ID & 0.430(0.016) & 3.194(0.056) \\
 & $n=1000$ & Factor (proposed) & MV & \textbf{0.534(0.055)} & \textbf{3.596(0.205)} \\
 &  &  & Midpoint & 0.535(0.055) & 3.599(0.206) \\
 &  & Benchmark & DPM & 1.059(0.135) & 4.918(0.288) \\
 &  &  & ID & 2.084(0.070) & 7.015(0.105) \\
\midrule
3D cluster & $n=5000$ & Factor (proposed) & MV & 0.111(0.011) & 1.907(0.113) \\
 &  &  & Midpoint & 0.111(0.011) & 1.907(0.109) \\
 &  & Benchmark & DPM & \textbf{0.017(0.001)} & \textbf{0.643(0.025)} \\
 &  &  & ID & 0.417(0.015) & 3.221(0.055) \\
 & $n=1000$ & Factor (proposed) & MV & 0.713(0.053) & 4.991(0.222) \\
 &  &  & Midpoint & 0.713(0.057) & 4.993(0.230) \\
 &  & Benchmark & DPM & \textbf{0.082(0.005)} & \textbf{1.434(0.048)} \\
 &  &  & ID & 1.967(0.063) & 7.083(0.103) \\
\bottomrule
\end{tabular}%
}
\label{tab:simulation-3d-default-small}
\end{table}

\section{Application to the Spatial Population data}
\label{sec: application to the real data}
In this section, we first provide a quantitative evaluation of the proposed factor model and the benchmark models with the real spatial population data to check the efficacy of the models in a realistic situation.
After that, we discuss the implications obtained with the proposed model to clarify that, with the factor model, we can characterize the observed more than 400 distributions with a lower dimensional structure, and thus the result is highly interpretable.

\subsection{Evaluation of the models}
\label{subsec: Evaluation with real Population data}
In this empirical study, we analyze the stay-population data provided by NTT Docomo, one of the largest mobile carriers in Japan with more than 90 million accounts. 
Based on their operational data, the number of mobile terminals in each base station area is counted. 
Then, the population of each area is extrapolated with high accuracy using the penetration rate of NTT Docomo (see \cite{terada2013population, oyabu2013evaluating}). 
We consider the 500 meter mesh data, including $M=452$ observation areas (meshes), on the five special wards of Tokyo (Figure~\ref{fig:tokyomap}). 
A sample of the observed distributions is also illustrated in Figure~\ref{fig:tokyomap}.
The same data set has been used in different Bayesian contexts of spatial or spatiotemporal analysis \citep[see, for example,][]{wakayama2024spatiotemporal,wakayama_jrss}. 

\begin{figure}[htb]
    \centering
    \begin{tabular}{cc}
    \includegraphics[width=0.35\linewidth]{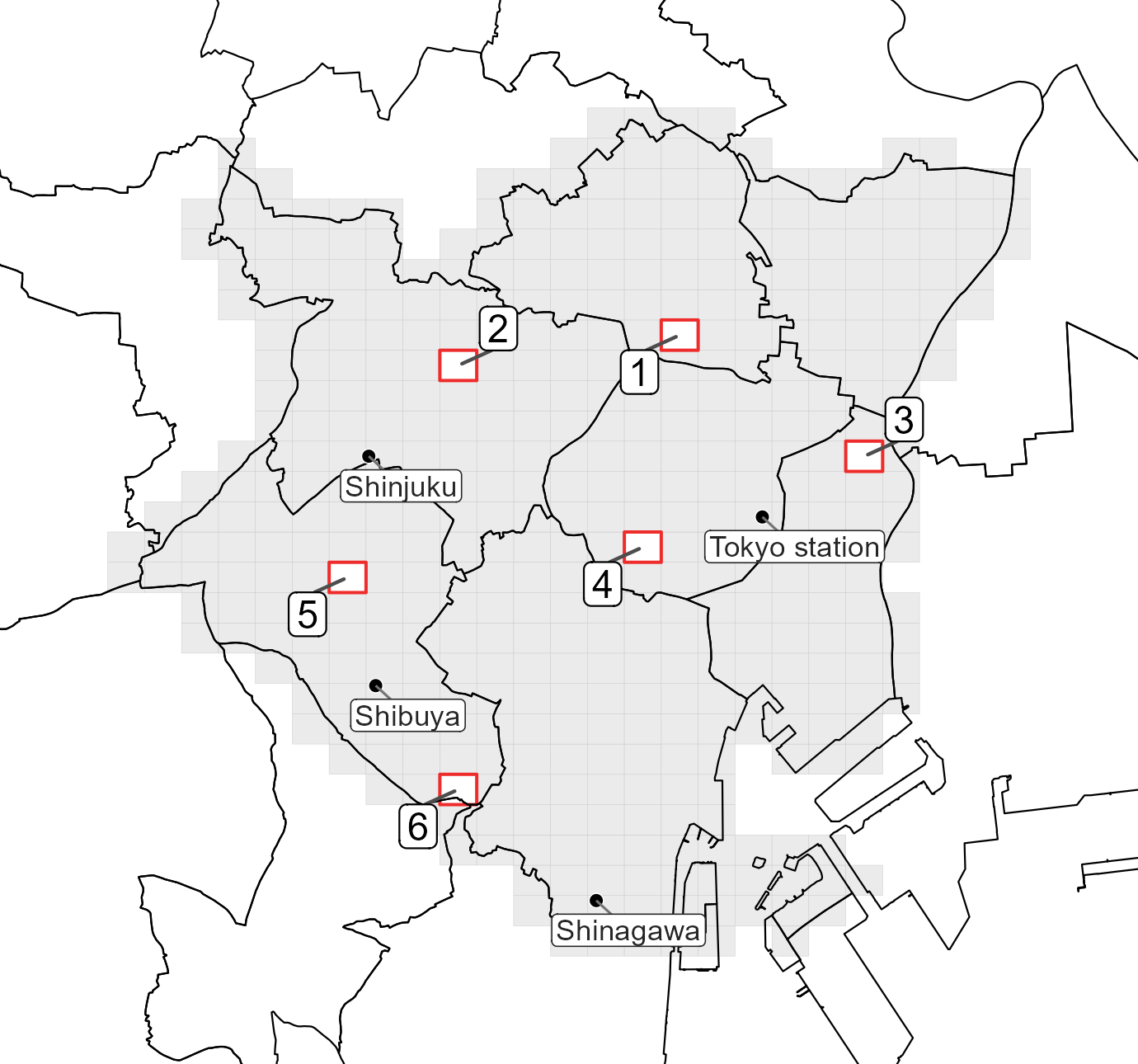}   
    &
    \includegraphics[width=0.6\linewidth]{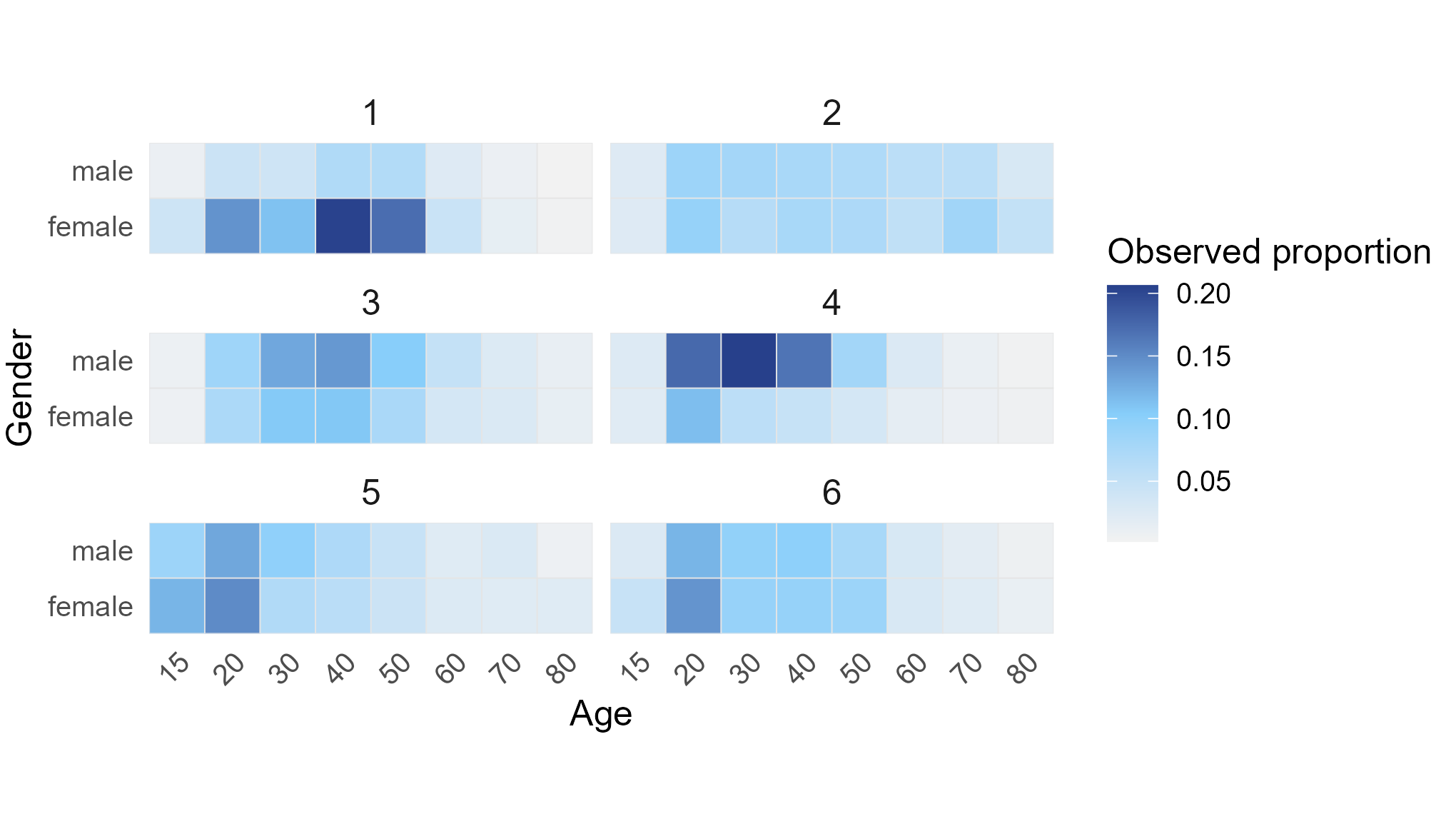}
    \end{tabular}
    \caption{The left plot shows a map of the study area where the spatiotemporal population data were collected (colored gray), alongside the locations of four representative districts or landmarks discussed in the analysis, and six specific areas whose observed distributions are visualized in the right plot. 
    }
    \label{fig:tokyomap}
\end{figure}

In contrast to the previous studies above, which only considered populations of gender-age-total, in this analysis, we consider the distributions of population by gender and age. 
Specifically, the dataset provides the two gender categories (Male and Female) and eight age groups: 15--19, 20--29, 30--39, 40--49, 50--59, 60--69, 70--79, and 80 years or older. 
Consequently, for each area and time point, the observation is represented as a frequency vector with $N+1=16$ categories. 
In this analysis, we extract nine specific time points from February, July, and December 2019 with various temporal characteristics, such as morning/night and weekday/holiday, as detailed in Table~\ref{tab:ppl-real-scale100-with-independent-dirichlet}, and independently analyze the data for the selected time points. 
Each area has a relatively large sample size; for example, for the first selected time (National Holiday), the average total count is approximately 6000. 
In order to evaluate the robustness of the methods under different amounts of information, we use the original data sets and also modified data in which all counts are reduced to one tenth by multiplying them with $0.1$ and then rounding them to integers.

Since the information on the true data-generating process is unavailable, the primary evaluation metric for comparing the three models is the Posterior Predictive Loss (PPL) defined in \citep{gelfand1998PPLdef}, where a smaller value indicates better trade-off between model fit and model complexity. 
Adapting their original formulation to the specific structure of the population data, we define the PPL for a given model $\mathcal{M}$ as
\begin{equation*}
    \mathrm{PPL}(\mathcal{M}) = \frac{1}{M}\sum_{i=1}^{M} \sum_{j=1}^{N+1} V_{ij}^{\mathcal{M}} + \frac{1}{M+1}\sum_{i=1}^{M} \sum_{j=1}^{N+1} (y_{ij}^{obs} - E_{ij}^{\mathcal{M}})^2,
\end{equation*}
where $y_{ij}^{obs}$ represents the observed count for the $j$-th category in the $i$-th area, and $E_{ij}^{\mathcal{M}}$ and $V_{ij}^{\mathcal{M}}$ denote the mean and variance of the posterior predictive distribution under model $\mathcal{M}$, respectively.

Table~\ref{tab:ppl-real-scale100-with-independent-dirichlet} provides a comparison of the PPLs for the proposed model and the benchmark models for the original data, Table~\ref{tab:ppl-real-scale010-with-independent-dirichlet} provides that for the artificially reduced data.
The results show a clear difference between the original and reduced data settings.
For the original data, the ID benchmark gives the smallest PPL values across all nine time points.
This result is reasonable because each location has a relatively large count, so estimating each location-specific categorical distribution independently is already accurate for predictive purposes.
In contrast, under the reduced data, the proposed Factor PT models result in the smallest PPL values for all time points.
This indicates that, when local counts are limited, borrowing information across locations through the common factor structure and the spatial prior is effective.

The comparison with DPM also supports the same interpretation as in the simulation experiments.
In both data settings, DPM gives larger PPL values than the best Factor PT model, and it is also worse than ID for most targets on the original scale.
These results suggest that the population distributions over the selected Tokyo area are not well summarized by a small number of discrete clusters.
This finding is consistent with the fact that we observed that the estimated number of clusters tends to be large for the original data and can be greater than 100 in several scenarios, as detailed in Appendix~\ref{appendix: additional numerical results}.
Hence, the variation in the observed distributions is better viewed as gradual spatial heterogeneity, for which the factor representation is more appropriate.

The role of the spatial component is most visible in the reduced data.
For the original data, the difference in the PPL between the spatial model where $\rho$ is estimated and non-spatial models with $\rho=0$ is relatively small, while the non-spatial model tends to result in slightly smaller PPL. 
This implies that the large local counts already contain enough information to accurately estimate each distribution.
After reducing the counts, however, the spatial models consistently exhibit improvement over the non-spatial models. 
Thus, when the local sample size is small, the SAR component would be most useful in stabilizing estimation through spatially borrowing information. 

\begin{table}[htbp]
\centering
\scriptsize
\caption{Posterior predictive loss (PPL) on the nine real-data targets for the original scale. Factor PT is shown with the MV and midpoint tree builders, each with either estimated $\rho$ or fixed $\rho=0$. 
}
\resizebox{\textwidth}{!}{%
\begin{tabular}{lcccccc}
\toprule
 & \multicolumn{4}{c}{Proposed Model} & \multicolumn{2}{c}{Benchmark} \\
\cmidrule(lr){2-5}\cmidrule(lr){6-7}
Time Point & MV & MV ($\rho=0$) & Midpoint & Midpoint ($\rho=0$) & DPM & ID \\
\midrule
Feb 11 (Mon) 12:00 & 13,472.4 & 13,304.8 & 13,455.9 & 13,282.0 & 18,609.3 & \textbf{10,981.1} \\
Feb 17 (Sun) 16:00 & 13,874.6 & 13,597.3 & 13,941.2 & 13,650.8 & 17,884.7 & \textbf{11,286.4} \\
Feb 20 (Wed) 14:00 & 24,897.1 & 24,543.9 & 24,957.2 & 24,585.7 & 33,833.0 & \textbf{20,038.8} \\
Jul 1 (Mon) 09:00 & 21,745.2 & 21,647.5 & 21,736.9 & 21,493.5 & 32,513.8 & \textbf{18,051.7} \\
Jul 20 (Sat) 16:00 & 15,396.6 & 15,157.0 & 15,574.0 & 15,335.8 & 20,943.8 & \textbf{12,594.2} \\
Jul 25 (Thu) 23:00 & 10,221.7 & 10,138.2 & 10,351.2 & 10,208.3 & 14,996.7 & \textbf{8,549.2} \\
Dec 11 (Wed) 05:00 & 9,276.9 & 9,155.3 & 9,429.7 & 9,242.1 & 13,449.6 & \textbf{7,663.8} \\
Dec 24 (Tue) 19:00 & 17,155.5 & 16,921.8 & 17,462.7 & 17,211.4 & 24,279.7 & \textbf{14,577.6} \\
Dec 31 (Tue) 18:00 & 9,049.8 & 8,957.9 & 9,090.4 & 8,983.0 & 12,469.7 & \textbf{7,644.9} \\
\bottomrule
\end{tabular}%
}
\label{tab:ppl-real-scale100-with-independent-dirichlet}
\end{table}

\begin{table}[htbp]
\centering
\scriptsize
\caption{Posterior predictive loss (PPL) on the nine real-data targets when the counts are reduced to one tenth of the original. Factor PT is shown with the MV and midpoint tree builders, each with either estimated $\rho$ or fixed $\rho=0$. 
}
\resizebox{\textwidth}{!}{%
\begin{tabular}{lcccccc}
\toprule
 & \multicolumn{4}{c}{Proposed Model} & \multicolumn{2}{c}{Benchmark} \\
\cmidrule(lr){2-5}\cmidrule(lr){6-7}
Time Point & MV & MV ($\rho=0$) & Midpoint & Midpoint ($\rho=0$) & DPM & ID \\
\midrule
Feb 11 (Mon) 12:00 & \textbf{857.0} & 892.7 & 858.8 & 893.2 & 1,168.7 & 1,091.3 \\
Feb 17 (Sun) 16:00 & \textbf{883.2} & 926.5 & 887.2 & 916.4 & 1,248.2 & 1,123.1 \\
Feb 20 (Wed) 14:00 & \textbf{1,646.5} & 1,708.4 & 1,655.5 & 1,723.1 & 2,430.0 & 1,997.8 \\
Jul 1 (Mon) 09:00 & \textbf{1,447.4} & 1,513.6 & 1,457.4 & 1,531.5 & 2,251.6 & 1,798.7 \\
Jul 20 (Sat) 16:00 & 1,001.4 & 1,026.0 & \textbf{999.1} & 1,029.7 & 1,361.8 & 1,254.1 \\
Jul 25 (Thu) 23:00 & \textbf{650.1} & 677.0 & 652.9 & 678.6 & 802.1 & 848.9 \\
Dec 11 (Wed) 05:00 & 574.8 & 599.7 & \textbf{573.9} & 599.2 & 694.8 & 759.9 \\
Dec 24 (Tue) 19:00 & \textbf{1,178.3} & 1,210.4 & 1,183.2 & 1,220.5 & 1,666.9 & 1,451.6 \\
Dec 31 (Tue) 18:00 & \textbf{578.9} & 595.5 & 585.1 & 598.7 & 715.8 & 757.6 \\
\bottomrule
\end{tabular}%
}
\label{tab:ppl-real-scale010-with-independent-dirichlet}
\end{table}

\subsection{Implication provided by the distributional factor analysis}
\label{sec: Application to The Spatiotemporal Population Data}

We next describe a detailed analysis of the results obtained with the proposed factor model, selecting the specific time and date at 19:00 on December 24th, as the implication is clear and highly interpretable.
Figure~\ref{fig: christmas tree} visualizes the result based on the tree constructed by the MV algorithm. 
The number of factors selected in the post-processing is $K^* = 4$ under the threshold 90\%.

For each factor index $k$ ($k=1,\dots,K^*$), one natural way to interpret the estimated factor is to compute the following Euclidean vectors
\[
    \{\mu(A) + c_k \eta_k(A)\}_{A \in \mathcal{N}(T)},
\]
where $c_k>0$ controls the magnitude of deviation along the $k$th factor direction, and we transform them into categorical distributions with the inverse embedding $\mathcal{E}_T^{-1}$.
In our analysis, we set $c_k$ to twice the standard deviation of the estimated factor loadings over the $M$ areas. 
If the $K^*$ factor distributions have similar structures with small visible differences, it is also effective to compute the difference between each factor distribution and the average observed distribution, which we calculate by the empirical mean of the observed proportions cell-wise.

The estimated factor distributions are shown in Figure~\ref{fig: estimated factor distributions (application)}.
It is seen that they exhibit various characteristics. 
For instance, compared to the average of the observed distributions, the first factor ($k=1$) exhibits lower probabilities especially for males in the age group 20--40, and higher probabilities for teenagers and females older than 60 years. 
Hence, a negative loading for the first factor places higher probabilities for males in their 20s to 40s, a demographic strongly associated with full-time employment. 
Conversely, a positive loading places higher probabilities for young people and older females. 
The results for the following factors ($k=2,3,4$) are also well interpretable because distributions with a positive loading for this factor increases the proportions of specific gender and/or age groups. 
The positive loadings are strongly associated with females in the age group 20--40 for $k=2$, young males and females aged 15--29 for $k=3$, and females in their 20s to 50s for $k=4$.

Figure~\ref{fig: estimated factor loadings (application)} visualizes the spatial distributions of the estimated factor loadings. 
These spatial distributions of the factor loadings, along with the estimated factor distributions, align well with socio-economic characteristics of Tokyo. 
For the first factor ($k=1$), the areas with negative loadings (indicated in blue) encompass major business districts and transportation hubs such as Tokyo and Shinagawa Stations (see Figure~\ref{fig:tokyomap} for the locations of these landmarks in Tokyo). 
Conversely, the areas with positive loadings (indicated in red) typically correspond to residential areas. 
The loadings for the second factor ($k=2$) are mostly positive across the study area, globally adjusting for female populations. 
For the third factor ($k=3$), the areas with positive loadings cover major commercial and entertainment areas, which attract young people. 
Lastly, the spatial distribution of the loadings for the fourth factor ($k=4$), which represents the stay population of young and middle-aged females, exhibits a more localized characteristic. 
The figure shows that the factor loading of Area~1, shown in Figure~\ref{fig:tokyomap}, resulted in a large positive value. 
This area is home to Tokyo Dome, a major multi-purpose stadium that functions as a central hub for large-scale sports and cultural events. 
In fact, on the specific date illustrated in the figure, a concert by a pop idol group popular among females was held there, as also observed in Figure~\ref{fig:tokyomap}. 
This result demonstrates that the proposed model can effectively capture such localized features in population distribution. 
The figure also shows positive loadings in a smaller neighborhood around Shibuya.

In summary, the results are highly interpretable; with the distributional analysis, we can estimate what demographic characteristics explain the variability in the population distributions in the region adaptively from the data, in the form of factor distributions.
Additionally, by mapping the estimated factor loadings, we can identify the areas in which the contribution of each factor is strong or weak. 
Such interpretable results are hard to obtain by estimating the distributions independently or with mixture models.

\begin{figure}[htbp]
    \centering
    \includegraphics[width=0.7\linewidth]{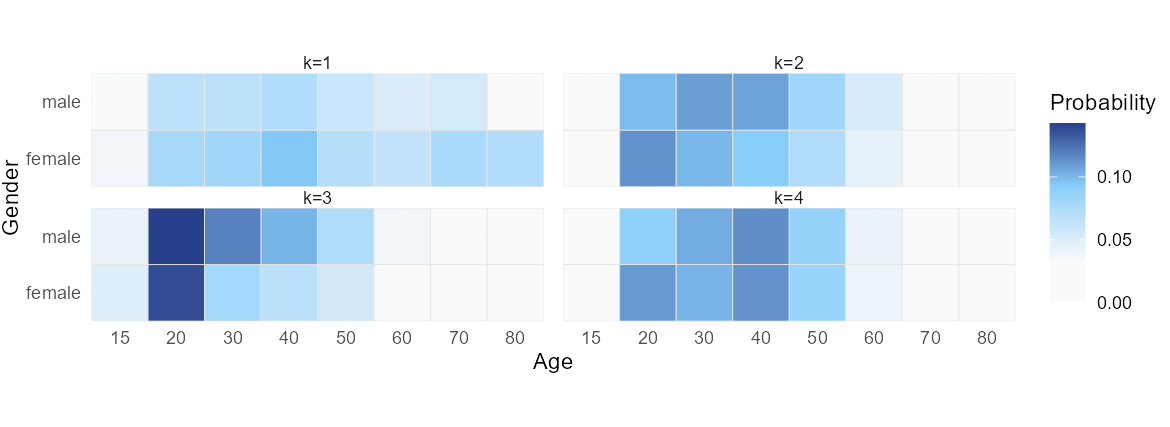}\\
    \includegraphics[width=0.7\linewidth]{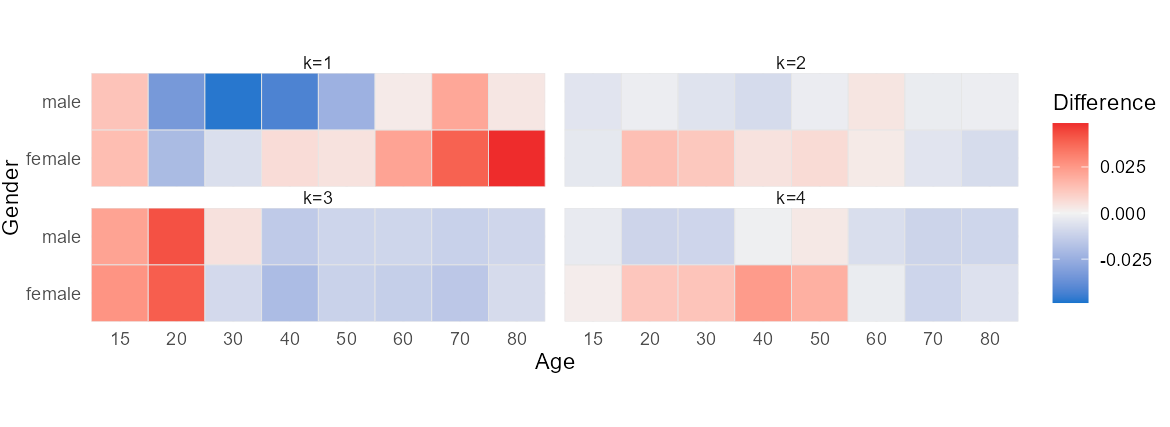}
    \caption{A visualization of the estimated factor distributions for the population data observed at 19:00 on 12/24/2019. 
    The factor distributions are displayed in  the top plot, and the difference from the average of observed distributions is shown in the bottom plot.
    }
    \label{fig: estimated factor distributions (application)}
\end{figure}

\begin{figure}[htbp]
    \centering
    \includegraphics[width=0.55\linewidth]{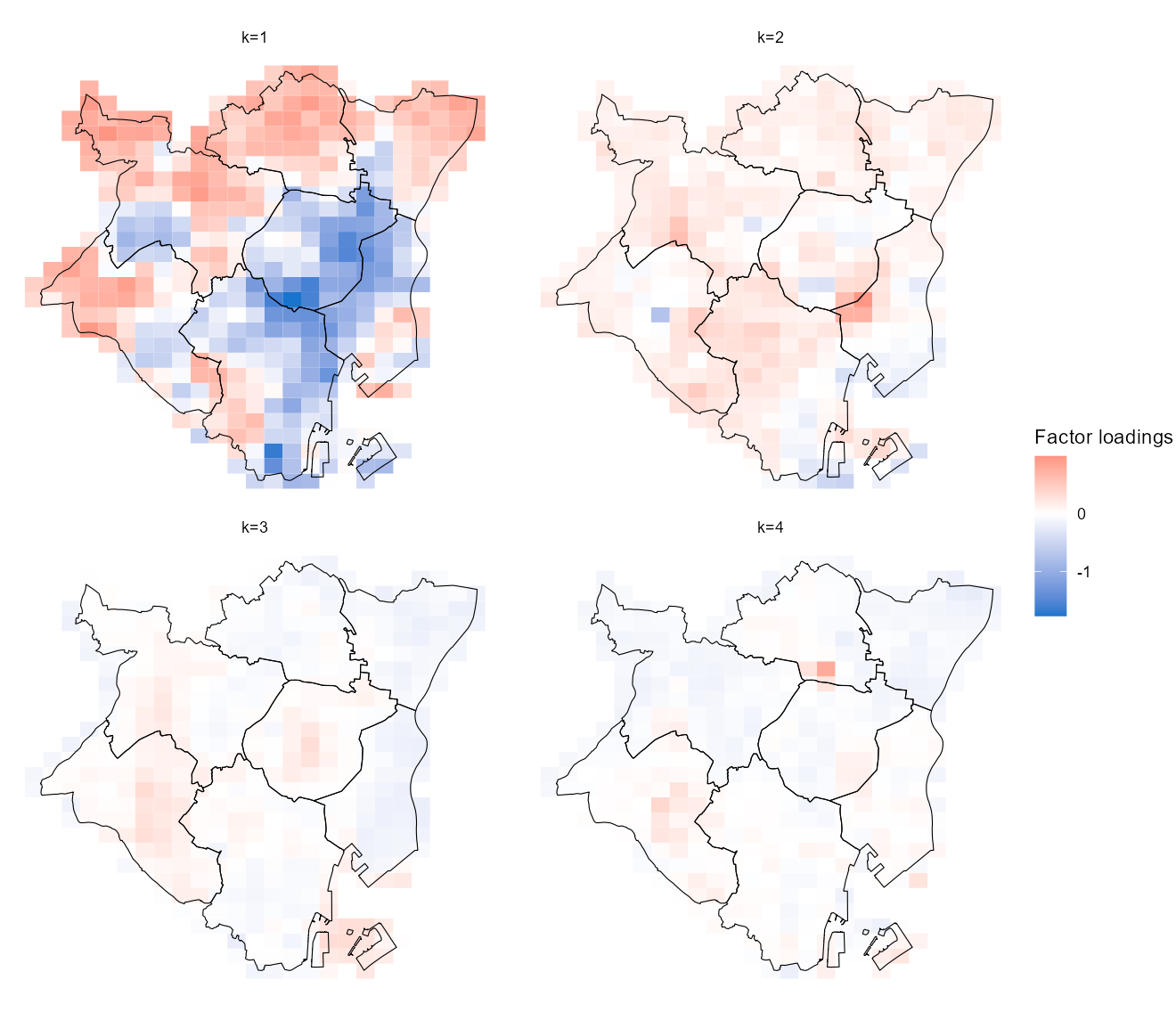}
    \caption{A visualization of the analysis of the population data (19:00 on 12/24/2019).
    The posterior means of the factor loadings defined for the $K^* = 4$ factors and the $M=452$ locations are visualized on the maps.
    }
    \label{fig: estimated factor loadings (application)}
\end{figure}

\section{Discussion}
\label{sec: discussion}
In this paper, we proposed a Bayesian distributional factor modeling framework that integrates tree-based representations of grouped data with infinite factor models to capture complex variability among multiple observed distributions through lower-dimensional latent factors. By embedding discrete distributions into Euclidean space via a logistic-tree transformation, the proposed approach enables flexible hierarchical modeling, as illustrated through the incorporation of a spatial SAR prior. The tree-based nonparametric representation allows diverse distributional shapes to be captured, while posterior computation remains tractable through \Polya Gamma augmentation. More broadly, the proposed framework provides a bridge between compositional data analysis and modern Bayesian factor modeling.

Through an empirical analysis of spatiotemporal population data in Tokyo, we demonstrated that the proposed method yields a parsimonious yet expressive characterization of heterogeneous demographic patterns, outperforming clustering-based and parametric alternatives in predictive performance under smooth spatial variations. These findings highlight the advantages of factor-based representations for distributional data exhibiting continuous variation rather than clearly separated cluster structures.

An important direction for future work is to extend the proposed framework to explicitly build a spatio-temporal dynamic model. For example, in the context of population data, such an extension could capture temporal patterns of distributions within a day or across seasons, and enable sequential prediction of counts for specific gender--age categories at each location.

\section{Acknowledgements}{
This work is partially supported by Japan Society for Promotion of Science (KAKENHI) grant number 24K00244, 25H00546, 25K16618, and 24H00142. 
}

\bibliographystyle{plainnat} 
\bibliography{references} 

\newpage

\appendix

\section{MCMC algorithm}
\label{appendix: MCMC algorithm}

The procedures to update each parameter in the MCMC algorithm are described below.
\begin{enumerate}
    \item (Updating $\omega_i(A)$) By \cite{polson2013bayesian}, the full conditional distribution of $\omega_i(A)$ is \[\mathrm{PG}(n_i(A), \psi_i(A)).\]
    \item (Updating $\bm{\Lambda}_{\cdot k}$)
    The full-conditional posterior of $\bm{\Lambda}_{\cdot k}$ is $N(\mathbf{m}_{\Lambda, k}, \bm{\Sigma}_{\Lambda, k})$, where
    \begin{align*}
        \bm{\Sigma}^{-1}_{\Lambda, k}
        &=
        \tau_k
        \left(
            I_M 
            -
            \rho_k W^T
        \right)
        \bm{\Phi}_k
        \left(
            I_M 
            -
            \rho_k W
        \right) 
        +
        \sum_{A \in \mathcal{N}(T)}
        \eta_k^2\, \bm{\Omega}(A), \\
        \bm{\Sigma}^{-1}_{\Lambda, k}\,
        \mathbf{m}_{\Lambda, k}
        &=
            \sum_{A \in \mathcal{N}(T)}
            \eta_k
            \left[
                \bm{\kappa}(A)
                -
                \bm{\Omega}(A)
                \left(
                \mu(A) \iota_M
                +
                \sum^K_{k'=1,\,k' \neq k}
                \eta_{k'}(A)
                \bm{\Lambda}_{\cdot k'}
                \right)
            \right].
    \end{align*}
    \item (Updating $\bm{\eta}(A)$)
    The full-conditional posterior of $\bm{\eta}(A)$ is $N(\mathbf{m}_{\bm{\eta}(A)}, \bm{\Sigma}_{\bm{\eta}(A)})$, where
    \begin{align*}
        \bm{\Sigma}_{\bm{\eta}(A)}^{-1}
        &=
        I_K
        +
        \bm{\Lambda}^T
        \bm{\Omega}(A)
        \bm{\Lambda}, \\
        \mathbf{m}_{\bm{\eta}(A)}
        &=
        \bm{\Sigma}_{\bm{\eta}(A)}
        \bm{\Lambda}^T
        \left[
            \bm{\kappa}(A) - 
            \mu(A) \bm{\Omega}(A) \iota_M
        \right].
    \end{align*}
    \item (Updating $\phi_{i,k}$)
    The full conditional posterior of $\phi_{i,k}$ is $\mathrm{Gamma}(a_{\phi, i,k}, b_{\phi, i, k})$, where
    \begin{align*}
        a_{\phi, i,k}
        =
        \frac{\nu + 1}{2},\
        b_{\phi, i, k}
        =
        \frac{1}{2}
        \left(
            \nu 
            +
            \tau_k
            \tilde{\bm{\Lambda}}_{i, k}^2
        \right), 
    \end{align*}
    where $\tilde{\bm{\Lambda}}_{\cdot k}
        =
        \{
            \tilde{\bm{\Lambda}}_{i, k}
        \}^M_{i=1}
        =
        (I_k - \rho_k W) \bm{\Lambda}_{\cdot k}$.
    \item (Updating $\tau_k$) We update $\tau_k$ via $\delta_1,\dots,\delta_K$ since $\tau_k = \prod^k_{k'=1} \delta_{k'}$. 
    To this end, we define $\tau^{(l)}_k$ for $l \geq k$ as
    \[
        \tau^{(l)}_k = \prod^l_{k'=1,\, k'\neq k} \delta_{k'}.
    \]
    Then, the full conditional posterior of $\delta_1$ is $\mathrm{Gamma}(a_{\delta, 1}, b_{\delta, 1})$, where
    \begin{align*}
        a_{\delta, 1}
        &=
        a_1 + \frac{MK}{2}, \\
        b_{\delta, 1}
        &=
        1 + 
        \frac{1}{2}
        \sum^{K}_{k'=1}
        \delta^{(1)}_{k'}
        \bm{\Lambda}_{\cdot k'}^T
        (I_M - \rho_{k'} W^T)
        \bm{\Phi}_{k'}
        (I_M - \rho_{k'} W)
        \bm{\Lambda}_{\cdot k'}.
    \end{align*}
    Similarly, the full conditional posterior for $\delta_k$ ($k=2,\dots,K$) is $\mathrm{Gamma}(a_{\delta, k}, b_{\delta, k})$, where
    \begin{align*}
        a_{\delta, k}
        &=
        a_2 + \frac{M(K-k+1)}{2}, \\
        b_{\delta, k}
        &=
        1 + 
        \frac{1}{2}
        \sum^{K}_{k'=k}
        \delta^{(k)}_{k'}
        \bm{\Lambda}_{\cdot k'}^T
        (I_M - \rho_{k'} W^T)
        \bm{\Phi}_{k'}
        (I_M - \rho_{k'} W)
        \bm{\Lambda}_{\cdot k'}.
    \end{align*}
    \item (Updating $\rho_k$) The full conditional posterior of $\rho_k$ is $\mathrm{Normal}_{(-\infty, \infty)}(m_{\rho, k}, s^2_{\rho, k})$, where
    \begin{align*}
        s^{-2}_{\rho, k}
        &=
        s^{-2}_\rho 
        +
        \tau_k \bm{\Lambda}_{\cdot k}^T W^T \bm{\Phi}_k W \bm{\Lambda}_{\cdot k},\\
        m_{\rho, k}
        &=
        s_{\rho,k}
        \left[
            s^{-2}_\rho m_\rho
            +
            \tau_k \bm{\Lambda}_{\cdot k}^T W^T \bm{\Phi}_k \bm{\Lambda}_{\cdot k}
        \right].
    \end{align*}
\end{enumerate}


\section{Proofs of theoretical properties}
\label{appendix:proofs}

\subsection{Independent loading case}
\label{appendix:proof_independent}
We first consider the setting without spatial correlation, where the factor loadings $\lambda_{ik}$ are drawn independently across locations.
Under the infinite factor model of \cite{bhattacharya2011sparse}, the latent parameter at location $i$ and internal node $A$ takes the form
\begin{equation}
  \psi_{i}(A)
  = \mu(A) + \sum_{k=1}^{\infty} \lambda_{ik}\,\eta_k(A),
  \label{eq:infinite_factor}
\end{equation}
where $\mu(A)$ is a fixed baseline, $\eta_k(A) \sim N(0,1)$ are independent standard normal latent factors, and the loadings follow the multiplicative gamma process prior with $\phi_{ik} \sim \mathrm{Ga}(\nu/2, \nu/2)$ ($\nu > 2$) and $\tau_k = \prod_{l=1}^k \delta_l$ with $\delta_1 \sim \mathrm{Ga}(a_1, 1)$ and $\delta_l \sim \mathrm{Ga}(a_2, 1)$ for $l \geq 2$ ($a_2 > 2$).
Let $P_i^{(0)}$ denote the true multinomial distribution at location $i$, with category probabilities $\{p_{i,l}^{(0)}\}_{l=1}^{N+1}$ where $N+1$ is the total number of categories.
The multinomial distribution $P_i^{(0)}$ is parameterized on the Pólya tree via the true conditional branching probabilities $\theta_i^{(0)}(A)$ at each internal node $A$, defined by the branching structure of the tree.
Applying the logit transformation to these branching probabilities yields the true Euclidean parameters $\psi_{i}^{(0)}(A) = \log\bigl(\theta_i^{(0)}(A) / (1 - \theta_i^{(0)}(A))\bigr) \in \mathbb{R}$.

In what follows, we assume that all categories in the true distribution have strictly positive probabilities, i.e., $p_{i,l}^{(0)} > 0$ for all $i$ and $l$.
We note that this assumption is made for clarity of exposition and can be relaxed: as discussed in Section~\ref{appendix:proof_boundary} below, posterior consistency continues to hold even when some categories have zero true probability, i.e., when the true distribution lies on the boundary of the probability simplex.

\begin{lem}
  \label{lem:truncation}
  Under the multiplicative gamma process prior with $\nu > 2$ and $a_2 > 2$:
  \begin{enumerate}
    \item[(i)] $P\!\left(\max_{1 \le i \le M}\sum_{k=1}^\infty \lambda_{ik}^2 < \infty\right) = 1$.
    \item[(ii)] For any $\xi > 0$, $\lim_{K \to \infty} P\!\left(\max_{i,A} |\psi_{i}(A) - \psi_{i,K}(A)| > \xi\right) = 0$.
  \end{enumerate}
\end{lem}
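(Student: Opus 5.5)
The plan is to work in the independent-loading setting: conditionally on $(\phi_{ik},\tau_k)$, $\lambda_{ik}\sim N(0,(\phi_{ik}\tau_k)^{-1})$. I would control everything through expectations and Markov's inequality.

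\emph{Part (i).} Since $M$ is finite, it suffices to show $E\sum_{k}\lambda_{ik}^2<\infty$ for each fixed $i$; then $\sum_k\lambda_{ik}^2<\infty$ almost surely for each $i$, and a finite union of null sets is null.

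By independence,
\[
E[\lambda_{ik}^2]=E[\phi_{ik}^{-1}]\,E[\tau_k^{-1}]=E[\phi_{ik}^{-1}]\,E[\delta_1^{-1}]\prod_{l=2}^k E[\delta_l^{-1}].
\]
The factors are evaluated as follows:
\begin{itemize}
\item For $\phi\sim\mathrm{Ga}(\nu/2,\nu/2)$ we have $E[\phi^{-1}]=\nu/(\nu-2)$, which is finite because $\nu>2$.
\item For $\delta_l\sim \mathrm{Ga}(a_2,1)$ we have $E[\delta_l^{-1}]=1/(a_2-1)<1$, since $a_2>2$.
\item For $\delta_1$, I need $E[\delta_1^{-1}]=1/(a_1-1)$, which requires $a_1>1$. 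This holds for the default $a_1=2.1$. Otherwise I would condition on $\delta_1$, which is almost surely finite and positive, and run the same argument.
\end{itemize}
Hence $E[\lambda_{ik}^2]\le C\,(a_2-1)^{-(k-1)}$, a geometric series, so $E\sum_k\lambda_{ik}^2<\infty$.

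\emph{Part (ii).} The tail is
\[
\psi_i(A)-\psi_{i,K}(A)=\sum_{k>K}\lambda_{ik}\eta_k(A).
\]
First I need this series to be well defined. Conditionally on the loadings, it is a Gaussian series with variance $\sum_{k}\lambda_{ik}^2$, which is finite almost surely by (i). It therefore converges in $L^2$, and almost surely by the Kolmogorov two-series theorem applied conditionally.

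For the bound, the $\eta_k(A)$ are independent of the loadings, so
\[
E\bigl[(\psi_i(A)-\psi_{i,K}(A))^2\bigr]=\sum_{k>K}E[\lambda_{ik}^2]\le C'\,(a_2-1)^{-K}\to 0 .
\]
A union bound over the finitely many pairs $(i,A)$, of which there are $MN$, together with Chebyshev's inequality, gives
\[
P\bigl(\max_{i,A}|\cdot|>\xi\bigr)\le MN\,\xi^{-2}C'(a_2-1)^{-K}\to0 .
\]

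\emph{Main obstacle.} The only delicate points are the finiteness of the inverse-gamma moments and the possible issue with $\delta_1$. Both are handled either by the hyperparameter assumptions or by the conditioning argument above. The rest of the argument is routine because $M$ and $N$ are finite.
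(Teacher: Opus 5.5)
Your proof is correct, but it takes a different route from the paper's. You control first inverse moments. By Tonelli, $E\sum_k\lambda_{ik}^2=E[\phi_{ik}^{-1}]\sum_k E[\tau_k^{-1}]$ is a convergent geometric series, because $E[\delta_l^{-1}]=1/(a_2-1)<1$. For (ii), Chebyshev then gives the explicit bound $MN\xi^{-2}C'(a_2-1)^{-K}$. This is essentially the original Bhattacharya--Dunson computation.

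The paper instead argues pathwise. It uses only a fractional moment $E[(\lambda_{ik}^2\tau_k)^\gamma]<\infty$ with $\gamma\in(0,1/2)$. Markov's inequality and Borel--Cantelli then give $\lambda_{ik}^2\tau_k\le e^{\alpha k}$ eventually. The strong law of large numbers for $\log\delta_l$ ($l\ge 2$) gives exponential growth of $\tau_k$, hence $\lambda_{ik}^2\le e^{-\beta k}$ eventually, almost surely. Part (ii) follows from the conditional Gaussian tail bound $2\exp\{-\xi^2/(2\sum_{k>K}\lambda_{ik}^2)\}$ plus dominated convergence.

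Your route is shorter and quantitative: it gives a geometric rate in $K$, which is relevant for choosing the truncation level. However, it genuinely needs $\nu>2$, $a_2>2$ and $a_1>1$. Without $a_1>1$ you need your conditioning on $\delta_1$, and in (ii) that step also requires bounded convergence to pass back to the unconditional probability.

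The paper's route does not depend on $a_1$ and needs only some fractional moment and $E\log\delta_l>0$. More importantly, it is the template reused for the SAR prior in Lemma~\ref{lem:spatial_sa}. There, because $(I_M-\rho W)^{-1}\iota_M=(1-\rho)^{-1}\iota_M$ for row-normalized $W$, the loading variance carries a factor $(1-\rho_k)^{-2}$. Its expectation diverges under the truncated normal prior on $(-1,1)$, so $E[\lambda_{ik}^2]=\infty$ for some $i$ and your Tonelli step would break. Only $E[(1-|\rho_k|)^{-2\gamma}]$ with $\gamma<1/2$ stays finite. For the independent-loading lemma as stated, your argument is complete.
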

\begin{proof}
  Define $X_{ik} = \lambda_{ik}^2 \tau_k$.
  Since $\lambda_{ik} \mid \tau_k, \phi_{ik} \sim N(0, (\tau_k \phi_{ik})^{-1})$, we have $X_{ik} \mid \phi_{ik} \sim \phi_{ik}^{-1} \cdot \chi_1^2$.
  For $\gamma \in (0, 1/2)$, let $C_\gamma = E[|Z|^{2\gamma}] < \infty$ where $Z \sim N(0,1)$.
  Then $E[X_{ik}^\gamma] = E[ E[X_{ik}^\gamma \mid \phi_{ik}]] \le C_\gamma E[\phi_{ik}^{-\gamma}] \equiv C_\gamma M_\phi < \infty$, where $M_\phi < \infty$ follows from $\phi_{ik} \sim \mathrm{Ga}(\nu/2, \nu/2)$ with $\nu > 2$.
  The upper bound for the expectation $ C_\gamma M_\phi$ is denoted by $M_X$.

  By Markov's inequality, $P(X_{ik} > e^{\alpha k}) \le M_X e^{-\alpha \gamma k}$ for any $\alpha > 0$.
  Since $\sum_{k=1}^\infty e^{-\alpha\gamma k} < \infty$, the Borel--Cantelli lemma implies that $X_{ik} \le e^{\alpha k}$ for all sufficiently large $k$ with probability one; more precisely, there exists an almost sure event $\Omega_0$ such that on $\Omega_0$, there exists $K_0(\omega) < \infty$ with $X_{ik} \le e^{\alpha k}$ for all $k \ge K_0(\omega)$.
  Applying the strong law of large numbers to $\{\log\delta_l\}_{l \ge 2}$ (i.i.d.\ from $\mathrm{Ga}(a_2, 1)$) yields $k^{-1}\log\tau_k \to c_\delta := \psi(a_2) > 0$ almost surely, where $\psi(\cdot)$ is the digamma function and positivity follows from $a_2 > 2$.
  Therefore, on a single almost sure event, both $X_{ik} \le e^{\alpha k}$ and $\tau_k > e^{(c_\delta - \varepsilon)k}$ hold under $0 < \varepsilon < c_{\delta} $ for all sufficiently large $k$.
  Choosing $\alpha$ and $\varepsilon > 0$ such that $\beta := c_\delta - \varepsilon - \alpha > 0$, we obtain
  \[
    \lambda_{ik}^2 = \frac{X_{ik}}{\tau_k} \le e^{-(c_\delta - \varepsilon - \alpha)k} = e^{-\beta k}
    \quad \text{for all sufficiently large } k, \text{ almost surely.}
  \]
  Since $\sum_{k=1}^\infty e^{-\beta k} < \infty$ and the finitely many initial terms are finite almost surely, we conclude that $\sum_{k=1}^\infty \lambda_{ik}^2 < \infty$ almost surely for each fixed $i$. Taking a union bound over the finite set $\{1, \ldots, M\}$ of locations establishes part~(i).
  For part (ii), since $\sum_{k > K} \lambda_{ik}^2 \to 0$ almost surely as $K \to \infty$, the conditional Gaussian tail bound $P(|\psi_{i}(A) - \psi_{i,K}(A)| > \xi \mid \bm\Lambda) \le 2\exp(-\xi^2 / (2\sum_{k > K} \lambda_{ik}^2)) \to 0$ almost surely.
  Applying the dominated convergence theorem and a union bound over the finite index sets $i$ and $A$ completes the proof.
\end{proof}

\begin{cor}
    Under the multiplicative gamma process prior under $\nu > 2$ and $a_2 > 2$, 
    \[
        P
        \left(
            \max_{i,A}
            \left|
                \psi_{i}(A)
            \right|
            < \infty
        \right)
        = 1.
    \]
\end{cor}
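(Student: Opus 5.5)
The plan is to deduce the corollary from part (i) of Lemma~\ref{lem:truncation} by conditioning on the loadings and using that the factors are Gaussian. For each fixed $i$ and $A$, the representation~\eqref{eq:infinite_factor} gives $\psi_i(A) - \mu(A) = \sum_{k} \lambda_{ik}\eta_k(A)$. Here the $\eta_k(A)$ are i.i.d.\ $N(0,1)$ and independent of $\bm\Lambda$. Conditionally on $\bm\Lambda$, the partial sums $S_K = \sum_{k\le K}\lambda_{ik}\eta_k(A)$ therefore form a martingale of independent centered Gaussian increments, with variance $\sum_{k\le K}\lambda_{ik}^2$.

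On the almost sure event from Lemma~\ref{lem:truncation}(i), where $s_i^2 := \sum_{k=1}^\infty \lambda_{ik}^2 < \infty$, the martingale $(S_K)$ is bounded in $L^2$ conditionally on $\bm\Lambda$. The $L^2$ martingale convergence theorem, or equivalently Kolmogorov's two-series theorem, then gives that $S_K$ converges almost surely. The limit is distributed as $N(0, s_i^2)$ given $\bm\Lambda$, so it is a real number, in particular finite. Formally, the event $\{\lim_K S_K \text{ exists and is finite}\}$ has conditional probability one on $\{s_i^2<\infty\}$. Integrating over the law of $\bm\Lambda$ yields unconditional probability one. This step also makes $\psi_i(A)$ well defined as an almost sure limit, consistent with the convergence in probability of Lemma~\ref{lem:truncation}(ii).

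Since $\mu(A)$ is a fixed finite constant, $|\psi_i(A)| \le |\mu(A)| + |\lim_K S_K| < \infty$ almost surely for each pair $(i,A)$. The index set $\{1,\dots,M\}\times\mathcal N(T)$ is finite, so intersecting these finitely many almost sure events shows that the maximum over $i$ and $A$ is finite with probability one. An alternative route is to bound tails directly: $P(|\psi_i(A)-\mu(A)|>t\mid\bm\Lambda)\le 2\exp(-t^2/(2s_i^2))$, which tends to $0$ as $t\to\infty$ on $\{s_i^2<\infty\}$, followed by dominated convergence.

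The main delicate point is the measure-theoretic one: the infinite series must actually converge rather than merely be approximated in probability, and the conditioning on $\bm\Lambda$ must be legitimate. This rests on the independence of $\{\eta_k(A)\}$ from the loadings and on the square-summability supplied by Lemma~\ref{lem:truncation}(i). Once these are in place, the remaining steps are routine.
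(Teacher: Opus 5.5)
Your argument is correct, but it takes a different route from the paper. The paper never shows directly that the series converges. It uses part (ii) of Lemma~\ref{lem:truncation}: each truncation $\psi_{i,K}(A)$ is almost surely finite, so up to a null set $\{\max_{i,A}|\psi_i(A)|=\infty\}\subset\{\max_{i,A}|\psi_i(A)-\psi_{i,K}(A)|>\xi\}$ for every $K$ and every $\xi>0$. Letting $K\to\infty$ then sends the bound to zero.

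You instead start from part (i) and condition on $\bm\Lambda$, which is legitimate because the factors are independent of the loadings (Fubini on the product space). You then apply Kolmogorov's series theorem, or $L^2$ martingale convergence, to obtain almost sure convergence of $\sum_k\lambda_{ik}\eta_k(A)$ to a limit that is conditionally $N(0,s_i^2)$. A finite intersection over the pairs $(i,A)$ finishes the proof.

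The paper's route is shorter once Lemma~\ref{lem:truncation}(ii) is available. However, it tacitly assumes that $\psi_i(A)$ is already a well-defined extended-real random variable. Moreover, the conditional Gaussian tail bound used to prove part (ii) itself requires the tail series to converge. Your route supplies exactly that missing justification and needs only the square-summability from part (i). It also yields the stronger almost sure convergence $\psi_{i,K}(A)\to\psi_i(A)$, which implies part (ii).

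Your ``alternative route'' through the conditional tail bound and dominated convergence is essentially the paper's proof of part (ii), applied to $\psi_i(A)-\mu(A)$ instead of to the tail. It is valid only after your main argument has established that the limit exists.
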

\begin{proof}
Since $\psi_{i, K}(A)$ is a finite sum of finite random variables for each fixed $K < \infty$, 
\[
P\left(
\max_{i,A} |\psi_{i,K}(A)|<\infty
\right)=1 
\]
holds.
Therefore, 
for every fixed $K < \infty$ and every $\xi >0$, the following inclusion holds almost surely:
\[
\{
\max_{i,A}|\psi_{i}(A)|=\infty
\}
\subset
\{
\max_{i,A}|\psi_{i}(A)-\psi_{i,K}(A)|>\xi
\}.
\]
Consequently, we have
\[
P\left(
\max_{i,A}|\psi_{i}(A)|=\infty
\right)
\le
P\left(
\max_{i,A}|\psi_{i}(A)-\psi_{i,K}(A)|>\xi
\right).
\]
By \ref{lem:truncation} (ii), for every $\xi>0$,
\[
P\left(
\max_{i,A} |\psi_{i}(A)-\psi_{i,K}(A)|>\xi
\right) \to 0
\qquad \text{as } K\to\infty .
\]
Letting $K \to \infty$ in the preceding inequality, we obtain
\[
P\left(
\max_{i,A}|\psi_{i}(A)|=\infty
\right) =0,
\]
which proves the corollary.
\end{proof}

\begin{thm}
  \label{thm:support}
  For any $\{\psi_{i}^{(0)}(A)\}_{i,A} \subset \mathbb{R}^{M \times N}$ and any $\epsilon > 0$,
  \[
    P\!\left(\max_{i,A}\left|\psi_{i}(A) - \psi_{i}^{(0)}(A)\right| < \epsilon\right) > 0.
  \]
\end{thm}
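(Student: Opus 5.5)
The idea is to split $\psi_i(A)-\mu(A)$ into a finite \emph{head} of $K$ factors, which can be steered close to any target, and an infinite \emph{tail}, which we show is small with positive probability independently of the head. Throughout, $\epsilon>0$ is fixed and $D_{iA}:=\psi_i^{(0)}(A)-\mu(A)$ is the target matrix of size $M\times N$.

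\emph{Step 1 (exact finite representation).} Take $K=M$. Set the target loadings $\lambda^\star_{ik}=\mathbf 1\{i=k\}$ and the target factors $\eta^\star_k(A)=D_{kA}$. Then
\[
\sum_{k=1}^K\lambda^\star_{ik}\eta^\star_k(A)=D_{iA}\quad\text{exactly.}
\]
The map $(\{\lambda_{ik}\}_{k\le K},\{\eta_k(A)\}_{k\le K})\mapsto\sum_{k\le K}\lambda_{ik}\eta_k(A)$ is continuous. Hence there is $r>0$ such that $\max_{i,A}|\psi_{i,K}(A)-\psi_i^{(0)}(A)|<\epsilon/2$ whenever every head coordinate lies within $r$ of its target.

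\emph{Step 2 (head event).} Let $H$ be the event that:
\begin{itemize}
\item $\delta_1,\dots,\delta_K\in[1,2]$;
\item $\phi_{ik}\in[1,2]$ for all $i\le M$ and $k\le K$;
\item all head loadings and head factors are within $r$ of their targets.
\end{itemize}
The $\delta$'s and $\phi$'s are gamma variables with positive densities, so their constraints have positive probability. Conditional on them, the head loadings are Gaussian with variances bounded away from $0$ and $\infty$, and the $\eta_k(A)$ are standard normal. Both have positive densities everywhere, so a box of radius $r$ has positive conditional probability. Integrating over the constraint region gives $P(H)>0$. On $H$ we also have $\tau_K\ge 1$.

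\emph{Step 3 (tail event; this is the main obstacle).} The tail loadings are not independent of the head, because $\tau_k=\tau_K\,\tau'_k$ for $k>K$, where $\tau'_k=\prod_{l=K+1}^k\delta_l$. To decouple, write $\lambda_{ik}=\tau_K^{-1/2}\,\tilde\lambda_{ik}$ for $k>K$, where $\tilde\lambda_{ik}\mid\tau'_k,\phi_{ik}\sim N(0,(\tau'_k\phi_{ik})^{-1})$. Then
\[
\psi_i(A)-\psi_{i,K}(A)=\tau_K^{-1/2}S_i(A),\qquad S_i(A)=\sum_{k>K}\tilde\lambda_{ik}\eta_k(A).
\]
The array $S$ depends only on $\{\delta_l,\phi_{il},\tilde\lambda_{il},\eta_l\}_{l>K}$, so it is independent of $H$. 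Moreover, $S$ is a multiplicative gamma process series restarted at index $K+1$. The proof of Lemma~\ref{lem:truncation}(i) applies verbatim: it uses only the strong law of large numbers for $\log\delta_l$ with $a_2>2$ and the moment bound on $\phi_{il}^{-\gamma}$. It gives $\sum_{k>K}\tilde\lambda_{ik}^2<\infty$ almost surely for all $i$.

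Conditional on $\tilde\Lambda$, each $S_i(A)$ is a centred Gaussian with finite variance $\sum_{k>K}\tilde\lambda_{ik}^2$. Therefore $P(\max_{i,A}|S_i(A)|<\epsilon/2\mid\tilde\Lambda)>0$ almost surely, and taking expectations gives $P(\max_{i,A}|S_i(A)|<\epsilon/2)>0$. One point needs care: for fixed $i$ the $S_i(A)$ are independent across $A$, but they are correlated across $i$ through the shared $\eta_k(A)$. Even so, the joint conditional law of the finite vector $(S_i(A))_{i,A}$ is a centred Gaussian, possibly degenerate. Any centred Gaussian assigns positive mass to every neighbourhood of the origin, because $0$ lies in its support.

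\emph{Step 4 (combine).} On $H$ we have $\tau_K^{-1/2}\le 1$, so $\max_{i,A}|S_i(A)|<\epsilon/2$ implies that the tail is below $\epsilon/2$. Combining with Step 1 gives
\[
P\!\left(\max_{i,A}|\psi_i(A)-\psi_i^{(0)}(A)|<\epsilon\right)\ \ge\ P(H)\,P\!\left(\max_{i,A}|S_i(A)|<\epsilon/2\right)>0 .
\]
The product form uses the independence of $H$ and $S$ from Step 3. This proves the claim.
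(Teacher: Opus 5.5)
Your proof is correct, but it decouples head and tail differently from the paper. The paper also uses an exact finite representation (taking $K>M$) and the same split into a tail error $E_1$ and a head error $E_2$. It handles the dependence between $B_1=\{E_1<\epsilon/2\}$ and $B_2=\{E_2<\epsilon/2\}$ by conditioning on the whole sequence $\bm\delta$, under which the two events are independent. It then argues that $P(B_1\mid\bm\delta)\to 1$ as $K\to\infty$ via Lemma~\ref{lem:truncation}(ii), that $P(B_2\mid\bm\delta)>0$ by positivity of densities, and concludes by taking expectations.

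You instead keep $K=M$ fixed and obtain \emph{unconditional} independence by rescaling the tail loadings by $\tau_K^{1/2}$, which turns the tail into a restarted multiplicative gamma process. You neutralize the leftover factor $\tau_K^{-1/2}$ by forcing $\delta_1,\dots,\delta_K\in[1,2]$ on the head event. You then need the tail to be small only with positive probability, which follows because a centred, possibly degenerate, Gaussian charges every neighbourhood of the origin.

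Your route avoids any limit in $K$. The paper's step tacitly needs a conditional-on-$\bm\delta$ version of Lemma~\ref{lem:truncation}(ii). Its appeal to Fatou's lemma is also not quite the right tool, since $B_2$ changes with $K$ as well; what is actually needed is a single large $K$ for which $P(B_1\mid\bm\delta)>0$ on a set of positive probability.

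The cost of your route is that you must re-run the summability argument of Lemma~\ref{lem:truncation}(i) for the restarted process. This goes through because $\delta_{K+1},\delta_{K+2},\dots$ are i.i.d.\ $\mathrm{Ga}(a_2,1)$. You must also check that the joint law factorizes into the head block and the rescaled tail block. The paper's version is shorter because it reuses Lemma~\ref{lem:truncation}(ii) directly.
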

\begin{proof}

Since the matrix of true parameters over the $M$ locations and $N$ internal nodes has rank at most $M$, choosing $K > M$ ensures that the true tail error is zero: $\sum_{k > K} \lambda_{ik}^{(0)} \eta_k^{(0)}(A) = 0$.

 Let us define the infinite series and its finite truncation up to $K$ as
$\psi_{i}^{(0)}(A) = \mu(A) + \sum_{k = 1}^{\infty} \lambda^{(0)}_{ik}\eta^{(0)}_k(A)$
and
$\psi_{i,K}^{(0)}(A) = \mu(A) + \sum_{k = 1}^{K} \lambda^{(0)}_{ik}\eta^{(0)}_k(A)$, respectively. 
By applying the triangle inequality, we can bound the maximum error as follows:
\[
  \max_{i,A} |\psi_{i}(A) - \psi_{i}^{(0)}(A)| 
  \le \underbrace{\max_{i,A} |\sum_{k > K} \lambda_{ik}\eta_k(A)|}_{=: E_1} + \underbrace{\max_{i,A} |\psi_{i,K}(A) - \psi_{i,K}^{(0)}(A)|}_{=: E_2}.
\]
  
Define $B_1 = \{E_1 < \epsilon/2\}$ and $B_2 = \{E_2 < \epsilon/2\}$ for any fixed $\epsilon > 0$.
  Given the global shrinkage parameters $\bm\delta = (\delta_1,\delta_2, \cdots)$, the events $B_1$ and $B_2$ are conditionally independent.

  By Lemma~\ref{lem:truncation}(ii), $P(B_1 \mid \bm\delta) \to 1$ almost surely as $K \to \infty$.
  For $B_2$, it is clear that $P(B_2 \mid \bm\delta) > 0$ almost surely. This follows because $\eta_k(A) \sim N(0,1)$ has a strictly positive density everywhere on $\mathbb{R}$, and the marginal density of $\lambda_{ik}$ given $\bm\delta$ (which is a scale mixture of normals obtained by integrating out $\phi_{ik}$) is likewise strictly positive on the entire real line $\mathbb{R}$.

  Taking expectations over $\bm\delta$ and applying Fatou's lemma then gives
  \[
    P\!\left(\max_{i,A}|\psi_{i} - \psi_{i}^{(0)}| < \epsilon\right)
    \ge E\!\left[P(B_1 \mid \bm\delta)\,P(B_2 \mid \bm\delta)\right] > 0,
  \]
  completing the proof.
\end{proof}

\begin{thm}
  \label{thm:consistency_independent}
  Under the infinite factor prior, the KL support condition holds: for any $\epsilon > 0$,
  \[
    P\!\left(\sum_{i=1}^{M} D_{KL}\!\left(P_i^{(0)}\,\|\,P_i\right) < \epsilon\right) > 0.
  \]
  Consequently, by Schwartz's theorem, the posterior distribution is consistent.
\end{thm}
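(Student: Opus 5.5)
The plan is to reduce the KL support condition to the sup-norm support result of Theorem~\ref{thm:support}, through continuity of the KL divergence in the logit parameters. Throughout we work under the standing assumption of this subsection that every true cell probability is strictly positive. Then each true branching probability $\theta_i^{(0)}(A)$ lies in $(0,1)$, so $\psi_i^{(0)}(A)\in\mathbb{R}$ is finite for every $i$ and $A\in\mathcal{N}(T)$.

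The first step is to write the category probabilities in product form. For a leaf $c$, the path from the root to $c$ passes through at most $N$ internal nodes. Hence
\[
\log p_{i,c} = \sum_{A \ni c} \log \sigma\bigl(\pm \psi_i(A)\bigr),
\]
where $\sigma(x)=1/(1+e^{-x})$ and the sign is $+$ or $-$ according to whether $c$ lies in $A_l$ or $A_r$. The map $x\mapsto\log\sigma(x)$ is $1$-Lipschitz on $\mathbb{R}$, since its derivative is $1-\sigma(x)\in(0,1)$. It follows that
\[
\bigl|\log p^{(0)}_{i,c} - \log p_{i,c}\bigr| \le N \max_{A}\bigl|\psi_i^{(0)}(A)-\psi_i(A)\bigr|.
\]

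The second step bounds the KL divergence by this sup-norm distance. Write
\[
D_{KL}(P_i^{(0)}\|P_i)=\sum_c p^{(0)}_{i,c}\bigl(\log p^{(0)}_{i,c}-\log p_{i,c}\bigr).
\]
Each summand is at most $p^{(0)}_{i,c}\,N\delta$, where $\delta=\max_{i,A}|\psi_i(A)-\psi_i^{(0)}(A)|$. Summing over $c$ and $i$ therefore gives
\[
\sum_{i=1}^M D_{KL}(P_i^{(0)}\|P_i)\le MN\delta.
\]
The inequality holds pointwise on the sample space. The event $\{\delta<\epsilon/(MN)\}$ is thus contained in the KL neighbourhood. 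That event has positive prior probability by Theorem~\ref{thm:support}, applied with $\epsilon/(MN)$ in place of $\epsilon$, and this proves the support condition.

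The third step, consistency, follows from Schwartz's theorem for the product model~\eqref{eq: true dist}. Since the sample space is finite, the model is a finite-dimensional family in which weak neighbourhoods of $P^{(0)}$ are exactly neighbourhoods in total variation. Uniformly consistent tests for complements of such neighbourhoods exist trivially, for example tests based on the empirical frequencies combined with Hoeffding's inequality. The KL support property therefore yields posterior consistency in the weak topology, which on a finite sample space is the same as total variation.

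The only point requiring care is that Theorem~\ref{thm:support} needs finite true logits, which is exactly why positivity is assumed here. The boundary case is deferred to Section~\ref{appendix:proof_boundary}. There, I would instead approximate $P_i^{(0)}$ by a positive $\tilde P_i$ with small KL divergence, using the fact that cells with $p^{(0)}_{i,c}=0$ contribute nothing to the divergence, and then apply the argument above to $\tilde P_i$.
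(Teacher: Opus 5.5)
Your proposal is correct and follows essentially the same route as the paper. You use the $1$-Lipschitz property of $\log\sigma$ along root-to-leaf paths to obtain $\sum_i D_{KL}(P_i^{(0)}\|P_i)\le C\max_{i,A}|\psi_i(A)-\psi_i^{(0)}(A)|$, and then invoke Theorem~\ref{thm:support}. The differences are cosmetic: you use the cruder constant $C=MN$ instead of the paper's $M\max_l|\mathcal{A}(l)|$, and you spell out why the testing condition in Schwartz's theorem holds automatically on a finite sample space, which the paper leaves implicit.
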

\begin{proof}
  The KL divergence between $P_i^{(0)}$ and $P_i$ is defined as
  \[
    D_{KL}(P_i^{(0)} \| P_i) = \sum_{l=1}^{N+1} p_{i,l}^{(0)} \log \frac{p_{i,l}^{(0)}}{p_{i,l}}.
  \]
  Let $\theta(A)$ and $\theta^{(0)}(A)$ denote the modeled and true branching probabilities at internal node $A$, obtained by applying the logistic function to $\psi_{i}(A)$ and $\psi_{i}^{(0)}(A)$, respectively.
  Since the derivative of $\log \theta$ with respect to $\psi$ equals $1 - \theta \in (0,1)$, the mean value theorem gives
  \[
    |\log \theta(A) - \log \theta^{(0)}(A)| \le |\psi_{i}(A) - \psi_{i}^{(0)}(A)|,
  \]
  and the same bound holds for $|\log(1-\theta(A)) - \log(1-\theta^{(0)}(A))|$.
  Expanding $D_{KL}(P_i^{(0)} \| P_i)$ along the Polya-tree path representation \citep{awaya2024hidden} and applying this Lipschitz bound at each internal node, then summing over all paths and all locations, yields
  \[
    \sum_{i=1}^M D_{KL}(P_i^{(0)} \| P_i)
    \le C \max_{i,A}|\psi_{i}(A) - \psi_{i}^{(0)}(A)|,
  \]
  where $C =M \max_l |\mathcal{A}(l)|$ with $|\mathcal{A}(l)|$ denoting the depth of category $l$ in the tree.
  Setting $\delta = \epsilon/C$, the event $\{\max_{i,A}|\psi_{i}(A) - \psi_{i}^{(0)}(A)| < \delta\}$ implies $\{\sum_i D_{KL}(P_i^{(0)} \| P_i) < \epsilon\}$.
  By Theorem~\ref{thm:support}, the former event has strictly positive prior probability, which establishes the KL support condition, so the posterior consistency then follows from Schwartz's theorem \citep{ghosal2017fundamentals}.
\end{proof}

\subsection{SAR loading case}
\label{appendix:proof_spatial}

We now consider the spatial model in which the $k$th loading vector $\bm\Lambda_{\cdot k} = (\lambda_{1k}, \dots, \lambda_{Mk})^\top$ follows the SAR prior
\[
  (I_M - \rho_k W)\,\bm\Lambda_{\cdot k} \sim N\!\left(\bm{0},\,(\tau_k \bm\Phi_k)^{-1}\right),
\]
where $W$ is the row-normalized adjacency matrix, $\rho_k \sim \mathrm{TN}_{(-1,1)}(m_\rho, s_\rho^2)$, and $\bm\Phi_k$ and $\tau_k$ follow the same priors as in the independent case.

\begin{lem}
  \label{lem:spatial_sa}
  Under the SAR prior with $\rho_k \sim \mathrm{TN}_{(-1,1)}(m_\rho, s_\rho^2)$, $\phi_{ik} \sim \mathrm{Ga}(\nu/2, \nu/2)$ with $\nu > 2$, and the same multiplicative gamma process on $\tau_k$ with $a_2 > 2$,
  \[
    P\!\left(\max_{1 \le i \le M}\sum_{k=1}^\infty \lambda_{ik}^2 < \infty\right) = 1.
  \]
\end{lem}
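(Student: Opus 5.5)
We will reduce the SAR case to the independent case treated in Lemma~\ref{lem:truncation}(i). The first step is to write the loading vector explicitly. Set $\bm u_k := (I_M - \rho_k W)\bm\Lambda_{\cdot k}$. Conditionally on $(\tau_k,\bm\Phi_k)$, the coordinates $u_{ik}$ are independent $N(0,(\tau_k\phi_{ik})^{-1})$. This is exactly the law of the loadings in the independent model, so the same $\tau_k$ and $\phi_{ik}$ are used. Hence Lemma~\ref{lem:truncation}(i), whose proof only uses this conditional structure, applies verbatim to $\{u_{ik}\}$ and gives
\[
P\Big(\max_{1\le i\le M}\sum_{k=1}^\infty u_{ik}^2<\infty\Big)=1 .
\]
The second step is to invert the SAR operator, $\bm\Lambda_{\cdot k}=(I_M-\rho_k W)^{-1}\bm u_k$, and to control its norm uniformly in $k$.

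The key estimate is a deterministic bound on $\|(I_M-\rho W)^{-1}\|$. Since $W$ is row-normalized and nonnegative, $\|W\|_\infty=1$. For $|\rho|<1$ the Neumann series therefore converges in the $\infty$-norm and gives $\|(I_M-\rho W)^{-1}\|_\infty\le (1-|\rho|)^{-1}$. Passing to the Euclidean norm on $\mathbb{R}^M$ costs at most a factor $\sqrt{M}$, so
\[
\|\bm\Lambda_{\cdot k}\|_2^2\le \frac{M}{(1-|\rho_k|)^2}\,\|\bm u_k\|_2^2 .
\]
In the implemented prior, $\rho_k$ is truncated to $(-1+c,1-c)$, which gives $(1-|\rho_k|)^{-1}\le c^{-1}$ uniformly in $k$. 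Summing over $k$ then yields
\[
\max_i\sum_k\lambda_{ik}^2\le\sum_k\|\bm\Lambda_{\cdot k}\|_2^2\le \frac{M}{c^2}\sum_k\|\bm u_k\|_2^2<\infty \quad\text{a.s.},
\]
which proves the claim.

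The main obstacle is the statement as written with $\rho_k\sim\mathrm{TN}_{(-1,1)}$, where no uniform margin $c$ is available. Here we bound the inverse by a random factor that is almost surely of subexponential growth in $k$, and absorb it into the exponential decay already established. The factor is $R_k:=(1-|\rho_k|)^{-2}$. The $\rho_k$ are i.i.d.\ with a density bounded near $\pm1$, so $P(1-|\rho_k|<e^{-\alpha k})\le C e^{-\alpha k}$, which is summable in $k$. By Borel--Cantelli, almost surely $R_k\le e^{2\alpha k}$ for all large $k$.

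We then redo the argument of Lemma~\ref{lem:truncation}(i) on a single almost sure event. On that event $\tau_k\ge e^{(c_\delta-\varepsilon)k}$ and $\tau_k u_{ik}^2\le e^{\alpha k}$ eventually. Combining these with the bound on $R_k$ gives $\|\bm\Lambda_{\cdot k}\|_2^2\le M^2 e^{-(c_\delta-\varepsilon-3\alpha)k}$. Choosing $\alpha,\varepsilon$ small enough keeps the exponent negative, so the tail of the series is summable. The finitely many initial terms are a.s.\ finite, since $\rho_k\in(-1,1)$ a.s.\ makes each inverse exist. This completes the argument.
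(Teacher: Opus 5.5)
Your proof is correct, but it is organized differently from the paper's.

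The paper never whitens the loadings. It bounds the conditional variance of $\lambda_{ik}$ directly by $\tau_k^{-1}(1-|\rho_k|)^{-2}\max_j\phi_{jk}^{-1}$, using the same Neumann-series bound in $\|\cdot\|_\infty$. This absorbs the SAR factor into a single fractional moment,
\[
E[X_{ik}^\gamma]\le C_\gamma\, E\bigl[(\max_j\phi_{jk}^{-1})^\gamma\bigr]\, E\bigl[(1-|\rho_k|)^{-2\gamma}\bigr]<\infty .
\]
The last factor is finite because $(1-\rho)^{-2\gamma}$ is integrable for $\gamma<1/2$ against the bounded truncated-normal density. After that, Markov, Borel--Cantelli and the strong law for $\log\tau_k$ are rerun as in Lemma~\ref{lem:truncation}(i).

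You instead pass to $\bm u_k=(I_M-\rho_kW)\bm\Lambda_{\cdot k}$, which has exactly the independent-case law. The spatial operator then enters only through a deterministic norm bound, plus a separate Borel--Cantelli argument showing that $(1-|\rho_k|)^{-2}$ grows subexponentially.

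Both routes use the same fact, bounded density of $\rho_k$ near $\pm1$, in different forms: the paper as integrability, you as tail summability.
\begin{itemize}
\item The paper's version yields one sufficient condition, $E[X_{ik}^\gamma]<\infty$. It then advertises this as the criterion for CAR or Gaussian-process loadings.
\item Your version is more modular. The independent-case machinery is left untouched, and the implemented $(-1+c,1-c)$ truncation is handled in one line. It also covers any spatial prior whose inverse operator has almost surely subexponentially growing norm, without computing moments of the product.
\end{itemize}

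You also correctly saw that, in the untruncated case, the conclusion of Lemma~\ref{lem:truncation}(i) alone is not enough. Its quantitative rates, $\tau_k u_{ik}^2\le e^{\alpha k}$ and $\tau_k\ge e^{(c_\delta-\varepsilon)k}$ eventually, must be reused on a common almost sure event, and you did so.
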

\begin{proof}
  The Neumann series expansion of $(I_M - \rho_k W)^{-1}$ is valid since for the row sum norm $\|\rho_k W\|_\infty = |\rho_k|\|W\|_\infty = |\rho_k| < 1$, yielding $\|(I_M - \rho_k W)^{-1}\|_\infty \le (1 - |\rho_k|)^{-1}$.
  We define $\lambda_{ik}  \sim N(0, \sigma^2_{ik})$.
  Writing $\lambda_{ik} = \sigma_{ik} Z_{ik}$ with $Z_{ik} \sim N(0,1)$ and defining $X_{ik} = \lambda_{ik}^2 \tau_k$, the variance bound gives $X_{ik} \le (1 - |\rho_k|)^{-2}(\max_i \phi_{ik}^{-1}) Z_{ik}^2$.

  For $\gamma \in (0, 1/2)$, the conditional expectation satisfies
  \[
    E[X_{ik}^\gamma \mid \rho_k, \bm\Phi_k]
    \le C_\gamma (1 - |\rho_k|)^{-2\gamma} \bigl(\max_i \phi_{ik}^{-1}\bigr)^\gamma,
  \]
  where $C_\gamma$ is the finite constant defined in the proof of Lemma~\ref{lem:truncation}.
  Integrating over $\bm\Phi_k$ gives $E[(\max_i\phi_{ik}^{-1})^\gamma]\equiv \mathcal{M}_\phi < \infty$.
  For the expectation over $\rho_k$, letting $B_\rho = \sup_\rho f_{\rho_k}(\rho) < \infty$ be the upper bound of the truncated normal density, we obtain that
  \[
    E\!\left[(1 - |\rho_k|)^{-2\gamma}\right]
    \le 2B_\rho \int_0^1 (1 - \rho)^{-2\gamma}\,d\rho
    = \frac{2B_\rho}{1 - 2\gamma}
    \equiv M_\rho < \infty,
  \]
  where the integral converges because $1 - 2\gamma > 0$.
  Hence $E[X_{ik}^\gamma] \le C_\gamma \mathcal{M}_\phi M_\rho \equiv M_X < \infty$, and the rest of the proof (Markov's inequality, Borel--Cantelli, strong law of large numbers for $\tau_k$) follows the same steps as Lemma~\ref{lem:truncation}(i).
\end{proof}

The key point in the above proof is not the specific form of the SAR prior, but the finite fractional moment condition for $X_{ik}=\lambda_{ik}^2\tau_k$.
If $E(X_{ik}^{\gamma})<\infty$ for some $\gamma>0$, then the same argument as in the above proof yields the almost sure convergence of $\sum_{k=1}^\infty \lambda_{ik}^2$.
Thus, the same type of summability result can also be established for other spatial dependence structures, such as CAR priors or Gaussian process priors, whenever the corresponding prior specification satisfies this moment condition.
Once the summability result is obtained, the truncation-error argument applies in the same manner.

\begin{thm}
  \label{thm:consistency_spatial}
  Under the SAR factor model with the prior described above, for any $\epsilon > 0$,
  \[
    P\!\left(\sum_{i=1}^{M} D_{KL}\!\left(P_i^{(0)}\,\|\,P_i\right) < \epsilon\right) > 0.
  \]
  Hence the posterior distribution is consistent.
\end{thm}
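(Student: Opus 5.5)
The argument follows the independent-loading case (Theorem~\ref{thm:consistency_independent}). The KL bound $\sum_i D_{KL}(P_i^{(0)}\|P_i)\le C\max_{i,A}|\psi_i(A)-\psi_i^{(0)}(A)|$ from that proof depends only on the logistic-tree map, not on the prior. So it suffices to prove the sup-norm support statement of Theorem~\ref{thm:support} under the SAR prior: for every $\epsilon>0$,
\[
P\!\left(\max_{i,A}|\psi_i(A)-\psi_i^{(0)}(A)|<\epsilon/C\right)>0 .
\]
Schwartz's theorem then yields consistency exactly as before.

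\textbf{Step 1: truncation.} I first recover the truncation statement, Lemma~\ref{lem:truncation}(ii), under the SAR prior. Lemma~\ref{lem:spatial_sa} gives $\max_i\sum_k\lambda_{ik}^2<\infty$ almost surely, so $\sum_{k>K}\lambda_{ik}^2\to0$ almost surely. The factors $\eta_k(A)$ are still independent $N(0,1)$ and independent of the loadings. Hence, conditionally on $\bm\Lambda$, the tail $\sum_{k>K}\lambda_{ik}\eta_k(A)$ is Gaussian with variance $\sum_{k>K}\lambda_{ik}^2$. The same Gaussian tail bound, dominated convergence, and a finite union bound over $i$ and $A$ then show
\[
P\!\left(\max_{i,A}\Bigl|\sum_{k>K}\lambda_{ik}\eta_k(A)\Bigr|>\xi\right)\to0 \quad\text{as } K\to\infty.
\]

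\textbf{Step 2: approximating the truth with finitely many factors.} I choose $K\ge M$ and write the centred truth $\psi^{(0)}_i(A)-\mu(A)$ exactly as $\sum_{k\le K}\lambda^{(0)}_{ik}\eta^{(0)}_k(A)$. This is possible because the $M\times N$ matrix has rank at most $\min(M,N)$. I split the error into the tail term $E_1$ and the finite-$K$ term $E_2$, as in the proof of Theorem~\ref{thm:support}.

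\textbf{Step 3: positivity of the finite part.} For $E_2$ I show that, conditionally on $(\bm\delta,\rho_{1:K},\bm\Phi_{1:K})$, the vector $\bm\Lambda_{\cdot k}$ has a strictly positive density on all of $\mathbb{R}^M$. Indeed $|\rho_k|<1$ and $\|W\|_\infty=1$ make $I_M-\rho_kW$ invertible, so $\bm\Lambda_{\cdot k}=(I_M-\rho_kW)^{-1}\times$(a nondegenerate Gaussian vector). This is a nondegenerate Gaussian with covariance $(I-\rho_kW)^{-1}(\tau_k\Phi_k)^{-1}(I-\rho_kW^\top)^{-1}$. Combined with the positive Gaussian densities of the $\eta_k(A)$, and by continuity of the finite sum $\sum_{k\le K}\lambda_{ik}\eta_k(A)$ in these variables, any open neighbourhood of $(\lambda^{(0)},\eta^{(0)})$ has positive conditional probability. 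Therefore $P(E_2<\epsilon'/2\mid\cdot)>0$ almost surely.

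\textbf{Step 4: combining, and the main obstacle.} The hardest step is combining the events, since the tail event $B_1$ and the finite event $B_2$ are no longer obviously conditionally independent given $\bm\delta$ alone. I condition on the whole collection of hyperparameters $\mathcal{H}=(\bm\delta,\{\rho_k\},\{\bm\Phi_k\})$. Given $\mathcal{H}$, the columns $\bm\Lambda_{\cdot k}$ are independent across $k$, and the factors are independent of everything. Hence $E_1$, which depends on columns $k>K$, and $E_2$, which depends on columns $k\le K$, are conditionally independent. To make Step 1 hold conditionally, I replace convergence in probability by the almost-sure tail convergence: $P(B_1\mid\mathcal{H})\ge E[\text{Gaussian tail bound}\mid\mathcal{H}]\to1$ almost surely, by conditional dominated convergence. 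I then fix $K$ large enough that $P\big(P(B_1\mid\mathcal{H})>1/2\big)>1/2$. This gives
\[
P\!\left(\max_{i,A}|\psi_i(A)-\psi_i^{(0)}(A)|<\epsilon'\right)\ge E\!\left[P(B_1\mid\mathcal{H})\,P(B_2\mid\mathcal{H})\right]>0,
\]
because the integrand is strictly positive on a set of positive probability. Taking $\epsilon'=\epsilon/C$ and applying the KL bound completes the argument.
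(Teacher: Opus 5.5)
Your proposal is correct and follows the paper's proof. Lemma~\ref{lem:spatial_sa} gives summability and hence the truncation step, and invertibility of $I_M-\rho_kW$ gives a strictly positive conditional density for the first $K$ loading columns, so the finite-dimensional support step of Theorem~\ref{thm:support} carries over and the Lipschitz KL bound finishes the argument. Your Step~4 is a more careful version of the paper's combination step, which conditions only on $\bm\delta$ and appeals loosely to Fatou's lemma: you condition on all of $\mathcal{H}$, use conditional dominated convergence to get $P(B_1\mid\mathcal{H})\to1$ almost surely, and then fix a finite $K$. Conditional independence actually already holds given $\bm\delta$ alone, because the pairs $(\rho_k,\bm\Phi_k)$ are independent across $k$, but conditioning on $\mathcal{H}$ does no harm.
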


\begin{proof}
  By Lemma~\ref{lem:spatial_sa}, $\sum_{k=1}^\infty \lambda_{ik}^2 < \infty$ almost surely.
  The truncation error therefore vanishes as in the independent case, and Lemma~\ref{lem:truncation}(ii) applies with $\sum_{k > K}\lambda_{ik}^2$.

  For the full-support condition on $\bm\Lambda_{\cdot k}$, since $|\rho_k| < 1$ all eigenvalues of $(I_M - \rho_k W)$ have strictly positive real parts, so the matrix is invertible.
  By Sylvester's law of inertia, $\Sigma_{\Lambda_k} = \tau_k^{-1}(I_M - \rho_k W)^{-1}\bm\Phi_k^{-1}(I_M - \rho_k W^\top)^{-1}$ is positive definite almost surely.
  Therefore, for each fixed truncation level $K$, conditional on the finite-dimensional hyperparameters $(\bm\delta_{1:K}, \bm\Phi_{1:K}, \bm\rho_{1:K})$, the joint prior density of $(\bm\Lambda_{\cdot 1}, \dots, \bm\Lambda_{\cdot K})$ is strictly positive on $(\mathbb{R}^M)^K$.
  Thus the same finite-dimensional support step used in the proof of Theorem~\ref{thm:support} remains valid in the spatial case, and together with the truncation property above this yields the KL support condition.
  The KL bound then follows from the same Lipschitz argument as in the proof of Theorem~\ref{thm:consistency_independent}.
\end{proof}

\subsection{Extension to zero-probability categories}
\label{appendix:proof_boundary}

\begin{proof}[Proof that posterior consistency holds when the true category probabilities include zeros]
Fix $\epsilon > 0$ arbitrarily.
  Suppose some categories have zero true probability.
  Let $D = \max_l |\mathcal{A}(l)|$ be the maximum tree depth.
  Choose $L > \log(2D / \epsilon)$ and define the truncated parameters
  \[
    \tilde\psi_{i}^{(0)}(A)
    = \begin{cases}
        L  & \text{if } \theta_i^{(0)}(A) = 1, \\
        -L & \text{if } \theta_i^{(0)}(A) = 0, \\
        \log\!\bigl(\theta_i^{(0)}(A) / (1 - \theta_i^{(0)}(A))\bigr) & \text{otherwise.}
      \end{cases}
  \]
  Let $\tilde P_i^{(0)}$ be the distribution induced by the truncated parameters.
  For any category $l$ with $p_{i,l}^{(0)} > 0$, the tree path to $l$ avoids branches with zero probability, so the truncated logit differs from the true logit only at nodes where $\theta_i^{(0)}(A) = 1$ for any $p_{i,l}^{(0)} > 0$.
  Using $\log(1 + e^{-L}) \le e^{-L}$, the approximation error satisfies $\log p_{i,l}^{(0)} - \log \tilde p_{i,l}^{(0)} \le D e^{-L}$, giving
  \[
    \sum_{i=1}^M D_{\mathrm{KL}}(P_i^{(0)} \| \tilde P_i^{(0)}) \le D e^{-L} < \frac{\epsilon}{2}.
  \]
  Since each $\tilde\psi_{i}^{(0)}(A) \in [-L, L]$ is finite, Theorem~\ref{thm:support} (applied to the truncated parameters as the ``true'' target) with $\xi = \epsilon/(2D)$ yields
  \[
    P\!\left(\max_{i,A}|\psi_{i}(A) - \tilde\psi_{i}^{(0)}(A)| < \xi\right) > 0.
  \]
   On this event, the same Lipschitz argument as in the proof of Theorem~\ref{thm:consistency_independent} gives
  \[
  \sum_{l=1}^{N+1} 
  p_{i,l}^{(0)}
  \log \frac{\tilde p_{i,l}^{(0)}}{p_{i,l}}
    \le
      \sum_{l=1}^{N+1} 
  \log \frac{\tilde p_{i,l}^{(0)}}{p_{i,l}}
    \le
    (\max_l |\mathcal{A}(l)|)\, \max_A |\psi_{i}(A) - \tilde\psi_{i}^{(0)}(A)|,
  \]
  and summing over $i$ yields
  \[
    \sum_{i=1}^M   
    \left[
    \sum_{l=1}^{N+1} 
  p_{i,l}^{(0)}
  \log \frac{\tilde p_{i,l}^{(0)}}{p_{i,l}}
  \right]
    \le  D \cdot \max_{i,A} |\psi_{i}(A) - \tilde\psi_{i}^{(0)}(A)|
    <  D \xi = \frac{\epsilon}{2}.
  \]
  Therefore, we obtain
  \[
    \sum_{i=1}^M D_{KL}(P_i^{(0)} \| P_i)
    \le   \sum_{i=1}^M D_{KL}(P_i^{(0)} \| \tilde P_i^{(0)}) +
    \sum_{i=1}^M   
    \left[
    \sum_{l=1}^{N+1} 
  p_{i,l}^{(0)}
  \log \frac{\tilde p_{i,l}^{(0)}}{p_{i,l}}
  \right]
    < \frac{\epsilon}{2} + \frac{\epsilon}{2} = \epsilon.
  \]
  This establishes the KL support condition.
\end{proof}

\section{Details of the numerical experiments}
\label{appendix: Details of the numerical experiments}

\subsection{Details of the simulation scenarios}
\label{subsec(appendix): Details of the simulation scenarios}

This appendix describes the data-generating mechanisms used in the simulation study.
For each location \(i=1,\ldots,M\), let
\[
s_i=(x_i,y_i)^\top \in [-1,1]^2
\]
denote its spatial coordinate. The coordinates are generated independently from the
uniform distribution on \([-1,1]^2\). At each location, the observed count vector is
generated as
\[
Y_i \mid N_i,p_i \sim \mathrm{Multinomial}(N_i,p_i),
\]
where
\[
N_i \sim \mathrm{Unif}
\left\{
\lfloor 0.5 n_{\mathrm{local}}\rfloor,
\ldots,
\lfloor 2 n_{\mathrm{local}}\rfloor
\right\}.
\]
The probability vector \(p_i\) is scenario-specific.

\subsubsection{One-Dimensional Scenarios}

In the one-dimensional scenarios, the support is an ordered eight-category partition
\[
B_1=[0,20),\quad
B_2=[20,30),\quad
B_3=[30,40),\quad
B_4=[40,50),
\]
\[
B_5=[50,60),\quad
B_6=[60,70),\quad
B_7=[70,80),\quad
B_8=[80,\infty).
\]
The location-specific probability vector is
\[
p_i=\{p_i(1),\ldots,p_i(8)\}.
\]

\begin{description}

\item[Two-parameter log-normal scenario.]
Let
\[
A_i \mid s_i \sim \mathrm{LogNormal}(\mu_i,\sigma_i^2),
\]
with
\[
\mu_i = 0.1 + x_i^2 + y_i^2,
\qquad
\log \sigma_i^2 = 0.2x_i + 0.2y_i.
\]
The grouped probabilities are
\[
p_i(j)
=
\Pr(A_i \in B_j \mid s_i)
=
\int_{B_j}
f_{\mathrm{LN}}(a;\mu_i,\sigma_i^2)\,da,
\qquad
j=1,\ldots,8.
\]
This scenario gives a smooth spatially varying ordered distribution whose local shape is
controlled by two log-normal parameters.

\item[Spatial mixture scenario.]
Let the latent continuous density at location \(i\) be
\[
f_i(a)
=
\sum_{k=1}^3
w_{ik}
f_{\mathrm{LN}}(a;\mu_k,\sigma_k^2),
\]
where
\[
(\mu_1,\mu_2,\mu_3)=(3.78,3.05,4.18),
\]
and
\[
(\log\sigma_1^2,\log\sigma_2^2,\log\sigma_3^2)
=
(\log 0.20,\log 0.10,\log 0.10).
\]
The spatially varying weights are
\[
w_{ik}
=
\frac{\exp(\eta_{ik})}
{\sum_{\ell=1}^3 \exp(\eta_{i\ell})},
\]
with
\[
\eta_{ik}
=
b_k
+
a_k
\exp\left(
-\frac{\lVert s_i-c_k\rVert^2}{\tau_k}
\right).
\]
The parameter values are
\[
b=(\log 8,0,0),
\]
\[
c_1=(0,0),\qquad
c_2=(-0.6,0.5),\qquad
c_3=(0.6,-0.4),
\]
\[
a=(0,\log 6,\log 6),
\qquad
\tau=(1.0,0.30,0.30).
\]
The grouped probabilities are
\[
p_i(j)
=
\int_{B_j} f_i(a)\,da,
\qquad
j=1,\ldots,8.
\]
This scenario creates spatially varying non-log-normal ordered distributions through
localized mixture components.

\end{description}

\subsubsection{Three-Dimensional Scenarios}

All three-dimensional scenarios use the same product support
\[
\mathcal X
=
\mathcal A \times \mathcal B \times \mathcal C,
\]
where
\[
\mathcal A=\{1,\ldots,8\},\qquad
\mathcal B=\{1,2\},\qquad
\mathcal C=\{1,2,3\}.
\]
For location \(i\), the joint probability is denoted by
\[
p_i(a,b,c),
\qquad
(a,b,c)\in \mathcal A\times\mathcal B\times\mathcal C.
\]
The first dimension \(\mathcal A\) uses the same ordered eight-category partition
\(\{B_1,\ldots,B_8\}\) as in the one-dimensional scenarios.

\begin{description}

\item[Axis-separated scenario.]
The joint distribution factorizes as
\[
p_i(a,b,c)
=
p_i^{(1)}(a)
p_i^{(2)}(b)
p_i^{(3)}(c).
\]
The first margin is generated from a nearly common grouped log-normal profile:
\[
A_i \mid s_i \sim
\mathrm{LogNormal}(3.78+\epsilon_i,0.20),
\qquad
\epsilon_i \stackrel{\mathrm{iid}}{\sim} N(0,0.03^2),
\]
and
\[
p_i^{(1)}(a)
=
\Pr(A_i\in B_a\mid s_i).
\]
The second margin varies along the \(x\)-axis:
\[
p_i^{(2)}(2)
=
0.20+0.60\,\mathrm{logit}^{-1}(4x_i),
\qquad
p_i^{(2)}(1)=1-p_i^{(2)}(2).
\]
The third margin varies along the \(y\)-axis. Let
\[
\omega_i=\mathrm{logit}^{-1}(4y_i).
\]
Then
\[
p_i^{(3)}
=
\omega_i(0.60,0.20,0.20)
+
(1-\omega_i)(0.20,0.20,0.60).
\]
This scenario separates the dominant spatial variation across different coordinate
directions.

\item[Latent-factor scenario.]
The joint distribution again factorizes as
\[
p_i(a,b,c)
=
p_i^{(1)}(a)
p_i^{(2)}(b)
p_i^{(3)}(c),
\]
but all three margins are driven by a shared latent spatial field. Let
\[
U=(U_1,\ldots,U_{M})^\top
\sim
N(m,\Sigma),
\]
where
\[
m_i=1.1-2.2(x_i^2+y_i^2),
\]
and
\[
\Sigma_{ii'}
=
0.60^2
\exp\left(
-\frac{\lVert s_i-s_{i'}\rVert^2}{2(0.50)^2}
\right)
+
10^{-8}\mathbf 1\{i=i'\}.
\]
The first margin is generated by a softmax model:
\[
\eta_{ia}^{(1)}
=
\alpha_a^{(1)}
+
\beta_a^{(1)}U_i
+
\epsilon_{ia}^{(1)},
\qquad
\epsilon_{ia}^{(1)}
\stackrel{\mathrm{iid}}{\sim}
N(0,0.08^2),
\]
with
\[
\alpha^{(1)}
=
(-0.25,0,0.35,0.45,0.20,-0.05,-0.25,-0.45),
\]
\[
\beta^{(1)}
=
(1.15,0.90,0.60,0.25,0,-0.30,-0.65,-0.95),
\]
and
\[
p_i^{(1)}(a)
=
\frac{\exp(\eta_{ia}^{(1)})}
{\sum_{a'=1}^8 \exp(\eta_{ia'}^{(1)})}.
\]
The second margin is
\[
p_i^{(2)}(2)
=
\mathrm{logit}^{-1}
\left(
0.32 U_i+\epsilon_i^{(2)}
\right),
\qquad
\epsilon_i^{(2)}
\stackrel{\mathrm{iid}}{\sim}
N(0,0.08^2),
\]
with \(p_i^{(2)}(1)=1-p_i^{(2)}(2)\).
The third margin is generated by another softmax model:
\[
\eta_{ic}^{(3)}
=
\alpha_c^{(3)}
+
\beta_c^{(3)}U_i
+
\epsilon_{ic}^{(3)},
\qquad
\epsilon_{ic}^{(3)}
\stackrel{\mathrm{iid}}{\sim}
N(0,0.08^2),
\]
where
\[
\alpha^{(3)}=(0.30,0,-0.30),
\qquad
\beta^{(3)}=(-0.75,0,0.75),
\]
and
\[
p_i^{(3)}(c)
=
\frac{\exp(\eta_{ic}^{(3)})}
{\sum_{c'=1}^3 \exp(\eta_{ic'}^{(3)})}.
\]
This scenario introduces a shared smooth latent spatial factor while keeping the three
margins conditionally independent given \(U_i\).

\item[Spatial-cluster scenario.]
The spatial domain is partitioned into a \(q\times q\) grid of rectangular regions.
In the paper-facing configuration, \(q=3\), giving \(9\) spatial clusters. Let
\[
c(i)\in\{1,\ldots,q^2\}
\]
denote the cluster membership of location \(i\). Define the baseline joint distribution
\[
\pi_0(a,b,c)
=
p_0^{(1)}(a)
p_0^{(2)}(b)
p_0^{(3)}(c),
\]
where
\[
p_0^{(1)}
=
(0.05,0.10,0.18,0.20,0.17,0.12,0.10,0.08),
\]
\[
p_0^{(2)}=(0.50,0.50),
\qquad
p_0^{(3)}=(0.35,0.40,0.25).
\]
For each cluster \(r=1,\ldots,q^2\), draw a full joint distribution
\[
\pi_r
\sim
\mathrm{Dirichlet}(25\pi_0).
\]
All locations in the same spatial cluster share the same true joint distribution:
\[
p_i(a,b,c)
=
\pi_{c(i)}(a,b,c).
\]
This scenario creates a piecewise-constant spatial truth and is intentionally favorable
to partition-based mixture models.

\end{description}

\subsection{Details of the models compared in the numerical experiments}
We describe the models compared to the proposed model in Section~\ref{sec: application to the real data}
\begin{enumerate}
    \item Independent Dirichlet model (ID)\\
    For each location, we independently fit a model with a Dirichlet prior
\[
    \mathrm{Dirichlet}(\alpha,\ldots,\alpha),
\]
where the hyperparameter $\alpha$ is set to $0.5$ in the experiments.
   \item Dirichlet process mixture model (DPM)
    \begin{align*}
    &\bm{Y}_i \mid \bm{\mu}_i  \sim \text{Multinomial}(N_i, \bm{\mu}_i), \quad i=1,\dots,M,\\
    &\bm{\mu}_i \mid G  \sim G , \quad i=1,\dots,M,\\
    & G \mid \alpha  \sim \text{DP}(\alpha, G_0) , \\
    & \alpha \sim \text{Gamma}(a, b), \\
    & G_0 = \text{Dirichlet}(\eta).
    \end{align*}
In this framework, $\bm{Y}_i$ denotes the vector of observed counts for the $i$-th observation, which follows a multinomial distribution with total count $N_i$ and a latent probability vector $\bm{\mu}_i$.
The latent vectors $\bm{\mu}_i$ are drawn from a random probability measure $G$, modeled as a Dirichlet process.
Here, the concentration parameter $\alpha$ is assigned a Gamma prior to control the clustering sparsity, while the base measure $G_0$ is specified as a symmetric Dirichlet distribution.
The default values of hyperparameters are $\eta=0.5$ and $a = b = 1$.
We adopt this model as a benchmark because the DPM is widely recognized as a standard Bayesian nonparametric approach for clustering, allowing for flexible modeling of heterogeneity without fixing the number of clusters a priori.
For posterior inference, we employ the Gibbs sampling algorithm described in \cite{neal2000DPM-alg} and \cite{escobar1995alpha-sample}.

\end{enumerate}

\section{Additional Numerical results}
\label{appendix: additional numerical results}
\subsection{Comparison of the proposed models with and without spatial correlation}
In this section, we compare the proposed factor model with and without the spatial correlation structure.
For the latter model, we fix the correlation parameter $\rho_k$ to zero, and additionally evaluate the model based on the KL divergence and the Hellinger distance, as in the numerical experiments (Section~\ref{sec: Numerical Experiments}).
The results are shown in  Table~\ref{tab:simulation-1d-factorpt-rho-comparison} and \ref{tab:simulation-3d-factorpt-rho-comparison}, and the results indicate that their performance is similar.

\begin{table}[htbp]
\centering
\scriptsize
\caption{Comparison of Factor PT models with estimated spatial correlation and fixed $\rho=0$ in the one-dimensional simulation scenarios. Entries show mean (SD) over 50 replicates. Reported values are $10^2 \times$ KL divergence and $10^2 \times$ Hellinger distance; lower values are better for both metrics.}
\resizebox{\textwidth}{!}{%
\begin{tabular}{llllcc}
\toprule
 & & & & \multicolumn{2}{c}{Metric} \\
\cmidrule(lr){5-6}
Scenario & Size & Model & Method & $10^2 \times$ KL div. & $10^2 \times$ Hellinger \\
\midrule
Two-param. & $n=1000$ & Factor PT & MV ($\rho$: estimated) & 0.061(0.006) & \textbf{1.075(0.055)} \\
 &  &  & MV ($\rho=0$) & 0.090(0.007) & 1.337(0.051) \\
 &  &  & Midpoint ($\rho$: estimated) & \textbf{0.061(0.006)} & 1.076(0.055) \\
 &  &  & Midpoint ($\rho=0$) & 0.089(0.006) & 1.332(0.048) \\
 & $n=200$ & Factor PT & MV ($\rho$: estimated) & 0.217(0.025) & 2.013(0.109) \\
 &  &  & MV ($\rho=0$) & 0.368(0.029) & 2.697(0.105) \\
 &  &  & Midpoint ($\rho$: estimated) & \textbf{0.216(0.025)} & \textbf{2.007(0.107)} \\
 &  &  & Midpoint ($\rho=0$) & 0.366(0.028) & 2.687(0.101) \\
\midrule
Mixture & $n=1000$ & Factor PT & MV ($\rho$: estimated) & \textbf{0.044(0.005)} & 0.931(0.056) \\
 &  &  & MV ($\rho=0$) & 0.073(0.007) & 1.204(0.055) \\
 &  &  & Midpoint ($\rho$: estimated) & \textbf{0.044(0.005)} & \textbf{0.924(0.059)} \\
 &  &  & Midpoint ($\rho=0$) & 0.072(0.007) & 1.200(0.056) \\
 & $n=200$ & Factor PT & MV ($\rho$: estimated) & 0.163(0.023) & 1.736(0.103) \\
 &  &  & MV ($\rho=0$) & 0.253(0.021) & 2.201(0.094) \\
 &  &  & Midpoint ($\rho$: estimated) & \textbf{0.161(0.024)} & \textbf{1.731(0.104)} \\
 &  &  & Midpoint ($\rho=0$) & 0.251(0.021) & 2.196(0.091) \\
\bottomrule
\end{tabular}%
}
\label{tab:simulation-1d-factorpt-rho-comparison}
\end{table}

\begin{table}[htbp]
\centering
\scriptsize
\caption{Comparison of Factor PT models with estimated spatial correlation and fixed $\rho=0$ in the three-dimensional simulation scenarios. Entries show mean (SD) over 50 replicates. Reported values are $10^2 \times$ KL divergence and $10^2 \times$ Hellinger distance; lower values are better for both metrics.}
\resizebox{\textwidth}{!}{%
\begin{tabular}{llllcc}
\toprule
 & & & & \multicolumn{2}{c}{Metric} \\
\cmidrule(lr){5-6}
Scenario & Size & Model & Method & $10^2 \times$ KL div. & $10^2 \times$ Hellinger \\
\midrule
3D axis-sep. & $n=5000$ & Factor PT & MV ($\rho$: estimated) & 0.038(0.002) & 0.937(0.030) \\
 &  &  & MV ($\rho=0$) & 0.046(0.002) & 1.024(0.025) \\
 &  &  & Midpoint ($\rho$: estimated) & \textbf{0.038(0.002)} & \textbf{0.936(0.028)} \\
 &  &  & Midpoint ($\rho=0$) & 0.045(0.002) & 1.016(0.020) \\
 & $n=1000$ & Factor PT & MV ($\rho$: estimated) & 0.155(0.010) & 1.882(0.057) \\
 &  &  & MV ($\rho=0$) & 0.207(0.011) & 2.184(0.059) \\
 &  &  & Midpoint ($\rho$: estimated) & \textbf{0.153(0.009)} & \textbf{1.875(0.054)} \\
 &  &  & Midpoint ($\rho=0$) & 0.193(0.011) & 2.101(0.057) \\
\midrule
3D latent & $n=5000$ & Factor PT & MV ($\rho$: estimated) & 0.114(0.009) & 1.642(0.071) \\
 &  &  & MV ($\rho=0$) & 0.107(0.009) & 1.597(0.072) \\
 &  &  & Midpoint ($\rho$: estimated) & 0.113(0.010) & 1.640(0.078) \\
 &  &  & Midpoint ($\rho=0$) & \textbf{0.107(0.009)} & \textbf{1.594(0.074)} \\
 & $n=1000$ & Factor PT & MV ($\rho$: estimated) & 0.534(0.055) & 3.596(0.205) \\
 &  &  & MV ($\rho=0$) & \textbf{0.432(0.056)} & \textbf{3.233(0.227)} \\
 &  &  & Midpoint ($\rho$: estimated) & 0.535(0.055) & 3.599(0.206) \\
 &  &  & Midpoint ($\rho=0$) & 0.437(0.058) & 3.253(0.235) \\
\midrule
3D cluster & $n=5000$ & Factor PT & MV ($\rho$: estimated) & \textbf{0.111(0.011)} & \textbf{1.907(0.113)} \\
 &  &  & MV ($\rho=0$) & 0.139(0.015) & 2.086(0.115) \\
 &  &  & Midpoint ($\rho$: estimated) & \textbf{0.111(0.011)} & \textbf{1.907(0.109)} \\
 &  &  & Midpoint ($\rho=0$) & 0.141(0.016) & 2.099(0.128) \\
 & $n=1000$ & Factor PT & MV ($\rho$: estimated) & \textbf{0.713(0.053)} & \textbf{4.991(0.222)} \\
 &  &  & MV ($\rho=0$) & 0.825(0.076) & 5.277(0.243) \\
 &  &  & Midpoint ($\rho$: estimated) & \textbf{0.713(0.057)} & 4.993(0.230) \\
 &  &  & Midpoint ($\rho=0$) & 0.831(0.096) & 5.294(0.291) \\
\bottomrule
\end{tabular}%
}
\label{tab:simulation-3d-factorpt-rho-comparison}
\end{table}

\subsection{Sensitivity Analysis on tree building algorithms}
\label{sec: Sensitivity Analysis on tree building algorithms}
In this analysis, we compare the following five algorithms to construct tree structures on the sample space $\Omega$.

\begin{description}
  \item[MV]
  At each node, this builder considers all admissible ordered binary splits along
  all category dimensions. The selected split is the one that maximizes the
  empirical variance, across locations, of the induced log-ratio balance. Thus,
  the tree is data-adaptive while preserving the ordering of categories.

  \item[Midpoint]
  This builder cycles through the category dimensions and splits the current
  range at its midpoint along the selected dimension. The resulting tree is
  roughly balanced in terms of the number of categories and does not use the
  observed category frequencies when choosing cut points.

  \item[Median]
  This builder also cycles through the category dimensions, but the cut point is
  chosen so that the average probability mass on the left child is as close as
  possible to one half. Ties are resolved toward the midpoint. The resulting
  tree preserves the category ordering while adapting the split locations to the
  empirical distribution.

  \item[Unbalanced]
  This builder cycles through the category dimensions and always separates the
  first remaining category level from the rest along the selected dimension.
  It therefore produces an ordered but intentionally unbalanced tree, useful as
  a contrast to more balanced constructions.

  \item[Random-midpoint]
  This builder first applies an independent random permutation to the category
  levels within each category dimension, and then applies the Midpoint rule to the permuted array. It preserves the
  product structure of the category dimensions, but removes the original
  ordering information. The random permutations are controlled by the tree seed.
\end{description}

We provide a comparison of these five algorithms, by estimating the simulation models considered in Section~\ref{sec: Numerical Experiments}.
The results based on the KL divergence, Hellinger distance, and the RMSE under two combinations of sample sizes are provided in Table~\ref{table: sensitivity (large sample)} and Table~\ref{table: sensitivity (small sample)}.
The result indicates that the tree building algorithm, and thus the fixed tree structures, do not severely influence the performance in terms of estimation accuracy. 

\begin{table}[htbp]
\centering
\scriptsize
\setlength{\tabcolsep}{3pt}
\caption{(\color{black}Table of sensitivity analysis \color{black})
Tree-sensitivity analysis for Factor PT under $K_{\max}=10$. Entries are reported as mean (SD) across 50 simulation replicates. }
\resizebox{\textwidth}{!}{%
\begin{tabular}{llcccc}
\toprule
 & & \multicolumn{4}{c}{Metric} \\
\cmidrule(lr){3-6}
Scenario & Tree Builder & PPL & KL div. & Hellinger & RMSE \\
\midrule
Two-param. ($n=1000$) & MV & 1,699(62) & 0.00061(0.00006) & 0.01075(0.00055) & 0.00473(0.00024) \\
 & Median & 1,696(57) & 0.00061(0.00006) & \textbf{0.01074(0.00053)} & 0.00471(0.00022) \\
 & Midpoint & \textbf{1,694(61)} & \textbf{0.00061(0.00006)} & 0.01076(0.00055) & \textbf{0.00470(0.00023)} \\
 & Unbalanced & 1,700(62) & 0.00062(0.00006) & 0.01074(0.00056) & 0.00471(0.00023) \\
 & Random-midpoint & 1,733(69) & 0.00072(0.00008) & 0.01173(0.00072) & 0.00501(0.00026) \\
\midrule
Mixture ($n=1000$) & MV & 2,171(64) & 0.00044(0.00005) & 0.00931(0.00056) & 0.00327(0.00020) \\
 & Median & 2,172(68) & 0.00044(0.00006) & 0.00931(0.00059) & 0.00328(0.00021) \\
 & Midpoint & \textbf{2,166(66)} & \textbf{0.00044(0.00005)} & \textbf{0.00924(0.00059)} & \textbf{0.00325(0.00021)} \\
 & Unbalanced & 2,170(66) & 0.00044(0.00005) & 0.00925(0.00058) & 0.00326(0.00021) \\
 & Random-midpoint & 2,174(69) & 0.00045(0.00006) & 0.00936(0.00063) & 0.00328(0.00023) \\
\midrule
3D axis-sep. ($n=5000$) & MV & 11,894(281) & 0.00038(0.00002) & 0.00937(0.00030) & 0.00061(0.00002) \\
 & Median & 11,894(277) & 0.00038(0.00002) & 0.00932(0.00025) & 0.00061(0.00002) \\
 & Midpoint & 11,892(279) & 0.00038(0.00002) & 0.00936(0.00028) & 0.00061(0.00002) \\
 & Unbalanced & 11,889(275) & 0.00038(0.00002) & 0.00927(0.00024) & 0.00061(0.00002) \\
 & Random-midpoint & \textbf{11,881(284)} & \textbf{0.00038(0.00003)} & \textbf{0.00925(0.00037)} & \textbf{0.00061(0.00002)} \\
\midrule
3D latent ($n=5000$) & MV & 12,211(302) & 0.00114(0.00009) & 0.01642(0.00071) & 0.00108(0.00005) \\
 & Median & 12,210(306) & 0.00113(0.00009) & 0.01636(0.00073) & 0.00108(0.00005) \\
 & Midpoint & 12,220(296) & 0.00113(0.00010) & 0.01640(0.00078) & 0.00108(0.00006) \\
 & Unbalanced & 12,178(298) & 0.00112(0.00010) & 0.01631(0.00077) & 0.00107(0.00005) \\
 & Random-midpoint & \textbf{12,163(306)} & \textbf{0.00111(0.00010)} & \textbf{0.01621(0.00080)} & \textbf{0.00106(0.00005)} \\
\midrule
3D cluster ($n=5000$) & MV & 11,477(281) & 0.00111(0.00011) & 0.01907(0.00113) & 0.00077(0.00005) \\
 & Median & 11,480(287) & 0.00111(0.00011) & 0.01911(0.00113) & 0.00074(0.00005) \\
 & Midpoint & 11,486(286) & 0.00111(0.00011) & 0.01907(0.00109) & \textbf{0.00074(0.00005)} \\
 & Unbalanced & \textbf{11,470(285)} & 0.00112(0.00011) & 0.01915(0.00113) & 0.00076(0.00005) \\
 & Random-midpoint & 11,474(280) & \textbf{0.00110(0.00011)} & \textbf{0.01904(0.00113)} & 0.00074(0.00005) \\
\bottomrule
\end{tabular}%
}
\label{table: sensitivity (large sample)}
\end{table}


\begin{table}[htbp]
\centering
\scriptsize
\setlength{\tabcolsep}{3pt}
\caption{Tree-sensitivity analysis for Factor PT under $K_{\max}=10$, \color{black}with DPM added as a benchmark.\color{black} The first five rows within each scenario correspond to the $\rho$-estimated Factor PT fits using different tree builders. Entries are reported as mean(SD) across 50 simulation replicates.}
\resizebox{\textwidth}{!}{%
\begin{tabular}{lllcccc}
\toprule
 & & & \multicolumn{4}{c}{Metric} \\
\cmidrule(lr){4-7}
Scenario & Model & Tree Builder  & PPL & KL div. & Hellinger & RMSE \\
\midrule
Two-param. ($n=1000$) & Factor PT (Proposed) & MV & 1,699(62) & 0.00061(0.00006) & 0.01075(0.00055) & 0.00473(0.00024) \\
 &  & Median & 1,696(57) & 0.00061(0.00006) & \textbf{0.01074(0.00053)} & 0.00471(0.00022) \\
 &  & Midpoint & \textbf{1,694(61)} & \textbf{0.00061(0.00006)} & 0.01076(0.00055) & \textbf{0.00470(0.00023)} \\
 &  & Unbalanced & 1,700(62) & 0.00062(0.00006) & 0.01074(0.00056) & 0.00471(0.00023) \\
 &  & Random-midpoint & 1,733(69) & 0.00072(0.00008) & 0.01173(0.00072) & 0.00501(0.00026) \\
 & DPM &  & 2,172(98) & 0.00163(0.00015) & 0.01810(0.00081) & 0.00790(0.00037) \\
\midrule
Mixture ($n=1000$) & Factor PT (Proposed) & MV & 2,171(64) & 0.00044(0.00005) & 0.00931(0.00056) & 0.00327(0.00020) \\
 &  & Median & 2,172(68) & 0.00044(0.00006) & 0.00931(0.00059) & 0.00328(0.00021) \\
 &  & Midpoint & \textbf{2,166(66)} & \textbf{0.00044(0.00005)} & \textbf{0.00924(0.00059)} & \textbf{0.00325(0.00021)} \\
 &  & Unbalanced & 2,170(66) & 0.00044(0.00005) & 0.00925(0.00058) & 0.00326(0.00021) \\
 &  & Random-midpoint & 2,174(69) & 0.00045(0.00006) & 0.00936(0.00063) & 0.00328(0.00023) \\
 & DPM &  & 2,303(74) & 0.00081(0.00011) & 0.01182(0.00080) & 0.00411(0.00028) \\
\midrule
3D axis-sep. ($n=5000$) & Factor PT (Proposed) & MV & 11,894(281) & 0.00038(0.00002) & 0.00937(0.00030) & 0.00061(0.00002) \\
 &  & Median & 11,894(277) & 0.00038(0.00002) & 0.00932(0.00025) & 0.00061(0.00002) \\
 &  & Midpoint & 11,892(279) & 0.00038(0.00002) & 0.00936(0.00028) & 0.00061(0.00002) \\
 &  & Unbalanced & 11,889(275) & 0.00038(0.00002) & 0.00927(0.00024) & 0.00061(0.00002) \\
 &  & Random-midpoint & \textbf{11,881(284)} & \textbf{0.00038(0.00003)} & \textbf{0.00925(0.00037)} & \textbf{0.00061(0.00002)} \\
 & DPM &  & 20,384(748) & 0.00462(0.00035) & 0.03201(0.00118) & 0.00214(0.00009) \\
\midrule
3D latent ($n=5000$) & Factor PT (Proposed) & MV & 12,211(302) & 0.00114(0.00009) & 0.01642(0.00071) & 0.00108(0.00005) \\
 &  & Median & 12,210(306) & 0.00113(0.00009) & 0.01636(0.00073) & 0.00108(0.00005) \\
 &  & Midpoint & 12,220(296) & 0.00113(0.00010) & 0.01640(0.00078) & 0.00108(0.00006) \\
 &  & Unbalanced & 12,178(298) & 0.00112(0.00010) & 0.01631(0.00077) & 0.00107(0.00005) \\
 &  & Random-midpoint & \textbf{12,163(306)} & \textbf{0.00111(0.00010)} & \textbf{0.01621(0.00080)} & \textbf{0.00106(0.00005)} \\
 & DPM &  & 26,340(1,894) & 0.00554(0.00028) & 0.03621(0.00090) & 0.00265(0.00013) \\
\midrule
3D cluster ($n=5000$) & Factor PT (Proposed) & MV & 11,477(281) & 0.00111(0.00011) & 0.01907(0.00113) & 0.00077(0.00005) \\
 &  & Median & 11,480(287) & 0.00111(0.00011) & 0.01911(0.00113) & 0.00074(0.00005) \\
 &  & Midpoint & 11,486(286) & 0.00111(0.00011) & 0.01907(0.00109) & 0.00074(0.00005) \\
 &  & Unbalanced & \textbf{11,470(285)} & 0.00112(0.00011) & 0.01915(0.00113) & 0.00076(0.00005) \\
 &  & Random-midpoint & 11,474(280) & 0.00110(0.00011) & 0.01904(0.00113) & 0.00074(0.00005) \\
 & DPM &  & 11,693(295) & \textbf{0.00017(0.00001)} & \textbf{0.00643(0.00025)} & \textbf{0.00037(0.00002)} \\
\bottomrule
\end{tabular}%
}
\label{table: sensitivity (small sample)}
\end{table}

\subsection{Additional results on the mixture models}
In this section, we report the average number of clusters estimated by the DPM model in the numerical experiments discussed in Section~\ref{sec: Numerical Experiments} and \ref{subsec: Evaluation with real Population data}, shown in Table~\ref{table: number of clusters (simulation)} and \ref{table: number of clusters (real data)}, respectively.
The results show that the numbers are moderate in the simulation settings but can become excessive in the real data analysis with the full sample.
A possible explanation for this result is that the distributions defined for the $M=452$ locations are diverse and do not have a clear clustering structure.

\begin{table}[htbp]
\centering
\caption{Posterior mean numbers of clusters estimated by DPM across the 50 simulation replicates. Entries are reported as mean (SD).}
\begin{tabular}{lcc}
\toprule
Scenario & Original data & Reduced to $1/5$ \\
\midrule
Two-param. & 18.7(1.3) & 8.6(0.8) \\
Mixture & 5.4(0.5) & 3.3(0.4) \\
3D axis-sep. & 25.8(2.1) & 10.4(1.0) \\
3D latent & 18.8(2.4) & 8.8(0.9) \\
3D cluster & 9.0(0.0) & 9.0(0.0) \\
\bottomrule
\end{tabular}
\label{table: number of clusters (simulation)}
\end{table}

\begin{table}[htbp]
\centering
\caption{Posterior mean numbers of clusters estimated by DPM on the nine real-data targets.}
\begin{tabular}{lcc}
\toprule
Time Point & Original data & Reduced to $1/10$ \\
\midrule
Feb 11 (Mon) 12:00 & 106.1 & 10.9 \\
Feb 17 (Sun) 16:00 & 114.6 & 12.0 \\
Feb 20 (Wed) 14:00 & 164.0 & 20.0 \\
Jul 1 (Mon) 09:00 & 137.9 & 11.0 \\
Jul 20 (Sat) 16:00 & 119.6 & 16.0 \\
Jul 25 (Thu) 23:00 & 85.8 & 9.0 \\
Dec 11 (Wed) 05:00 & 87.7 & 6.0 \\
Dec 24 (Tue) 19:00 & 129.2 & 15.7 \\
Dec 31 (Tue) 18:00 & 65.8 & 9.0 \\
\bottomrule
\end{tabular}
\label{table: number of clusters (real data)}
\end{table}

\end{document}